\documentclass[11pt,oneside]{article}

\RequirePackage{amsthm,amsmath}
\RequirePackage[colorlinks,citecolor=blue,urlcolor=blue]{hyperref}

\usepackage[natbib,authordate, backend=biber]{biblatex-chicago} 
\addbibresource{cs_abc_reference.bib}

\usepackage{amsfonts}
\usepackage{amssymb}
\usepackage{geometry}
\usepackage{graphicx}
\usepackage{color}
\usepackage{multirow}
\usepackage[verbose]{wrapfig}
\usepackage{esint}
\usepackage{eucal} 
\usepackage{bbm} 
\usepackage{fixltx2e} 
\usepackage{mathrsfs} 
\usepackage{subfig}
\usepackage{multicol}
\usepackage{tabularx}
\usepackage{booktabs}
\usepackage{caption} 
\usepackage{algorithmicx, algpseudocode, algorithm}
\usepackage{tikz} 
\usepackage{enumerate} 
\usepackage{authblk} 
\usepackage{bm}

\usepackage[page,header]{appendix} 
\usepackage{minitoc} 



\def\ftil{\widetilde{f}}

\def\bv{\pmb{v}}

\def\fL{\mathfrak{L}}


\def\tpi{\widetilde{\pi}}

\def\tPi{\widetilde{\Pi}}

\def\ttheta{\widetilde{\theta}}


\def\argmax{\mbox{argmax}}
\def\Var{\mbox{Var}}

\theoremstyle{plain}
\newtheorem{thm}{Theorem}[section]

\newtheorem{lem}{Lemma}
\newtheorem{remark}{Remark}

\newtheorem{cor}{Corollary}[section]
\theoremstyle{definition}

\pagestyle{plain}
\setlength{\oddsidemargin}{.2in}
\setlength{\evensidemargin}{0in}
\setlength{\topmargin}{-.4in}
\setlength{\textwidth}{6.3in}
\setlength{\textheight}{8.9in}
\setlength{\parskip}{0ex}

\makeatletter
\let\orgdescriptionlabel\descriptionlabel
\renewcommand*{\descriptionlabel}[1]{%
  \let\orglabel\label
  \let\label\@gobble
  \phantomsection
  \edef\@currentlabel{#1}%
  \let\label\orglabel
  \orgdescriptionlabel{#1}%
}
\makeatother

\makeatother

\makeatletter
\setlength{\@fptop}{0pt}
\makeatother

\begin{document}

\doparttoc 
\faketableofcontents 

\title{Optimal combination of composite likelihoods using approximate Bayesian computation with application to state-space models}
\author[1]{Wentao Li}
\author[2]{Rosabeth White}
\author[3]{Dennis Prangle}
\affil[1]{Department of Mathematics, University of Manchester}
\affil[2]{School of Mathematics, Statistics and Physics, Newcastle University}
\affil[3]{School of Mathematics, University of Bristol}
\date{}

\maketitle

\begin{abstract}
Composite likelihood methods can provide approximate inference when the full likelihood is intractable but marginal likelihood functions for small subsets of the data can be evaluated relatively easily. Composite likelihoods have been successfully applied in complex models such as max-stable models and state-space models. However, wider application is limited by two issues. First, the choice of composite likelihood weights can have a significant impact on the information efficiency and is currently an open question. Second, calibrated Bayesian inference with a composite likelihood requires curvature adjustment which is difficult for data without repeated measurements. This work shows that approximate Bayesian computation (ABC) can properly address these two issues by using multiple composite score functions as the summary statistic. First, it is shown that the summary-based posterior distribution gives optimal Godambe information among a wide class of estimators defined by estimating functions, including any linear combination of the component composite scores. Second, to make ABC computationally feasible for models where marginal likelihoods have no closed form, a novel approach is proposed to estimate marginal scores of all pseudo datasets using a Monte Carlo sample with size $N$. Sufficient conditions are given for the additional noise to be negligible with $N$ fixed as $n\rightarrow\infty$ where $n$ is the data size. So the computational cost is $O(n)$, much lower than the typical cost $O(n^{2})$ of pseudo-marginal methods needed to control the variance of the likelihood estimator in the Markov chain Monte Carlo (MCMC). Third, the asymptotic behaviour of ABC with summary statistics with heterogeneous convergence rates is studied, and an adaptive scheme to choose the component composite scores is proposed. This approach provides an off-line static parameter estimator for state-space models by optimally combining marginal information using model simulations. Numerical studies show that the new method significantly outperforms existing Bayesian composite likelihood methods and achieves similar statistical efficiency to particle MCMC using the full likelihood with adaptively combined composite scores.
\end{abstract}

Keywords: Approximate Bayesian computation; Composite scores; Unbiased estimating equation; Partial Information; Adaptive summary statistic.

\section{Introduction}\label{sec:intro}

In many applications the likelihood function is difficult to evaluate due to complicated data correlation or stochastic latent structure. If likelihoods for small subsets of the data can be evaluated easily, the composite likelihood method can be used to provide a useful class of approximate likelihoods by combining low-dimensional marginal likelihoods. Let $Y_{obs}$ be a dataset and $\left\{A_j, j=1, \dots, n_{c l}\right\}$ be a set of marginal or conditional events, a composite likelihood is a weighted product of valid likelihood terms
\begin{align}
c L(Y_{obs};\theta):=\prod_{j=1}^{n_{c l}} f\left(A_j ; \theta\right)^{w_j} \label{eq:composite_likelihood},
\end{align}
where $\theta$ is the unknown parameter, $f(A_j ; \theta)$ are the marginal likelihoods, also referred to as the sub-likelihoods, and $\{w_j\}_{j=1}^{n_{c l}}$ is a set of non-negative weights \citep{lindsay1988composite}. For example, the popular \textit{pairwise likelihood} is $pL(Y_{obs};\theta):=\prod_{i<j } f(y_i, y_j ; \theta)^{w_{ij}}$. Composite likelihoods have seen successful applications to max-stable models, state-space models, and generalised linear mixed models, amongst others   \citep{varin2008pairwise, varin2011overview, pauli2011bayesian, hui2018sparse}. 

The weights in a composite likelihood can have a significant impact. For clustered data, \citet{kuk2000pairwise} show that, for the pairwise likelihood, using equal weights when cluster sizes are unequal can lead to poor parameter estimates. They suggest weighting schemes where the weights depend on the cluster sizes. \citet{joe2009weighting} further suggest that weights should also depend on the degree of dependence within clusters. In time series analysis, \citet{davis2011comments} describe the $k$-th order consecutive pairwise likelihood. This is a binary weighted pairwise likelihood such that the sub-likelihoods for pairs of observations with a time lag greater than $k$ are given weight $0$. This approach is shown to improve estimation when most of the information about the parameter is contained in the close pairs because including all pairs can skew this information. Similar weighting schemes have been applied in the spatial setting to exclude sub-likelihoods for pairs with a spatial distance above some threshold \citep{bevilacqua2012estimating}. Instead of choosing a weight for each individual sub-likelihood, weights are often chosen for several candidate composite likelihoods and these components are combined to form the composite likelihood on which the parameter inference is based. For example, \citet{kenne2014combined} propose a weighted sum of the onewise likelihood, which considers only the univariate marginal densities, and the pairwise likelihood. The choice of weighting scheme is often a question for two reasons. First, whether to include onewise, pairwise, or higher-order terms depends on the correlation structure of the data, which is unknown in advance. Second, introducing more terms into the composite likelihood does not necessarily improve it and may even lead to worse performance \citep{xu2016note,davis2011comments}. This counter-intuitive behaviour makes it difficult to apply composite likelihood methods more widely. 

From the theory of unbiased estimating equations, optimal weights of composite components can be obtained by projecting the score function onto a linear manifold with dimension $p$ in the linear space of component sub-scores \citep{heyde1997quasi}. Specifically, assume $Y_{obs}$ follows a model with the score function $s(Y_{obs};\theta)$ where $\theta\in\mathcal{P}\subset\mathbb{R}^p$, and the true parameter is $\theta_0$. For a function $h(Y,\theta)$ depending on $\theta$, denote $h(Y,\theta_{0})$ by $h(Y)$ for convenience. Suppose a $d$-dimension estimating function $h(Y;\theta)$ is given, and $h(Y;\theta)$ is differential in $\theta$ and unbiased, i.e.~$E_{\theta}\{h(Y;\theta)\}=0$ $\forall\theta\in\mathcal{P}$, and $d>p$. Consider the following set of estimating functions,
\[
\mathcal{H}_h:=\{W(\theta)h(Y_{obs};\theta):W\in\mathbb{R}^{p\times d}\}.
\]
The optimal $W$ is 
\[
W^{*}_h(\theta)=E_{\theta}\{\nabla_{\theta}h(Y;\theta)\}^{T}Var\{h(Y;\theta)\}^{-1},
\]
in the sense of minimising the square loss $E_{\theta}\left[\{s(Y;\theta)-W(\theta)h(Y;\theta)\}\{s(Y;\theta)-W(\theta)h(Y;\theta)\}^T\right]$ and that the solution of $W^{*}_h(\theta)h(Y_{obs};\theta)$ achieves the maximal Godambe information matrix among all those from $\mathcal{H}_h$ \citep[Chapter 2]{heyde1997quasi}. Denote the sub-scores in $cL(Y_{obs};\theta)$ by $s(A_j;\theta):=\nabla_{\theta}\log f(A_j;\theta)$, $j=1,\dots,n_{cl}$, and the vector stacking up all sub-scores by $s(A_{1:n_{cl}};\theta)$. \citet{lindsay2011issues} note that since each sub-likelihood term is a valid likelihood, the sub-scores are unbiased estimating functions, and the combination $W_s^*(\theta)s(A_{1:n_{cl}};\theta)$ achieves higher efficiency than that of $cL(Y_{obs};\theta)$ with its optimal weights. However it is numerically challenging to calculate $W_s^*(\theta)$, since the cost of inverting the variance matrix within is $O\{(pn_{cl})^3\}$ where $n_{cl}$ is usually very large. 
	
It is more feasible to optimally combine multiple composite likelihoods than all sub-likelihoods. Consider $K$ composite log-likelihood components, denoted by $cl_{k}(Y_{obs};\theta)$, $k=1,\dots,K$, and the composite scores $\bigtriangledown_{\theta}cl_{k}(Y_{obs};\theta)$, denoted by $cs_{k}(Y_{obs};\theta)$. Let $cs_{1:K}(Y_{obs};\theta)$ be the function stacking all $cs_{k}(Y_{obs};\theta)$. Since $cs_{k}(Y_{obs};\theta)$ are unbiased estimating functions, $W^{*}_{cs}(\theta)cs_{1:K}(Y_{obs},\theta)$ achieves higher efficiency than any individual composite score. When moments in $W^{*}_{cs}(\theta)$ have no closed forms, they can be estimated using  simulated realisations from the model \citep{deng2014combined}. With this formula, \citet{kenne2014combined} propose an optimal combination of the onewise and pairwise composite likelihoods. However, there are two difficulties using the above optimal combination. First, it is computationally and numerically difficult to invert the covariance matrix in $W^{*}_{cs}(\theta)$ when $K$ is large or $cs_{1:K}(Y_{obs};\theta)$ is highly correlated among its dimensions \citep{li2018hybrid}. Second, there may not be a composite likelihood with the optimal combination as its score function, which prohibits some likelihood-based inference, such as a likelihood-ratio test or a quasi-posterior distribution where the composite likelihood substitutes the full likelihood. See \citet{lindsay2011issues} for more discussions.


A recent method to perform Bayesian inference without the composite likelihood function is to approximate the summary-based posterior distribution $\pi\{\theta\mid cs(Y_{obs};\theta_{0})\}$ using ABC methods, where $cs(Y_{obs};\theta)$ is the composite score \citep{ruli2016approximate,li2018convergence}. The unknown $\theta_{0}$ in the summary statistic can be estimated by the maximum composite likelihood estimator, denoted by $\widehat{\theta}_{cl}$. This can be justified in two ways. First, if $cs(Y_{obs};\theta_{0})$ satisfies a central limit theorem and some regularity conditions, $\pi\{\theta\mid cs(Y_{obs};\theta_{0})\}$ satisfies the Bernstein-von Mises theorem and is asymptotically calibrated, in the sense that the variance of the posterior mean is equal to the posterior variance in the first order \citep{li2018convergence}. In contrast, the quasi-posterior distribution, which replaces the full likelihood with the composite likelihood, underestimates the posterior uncertainty \citep{pauli2011bayesian}. Calibration methods that adjust the curvature of the composite score function to correct the asymptotic properties require estimation of the covariance matrix of the composite score function \citep{pauli2011bayesian}. This can be difficult for dependent data and lead to overestimated posterior uncertainty due to the uncertainty in the variance estimation \citep{ruli2016approximate, ribatet2012bayesian}. Second, $\pi\{\theta\mid cs(Y_{obs};\theta_{0})\}$ achieves the Godambe information matrix, same efficiency as the maximum composite likelihood estimator  \citep{li2018convergence}. This method is successfully applied in \citet{ruli2016approximate} to the max-stable model, and numerical studies show that it provides a better uncertainty estimator than the calibrated quasi-posterior distribution.

Contributions of this work are threefold. First, in Section \ref{sec:abc_scores} we propose to use the posterior distribution conditional on component composite scores to optimally combine multiple composite likelihoods. It shows that the optimal information matrix in $\mathcal{H}_h$ is achieved. This method avoids estimating moments of $s_{K}(Y_{obs})$, hence reducing the bias and inflation of uncertainty compared to the two-step procedure, and is more feasible when the component scores are strongly correlated. The optimal combination implies that the information efficiency always increases when more composite components are included. Second, in Section \ref{sec:adaptiveABC} a novel adaptive ABC scheme is proposed to adaptively select the composite components to be included. It balances the trade-off between the statistical information and the computational efficiency of ABC algorithms when more composite components are included. The scheme is supported by the asymptotic properties of the ABC posterior distribution when the summary statistics converge in heterogeneous rates. Third, when the composite scores are analytically unavailable due to some latent variable $X$, in Section \ref{sec:abc_est_scores} we propose to use $N$ simulations of $X$ to estimate all simulated composite scores in the ABC algorithm, which gives a noisy version of the summary statistics. Asymptotic results show the estimation error regarding $N$ is negligible compared to the uncertainty of the summary statistics regarding the data size $n$, and the rate of $N$ does not depend on $n$. This implies that the computational complexity is $O(n)$, in contrast to $O(n^{2})$ of particle MCMC, when applied to the inference of the static parameters in state-space models. 

\section{ABC combining multiple composite scores}\label{sub:abc_est_scores}

\subsection{Notations}
For a function $h(Y,\theta)$ depending on $\theta$, denote $h(Y,\theta_{0})$ by $h(Y)$ for convenience. For example, $cs_{1:K}(Y;\theta_0)$ is denoted by $cs_{1:K}(Y)$, $f(Y,X;\theta_{0})$ by $f(Y,X)$ and $f(Y\mid X,\theta_{0})$ by $f(Y\mid X)$ for some joint and conditional densities. The subscripts of expectation and variance operators, i.e.~$E(\cdot)$ and $Var(\cdot)$, indicate their probability measures to account for all randomness and are omitted for obvious cases. To avoid confusion, $\theta$ denotes the parameter with which $Y$ is generated, and $\beta$ denotes the parameter in estimating functions. For two square matrices $A$ and $B$, $A>B$ denotes that $A-B$ is positive definite. 
 
 Suppose the observed dataset $Y_{obs}$ contains $n$ observations $\{Y_{obs,1},\dots,Y_{obs,n}\}$ and follows the parametric density $f(\cdot;\theta_0)$, where $\theta_0\in\mathbb{R}^{p}$. The unknown data generating parameter value $\theta_{0}$ is of interest. Denote the prior distribution of the parameter by $\pi(\theta)$ and its support in $\mathbb{R}^{p}$ by $\mathcal{P}$. Let $\mathcal{Y}$ be the union of supports of $f(\cdot;\theta)$ over $\theta\in\mathcal{P}$. The Godambe information of an estimating equation $h(Y_{obs};\beta)=0$ is defined as 
\[
I_{G}(h):=c_{n}^2E\{\nabla_{\beta}h(Y_{obs};\beta)\}Var\{h(Y_{obs};\beta)\}^{-1}E\{\nabla_{\beta}h(Y_{obs};\beta)\}^{T}\mid_{\beta=\theta_{0}},
\]
for some scaling rate $c_{n}$. The inverse of $I_{G}(h)$ is the asymptotic covariance matrix of the solution of $h(Y_{obs};\beta)=0$ under broad conditions \citep{godambe1987quasi,lindsay1988composite}. On the other hand, for $Y\sim f(\cdot;\theta)$, if the summary statistic $h(Y)$ satisfies a central limit theorem with rate $c_{n}$, asymptotic mean $\mu_h(\theta)$ and covariance matrix $c_n^{-2}A_h(\theta)$, the asymptotic variance of the summary-based posterior density $\pi\{\theta\mid h(Y_{obs})\}$ is the inverse of 
\[
I(h):=c_n^2\nabla\mu_h(\theta_0)A_h(\theta_0)^{-1}\nabla\mu_h(\theta_0)^T,
\]
under conditions given by \citet{li2018convergence}.  

\subsection{Batching sub-likelihoods}
 
Given the composite likelihood \eqref{eq:composite_likelihood}, the vector of stacked sub-scores $s(A_{1:n_{cl}};\theta)$ has the dimension $pn_{cl}$ and is usually not suitable as the summary statistic in ABC due to its high dimensionality. In many cases the dimension will be higher than the dimension of the trivial sufficient statistic $s=Y$. Therefore, we propose the following batching method to reduce the number of component composite scores. 
	
For $Y\sim f(\cdot;\theta)$, suppose that the sub-likelihoods $f(A_j;\theta)$ in $cL(Y;\theta)$, $j=1,\dots,n_{cl}$, can be partitioned into $K\ll n_{cl}$ sets such that in the $k$th set, the marginal events $A_j$ are identically distributed with the marginal density $f_k(\cdot;\theta)$. Denote the subsets of $Y$ in the $k$th set by $Y_{kj}$, $j=1,\dots,n_k$, and let $\mathcal{Y}_k$ be the union of supports of $f_k(\cdot;\theta)$ over $\theta\in\mathcal{P}$. The batched composite log-likelihood formed by events in the $k$th set is 
$$cl_{k}(Y;\theta):=\frac{1}{n_k}\sum_{j=1}^{n_k}w_{kj}\log f_{k}(Y_{kj};\theta),$$
with scaling $n_k^{-1}$, where $w_{kj}$ are the corresponding weights in $cL(Y;\theta)$, $j=1,\dots,n_k$. Let $cs_{k}(Y;\theta)$ be the scaled composite score function $\nabla_{\theta}cl_{k}(Y;\theta)$ and $cs_{1:K}(Y;\theta)$ be the function stacking all $cs_{k}(Y;\theta)$, $k=1,\dots,K$.
	
The batching method is often used to formulate candidate composite likelihoods. For example, consider a times series $\{y_1,\dots,y_n\}$ following a stationary process. Let $\{Y_{kj}\}_{j=1}^{n-k}=\{(y_j,y_{j+k})\}_{j=1}^{n-k}$ be the set of pairs with lag $k$, $k=1,\dots,n-1$. Batching the pairwise sub-scores in the pairwise likelihood gives the scaled \textit{lag-$k$ pairwise log-likelihood},
\begin{align}
pl_k(Y;\theta):=\frac{1}{n_k}\sum_{j=1}^{n_k}\log f(y_j,y_{j+k};\theta), \label{eq:lag_k_batched} 
\end{align}
where $n_k=n-k$. The \textit{$m$th order summed pairwise log-likelihood} is defined as 
$$
pl_{{\rm equal};m}(Y;\theta):=\sum_{k=1}^m n_kpl_k(Y;\theta),
$$
which is the equally weighted sum of the lag-$1$ to lag-$m$ pairwise likelihoods. When $m=n-1$, it gives the pairwise log-likelihood $pl(Y;\theta)$. These composite likelihoods are discussed and compared for different $m$ in \citet{davis2011comments}. As another example, consider a stationary zero-mean spatial Gaussian random field $\{Z(s)\in\mathbb{R}:s\in\mathbb{R}^2\}$ where $s$ denotes the location coordinates. Let $d_1 < d_2 <\dots <d_{K}$ be the unique distances between pairs of observation sites, and let $\{Y_{kj}\}_{j=1}^{n_k}=\{(Z(s_{i_1}),Z(s_{i_2})): \|s_{i_1}-s_{i_2}\| = d_k)\}$. Then, the distance-$k$ pairwise log-likelihood $n_kcl_{k}(Y;\theta)$ is the composite log-likelihood for pairs of observations with distance $k$. The equally weighted sums for pairs below some threshold distance are considered in \citet{apanasovich2008aberrant} and \citet{bevilacqua2012estimating}.
	
In addition to reducing the dimension, another advantage of the batching method is that the convergence of the posterior distribution conditional on $cs_{1:m}(Y_{obs};\theta)$ can be based on the weak convergence of $cs_{1:m}(Y_{obs};\theta)$, shown in the next subsection, while there is no such theoretical support when using the stacked sub-scores.

\subsection{Posterior distribution conditional on multiple estimating functions} \label{sec:abc_scores}

Consider using the summary-based posterior distribution to combine $cs_{k}(Y_{obs};\theta)$, $k=1,\dots,K$, where $K$ is a fixed number and $n_k\rightarrow\infty$ as $n\rightarrow\infty$. 
More generally, consider an unbiased estimating function $h(Y,\beta)$ with dimension $d>p$. Assume the following conditions parallel to those in \citet{li2018convergence}.
\begin{description}
\item [{(C1)\label{cond:prior}}] There exists some $\delta_{0}>0$ such that $\mathcal{P}_{0}=\{\theta:|\theta-\theta_{0}|<\delta_{0}\}\subset\mathcal{P}$, $\pi(\theta)\in C^{2}(\mathcal{P}_{0})$ and $\pi(\theta_{0})>0$. 
\item [{(C2)\label{cond:sum_conv}}] There exists a $d$-dimensional vector $\mu_h(\theta,\beta)$, a $d\times d$ matrix $A_h(\theta,\beta)$ and a sequence $a_{n}$, satisfying $a_{n}\rightarrow\infty$ as $n\rightarrow\infty$,
such that for any $\theta\in\mathcal{P}_{0}$ and $\beta$ in a neighbourhood of $\theta_{0}$, 
\[
a_{n}\{h(Y;\beta)-\mu_h(\theta,\beta)\}\xrightarrow{}N\{0,A_h(\theta,\beta)\},\mbox{ as }n\rightarrow\infty,
\]
in distribution. Also assume that $h(Y_{obs})\rightarrow \mu(\theta_0)$ in probability. Furthermore, (i) $\mu_h(\theta)$ and $A_h(\theta)\in C^{1}(\mathcal{P}_{0})$, and $A_h(\theta)$ is positive definite for any $\theta\in\mathcal{P}_{0}$; (ii) $E_{\theta_0}\{\nabla_{\beta} h(Y)\}$ has full rank when $n$ is large enough.  
\item [{(C3)\label{cond:limit_moments}}] It holds that $A(\theta)=\lim_{n\rightarrow\infty}a_{n}^{2}Var_{\theta}\{h(Y)\}$ and $\mu(\theta)=\lim_{n\rightarrow\infty}E_{\theta}\{h(Y)\}$ for $\theta\in\mathcal{P}_{0}$.
\end{description}

The following theorem shows that $\pi\{\theta\mid h(Y_{obs})\}$ optimally combines information in all dimensions of $h(Y_{obs})$ in the sense of achieving the maximal Godambe information of the estimating functions in $\mathcal{H}_h$. 
\begin{thm} \label{thm:optimal_comb}
If \ref{cond:prior}-\ref{cond:limit_moments} and \citet[Conditions 4 and 5]{li2018convergence} are satisfied, the asymptotic variance of $\pi\{\theta\mid h(Y_{obs})\}$ is $a_n^{-2}\lim_{n\rightarrow\infty}a_n^2I_{G}(W^{*}_hh)^{-1}$, where $c_n=a_n^{-1}$. 
\end{thm}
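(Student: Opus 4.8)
The plan is to combine the summary-based Bernstein--von Mises theorem of \citet{li2018convergence} with an explicit identification of the limiting information matrix. Under \ref{cond:prior}--\ref{cond:limit_moments} together with \citet[Conditions 4 and 5]{li2018convergence}, that theorem applies to the summary statistic $h(Y_{obs})$ and shows that $\pi\{\theta\mid h(Y_{obs})\}$ is asymptotically Gaussian with (suitably scaled) variance equal to the inverse of $I(h)=c_n^2\nabla\mu_h(\theta_0)A_h(\theta_0)^{-1}\nabla\mu_h(\theta_0)^{T}$, where $\mu_h(\theta_0)=\mu(\theta_0)$ and $A_h(\theta_0)=A(\theta_0)$ are the limits furnished by \ref{cond:sum_conv} and \ref{cond:limit_moments}. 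It will therefore suffice to show that $I(h)$ coincides with the optimal Godambe information $I_G(W^{*}_h h)$ up to the scaling bookkeeping recorded by $c_n=a_n^{-1}$ and \ref{cond:limit_moments}, after which the asserted variance follows by inversion.

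The heart of the argument is a sensitivity identity relating the two different derivatives that appear in $I(h)$ and in $I_G(W^{*}_h h)$. The posterior information is built from $\nabla\mu_h(\theta_0)$, the gradient of the limiting mean of $h(Y)=h(Y,\theta_0)$ with respect to the data-generating parameter, whereas the Godambe information is built from $E_{\theta_0}\{\nabla_\beta h(Y;\beta)\}\mid_{\beta=\theta_0}$, the derivative with respect to the estimating-function argument. Writing $\mu_h(\theta,\beta)$ for the limiting mean of $h(Y;\beta)$ under $Y\sim f(\cdot;\theta)$ as in \ref{cond:sum_conv}, unbiasedness $E_\theta\{h(Y;\theta)\}=0$ for all $\theta\in\mathcal{P}_0$ forces $\mu_h(\theta,\theta)=0$ identically. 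Differentiating this relation along the diagonal at $\theta_0$ gives $\nabla\mu_h(\theta_0)=-D^{T}$ with $D:=E_{\theta_0}\{\nabla_\beta h(Y)\}$. I expect the real technical work to be concentrated here rather than in the subsequent algebra: one must justify interchanging the limit in $n$ with differentiation in each argument and with the expectation, which is exactly the purpose of the smoothness in \ref{cond:sum_conv}(i), the full-rank requirement in \ref{cond:sum_conv}(ii), and the moment limits in \ref{cond:limit_moments}.

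With this identity in hand, the remaining step is the projection computation of \citet[Chapter 2]{heyde1997quasi}. Writing $V:=Var\{h(Y_{obs})\}$ and substituting the optimal weight $W^{*}_h=D^{T}V^{-1}$ into the Godambe information of a combined estimating function, $I_G(Wh)=c_n^2(WD)(WVW^{T})^{-1}(WD)^{T}$, the sandwich telescopes to $I_G(W^{*}_h h)=c_n^2 D^{T}V^{-1}D$, which by the generalized Gauss--Markov inequality is the maximum over $W\in\mathbb{R}^{p\times d}$ and hence the optimal information in $\mathcal{H}_h$. Finally I substitute $\nabla\mu_h(\theta_0)=-D^{T}$ into $I(h)$ to obtain $I(h)=c_n^2 D^{T}A(\theta_0)^{-1}D$ and compare it with $I_G(W^{*}_h h)=c_n^2 D^{T}V^{-1}D$; the sole difference is the replacement of $V^{-1}$ by $A(\theta_0)^{-1}$, which by $A(\theta_0)=\lim_{n\to\infty}a_n^2 V$ from \ref{cond:limit_moments} accounts precisely for the factor of $a_n^2$ in the statement. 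Fixing the common scaling rate through $c_n=a_n^{-1}$ and inverting then reproduces the asymptotic variance $a_n^{-2}\lim_{n\to\infty}a_n^2 I_G(W^{*}_h h)^{-1}$.
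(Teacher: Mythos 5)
Your proposal is correct and follows essentially the same route as the paper: invoke the summary-based Bernstein--von Mises result of \citet{li2018convergence}, use the unbiasedness identity $\nabla\mu_h(\theta_0)=-E_{\theta_0}\{\nabla_\beta h(Y)\}^{T}$ obtained by differentiating $\mu_h(\theta,\theta)=0$ under the integral sign, and match $A_h(\theta_0)$ with $\lim_{n\rightarrow\infty}a_n^2 Var\{h(Y)\}$ via \ref{cond:limit_moments} (the sign and transpose wash out of both information matrices). The one step the paper makes explicit that you leave implicit is verifying that the cited Proposition 1 actually applies under \ref{cond:prior}--\ref{cond:limit_moments}: it additionally requires that $\mu(\theta)$ be separated from $\mu(\theta_0)=0$ for $\theta\neq\theta_0$ and that $a_n^2 I(h)$ have full rank, and the paper derives both by a short contradiction argument from exactly your sensitivity identity together with the full-rank assumption in \ref{cond:sum_conv}(ii) --- so your plan already contains the ingredient needed to close this.
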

The proof relies mainly on the unbiasedness property of the estimating function to show that, $\nabla_{\theta}E_{\theta}\{h(Y)\}=E_{\theta}\{\nabla_{\beta}h(Y)\}$ by differentiation under the expectation, the value of $\mu(\theta)$ at $\theta_{0}$ is unique and $I(s_K)$ has full rank. The composite score functions $cs_k(Y,\beta)$ are unbiased, hence Theorem \ref{thm:optimal_comb} holds if the conditions are satisfied for $cs_{1:K}(Y;\beta)$. Theorem \ref{thm:optimal_comb} says that the multiple composite scores (MCS)-based posterior distribution $\pi\{\theta| cs_{1:K}(Y_{obs})\}$ optimally combines the $K$ composite score components, and the Godambe information always increases as more composite scores are included. Some discussions of the conditions are given in the Supplement.

Figure \ref{fig:equicorr} illustrates the optimality of the MCS-based posterior distribution using a simple example. Suppose $Y_1,\dots,Y_n\in\mathbb{R}^q$ independently follow the equi-correlated Gaussian distribution $N_q(0, \sigma^2\Sigma)$ where $\Sigma$ has the off-diagonal elements $0.5$ and the diagonal elements $1$, and $\sigma^2$ is unknown. Both the onewise log-likelihood $ul(Y;\sigma^2)$ which combines likelihoods of individual observations and $pl(Y;\sigma^2)$ contain information about $\sigma^2$. \citet{kenne2014combined} proposes to combine the two by $cs_w(Y;\sigma^2)=2ps(Y;\sigma^2)-w(q-1)us(Y;\sigma^2)$, where $w$ is a constant to be chosen. Figure \ref{fig:equicorr} compares the posterior distributions based on the MCS that stacks $us(Y)$ and $ps(Y)$ and on the individual composite score (ICS) using $cs_w(Y)$. It shows that the approximation error of the ICS-based posterior distribution heavily depends on the composite weights and $w$ needs to be carefully chosen. In contrast, the MCS-based posterior distribution well approximates the full posterior distribution.

\begin{figure}[!ht]
\centering
\includegraphics[width=0.85\textwidth]{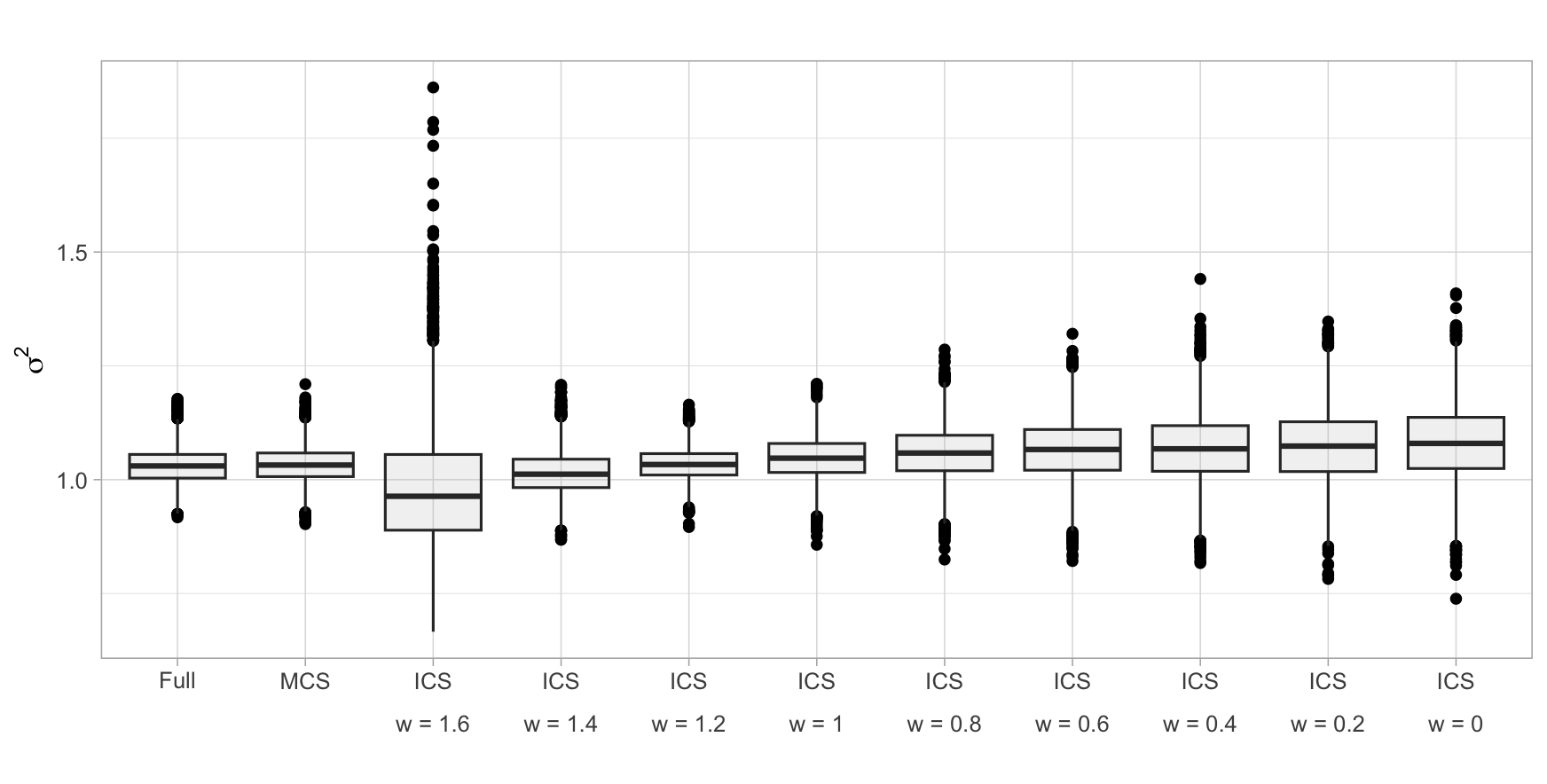}
\caption{Full posterior, MCS-based posterior based on $(us(Y),ps(Y))^T$ and ICS-based posteriors based on $cs_w(Y)$ for different $w$. The score-based posteriors are approximated using ABC. The observed dataset is from the equi-correlated normal model with $n=30$, $q=50$ and true parameter $\sigma_0=1$.}
\label{fig:equicorr}
\end{figure}

\begin{remark}
When $W^*_{cs}$ itself is of interest, e.g.~if its elements have a physical meaning or one would like to obtain the optimal composite score function,  it is possible to estimate the matrix coefficient $W^*_{cs}$ using simulations from $\pi\{\theta| cs_{1:K}(Y_{obs})\}$ by the  expansion $E\{\theta| cs_{1:K}(Y_{obs})\}\approx\theta_0+I(cs_{1:K})^{-1}W^*_{cs}cs_{1:K}(Y_{obs})$ \citep[Lemma A3]{li2018convergence}. 
One can estimate $W^*_{cs}$ by generating a joint sample of $(\theta,Y)$ from $\pi\{\theta| cs_{1:K}(Y_{obs})\}f(Y;\theta)$ and estimating the regression coefficients above. This avoids to invert the possibly ill-behaved covariance matrix $\Var\{cs_{1:K}(Y_{obs})\}$ in $W^*_{cs}$.
\end{remark}

\begin{remark}
For a $d$-dimensional summary statistic $r(Y)$ satisfying a central limit theorem and other regularity conditions in \citet[Conditions 3--5]{li2018convergence}, since $r(Y)-E\{r(Y)\}$ is an unbiased estimating function, by Theorem \ref{thm:optimal_comb} immediately we have the follows. First, the asymptotic information matrix $I(r)$ always increases as more summary statistics satisfying the conditions are included. Second, approximating the posterior distribution with $\pi\{\theta| r(Y_{obs})\}$ is first-order equivalent to approximating the score function $s(Y_{obs};\theta)$ at $\theta_0$ with the optimal linear transformation of $r(Y_{obs})-\mu_r(\theta_0)$, which is $a_n^2\nabla\mu_r(\theta_0)^TA_r(\theta_0)^{-1}\{r(Y_{obs})-\mu_r(\theta_0)\}$, in the sense of minimising the L2 distance with $s(Y_{obs})$.
\end{remark}

\subsection{ABC with estimated composite-scores}\label{sec:abc_est_scores}

It is popular to use Approximate Bayesian computation to simulate from summary-based posterior distributions; this only requires the ability to simulate from the model and evaluate the summary statistics \citep{beaumont2009adaptive}. Two standard ABC algorithms are provided in the Supplements. Evaluation of the stacked composite scores, $cs_{1:K}(Y)$, requires evaluation of the unknown parameter $\theta_{0}$ and all sub-score functions in the candidate composite scores $cs_{k}(Y;\theta)$, $k=1,\dots,K$. For models that depend on some latent vectors $X$, the sub-scores often do not have closed forms. When the number of evaluations is limited and the composite likelihood is formed from low-dimensional marginal events, numerical methods such as quadratures, Laplace approximation and finite differences can be used \citep{varin2008pairwise,ng2011composite}. However, these methods are computationally expensive and not feasible for our purposes, as ABC often requires a large number of summary statistic evaluations. Since the sub-scores in the composite scores are expectations over low and fixed dimensional marginal densities of latent states, we propose an importance sampling estimator for the composite scores, designed specifically for the ABC context.

Suppose there exists a latent random vector $X$ such that the joint density of $Y$ and $X$, $f(y, x;\theta)$, can be evaluated easily given $\theta$. Denote the marginal density of $X$ by $f_X(x;\theta)$. Since the marginal events $Y_{kj}$, $j=1,\dots,n_k$, are identically distributed, it is natural to assume that their associated marginal latent vectors are also identically distributed as follows. For each of $k=1,\dots,K$, there exists $n_k$ subsets of $X$, denoted by $\{X_{kj}\}_{j=1}^{n_k}$, such that $(X_{kj},Y_{kj})$ are identically distributed over $j$ and the density of $Y_{kj}$ conditional on $X$ is equal to that conditional on $X_{kj}$. Denote the density of $(X_{k1},Y_{k1})$ by $f_k(y,x;\theta)$ and its marginal density on $X_{k1}$ by $f_{k,X}(x;\theta)$. Let $\mathcal{X}\subset\mathbb{R}^l$ be the union of supports of $f_X(\cdot;\theta)$ over $\theta\in\mathcal{P}$ and $\mathcal{X}_k\subset\mathbb{R}^{l_k}$ be that for $f_{k,X}(\cdot;\theta)$. Then by Fisher's identity, $cs_{k}(Y)$ can be expressed as  
\begin{align}
cs_{k}(Y) & =\frac{1}{n}\sum_{j=1}^{n}E_{\theta_0}\{\bigtriangledown_{\beta}\log f_{k}(Y_{kj},X_{kj})\mid Y_{kj}\} \nonumber\\
& = \frac{1}{n}\sum_{j=1}^{n}\int\nabla_{\beta}\log f_{k}(Y_{kj},x)\frac{f_{k}(Y_{kj}\mid x)}{f_{k}(Y_{kj})}f_{k,X}(x)\,dx. \nonumber
\end{align}
The integrals above are expectations with respect to the density $f_{k,X}(x)$ which does not depend on $Y$ and $j$, which suggests using a sample of $X$ to estimate all the integrals. Therefore, for all pseudo datasets simulated in the ABC algorithm, we propose to estimate the deviation $cs_{1:K}(Y)-cs_{1:K}(Y_{obs})$ using a Monte Carlo sample simulated from some proposal distribution $q(x)$ in the space of $X$, and to estimate the unknown $\theta_0$ using an asymptotically unbiased estimator $\widehat{\theta}$. Specifically, let $\{x^{(1)},\dots,x^{(N)}\}$ be a sample simulated from $q(\cdot)$, and for each $i=1,\dots,N$ and $k=1,\dots,K$, let $\{x_{kj}^{(i)}\}_{j=1}^{n_k}$ be the identically distributed subsets of $x^{(i)}$ corresponding to $\{X_{kj}\}_{j=1}^{n_k}$ with marginal density $q_k(\cdot)$. Then $cs_{k}(Y)$ can be estimated by 
\begin{align}
& \widehat{cs}_{k}(Y;\widehat{\theta}) =\frac{1}{n}\sum_{j=1}^{n}\frac{1}{N}\sum_{i=1}^{N}\bigtriangledown_{\beta}\log f(Y_{kj},x_{kj}^{(i)};\widehat{\theta})\frac{f_{k}(Y_{kj},x_{kj}^{(i)};\widehat{\theta})}{\widehat{f}_{k}(Y_{kj};\widehat{\theta})q_k(x_{kj}^{(i)})},\label{eq:est_fisher_identity} \\
& \mbox{where }\widehat{f}_{k}(Y_{kj};\widehat{\theta})=\frac{1}{N}\sum_{i=1}^N\frac{f_{k}(Y_{kj},x_{kj}^{(i)};\widehat{\theta})}{q_k(x_{kj}^{(i)})}, \nonumber
\end{align}
and in the ABC algorithm, the deviation $cs_{1:K}(Y)-cs_{1:K}(Y_{obs})$ is replaced by $\widehat{cs}_{1:K}(Y)-\widehat{cs}_{1:K}(Y_{obs})$.

We argue that it is feasible to use $\widehat{cs}_{k}(Y;\widehat{\theta})$ in the ABC algorithm by showing that the computational cost of obtaining $\widehat{cs}_{k}(Y;\widehat{\theta})$ is linear in $n$ and thus comparable to that of simulating a pseudo-dataset, which is often linear in $n$. The cost of evaluating \eqref{eq:est_fisher_identity} is linear in $nN$, and the simulation size $N$ determines the trade-off between the approximation error of $\widehat{cs}_{1:K}(Y;\widehat{\theta})$ and the computational cost. Both $\widehat{\theta}$ and $x_1,\dots,x_N$ are obtained independently from the ABC algorithm, and their costs are considered negligible. Then the question is when $n$ increases, whether $N$ needs to increase for the approximation error to be negligible. Below it is shown that the approximation error can be negligible with $N$ fixed at a moderate value.

Since the ABC posterior probability mass concentrates in a neighbourhood of $\theta_{0}$ as $n\rightarrow\infty$ \citet[Lemma A3]{li2018convergence}, we only consider $Y$ simulated from $\theta\in\mathcal{P}_{0}$. The ABC posterior distribution depends on the data only through the deviation $\widehat{cs}_{1:K}(Y;\widehat{\theta})-\widehat{cs}_{1:K}(Y_{obs};\widehat{\theta})$. It means that the error from estimating $cs_{1:K}(Y)$ is negligible if the estimation error in $\widehat{cs}_{1:K}(Y;\widehat{\theta})-\widehat{cs}_{1:K}(Y_{obs};\widehat{\theta})$ is dominated by $cs_{1:K}(Y)-cs_{1:K}(Y_{obs})$. The estimation error has two sources, from the Monte Carlo error of using $X^{(1:N)}$ and from using $\widehat{\theta}$. The former is studied in the theorem below. Technical conditions of the theorem are listed in the Supplement.

\begin{thm} \label{thm:estimated_score}
Assume \ref{cond:prior}--\ref{cond:limit_moments} hold for $h(Y;\beta)=cs_{1:K}(Y;\beta)$, \ref{cond:differentiability}--\ref{cond:average_joint_information} in the Supplement hold and $N$ is fixed. For any $\theta\in\mathcal{P}_{0}$ and $Y\sim f(\cdot;\theta)$, with a random sample $X^{(1:N)}=\{X^{(1)},\dots,X^{(N)}\}$ from $q(\cdot)$, as $n\rightarrow\infty$, if $\widehat{\theta}\rightarrow\theta_c\in\mathcal{P}_0$ in probability, we have the following decomposition,
$$
\begin{aligned}
 & \widehat{cs}_{1:K}(Y;\widehat{\theta})-\widehat{cs}_{1:K}(Y_{obs};\widehat{\theta})-\{cs_{1:K}(Y;\widehat{\theta})-cs_{1:K}(Y_{obs};\widehat{\theta})\}\\
 & =\frac{1}{\sqrt{N}}\left\{c_{1}(X^{(1:N)},Y,Y_{obs};\widehat{\theta})+c_{2}(X^{(1:N)};\widehat{\theta})(\theta-\theta_{0})\right\},
\end{aligned}
$$
where remainders $c_{1}$ and $c_{2}$ are specified in the Supplement. It holds that $c_{1}(X^{(1:N)},Y,Y_{obs};\widehat{\theta})=O_{p}(a_n^{-1})$, $c_{2}(X^{(1:N)};\widehat{\theta})=O_{p}(1)$.
\end{thm}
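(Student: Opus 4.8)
The plan is to treat $\widehat{cs}_k(Y;\widehat\theta)$ as an average over the marginal terms $j$ of self-normalised importance sampling (SNIS) ratios and to expand each ratio by the delta method. For each $k$ and $j$ I would write $\widehat{cs}_{kj}(\beta)=\widehat A_{kj}(\beta)/\widehat B_{kj}(\beta)$, where $\widehat A_{kj}$ is the numerator average and $\widehat B_{kj}=\widehat f_k(Y_{kj};\beta)$ the denominator in \eqref{eq:est_fisher_identity}, both sample means over the $N$ draws $x_{kj}^{(1:N)}$ of summands $g_{kj}^{(i)}$ and $w_{kj}^{(i)}$ whose $q_k$-expectations are $A_{kj}$ and $B_{kj}$; the target is $cs_{kj}(\beta)=A_{kj}/B_{kj}$. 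A first-order ratio expansion gives $\widehat{cs}_{kj}-cs_{kj}=\tfrac1N\sum_i\phi_{kj}^{(i)}(\beta)+r_{kj}(\beta)$, where the influence term $\phi_{kj}^{(i)}=B_{kj}^{-1}(g_{kj}^{(i)}-A_{kj})-A_{kj}B_{kj}^{-2}(w_{kj}^{(i)}-B_{kj})$ satisfies $E_{q_k}\{\phi_{kj}^{(i)}\}=0$ for every fixed $Y_{kj}$, and $r_{kj}$ collects the quadratic remainder. This mean-zero property, inherited from the unbiasedness of the importance weights, is the structural fact the whole argument rests on.

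Next I would form the quantity in the theorem, $\{\widehat{cs}_k(Y;\widehat\theta)-cs_k(Y;\widehat\theta)\}-\{\widehat{cs}_k(Y_{obs};\widehat\theta)-cs_k(Y_{obs};\widehat\theta)\}$, and exploit that the identical Monte Carlo sample $X^{(1:N)}$ enters the terms for $Y$ and for $Y_{obs}$. Averaging the influence term over $j$, I would split it into a conditional-mean part $\tfrac1n\sum_j m_{kj}^{(i)}(\theta;\widehat\theta)$ with $m_{kj}^{(i)}=E_\theta\{\phi_{kj}^{(i)}\mid X^{(i)}\}$ and a fluctuation part $\tfrac1n\sum_j\{\phi_{kj}^{(i)}-m_{kj}^{(i)}\}$. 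The fluctuation part is a centred, windowed composite-score-type sum, so by the same central limit behaviour underlying \ref{cond:sum_conv}--\ref{cond:limit_moments} it is $O_p(a_n^{-1})$; summed over the fixed number of draws it contributes $\tfrac{1}{\sqrt N}c_1$ with $c_1=O_p(a_n^{-1})$. The conditional-mean part converges as $n\to\infty$ to a smooth limit $\bar m^{(i)}(\theta;\widehat\theta)$ that does \emph{not} vanish for a single dataset; but because the same $X^{(1:N)}$ is reused, its contributions to the $Y$- and $Y_{obs}$-terms are $\bar m^{(i)}(\theta;\widehat\theta)$ and $\bar m^{(i)}(\theta_0;\widehat\theta)$, whose difference is $\nabla_\theta\bar m^{(i)}(\theta_0;\widehat\theta)(\theta-\theta_0)$ to first order. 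This is the cancellation that turns an $O_p(N^{-1/2})$ per-dataset error into the far smaller deviation error.

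I would then set $c_2=\sqrt N\,\tfrac1N\sum_i\nabla_\theta\bar m^{(i)}(\theta_0;\widehat\theta)$ and bound it using the mean-zero property once more: since $E_{q_k}\{\phi_{kj}^{(i)}\}=0$ forces $E_q\{\bar m^{(i)}\}=0$ for every $\theta$, differentiation gives $E_q\{\nabla_\theta\bar m^{(i)}\}=0$, so $\tfrac1N\sum_i\nabla_\theta\bar m^{(i)}$ is a centred average of i.i.d.\ draws and is $O_p(N^{-1/2})$, whence $c_2=O_p(1)$. Collecting the fluctuation and conditional-mean contributions over $k=1,\dots,K$ yields the stated decomposition $\tfrac{1}{\sqrt N}\{c_1+c_2(\theta-\theta_0)\}$ with $c_1=O_p(a_n^{-1})$ and $c_2=O_p(1)$.

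The hard part will be the remainders. First, the quadratic SNIS remainder $r_{kj}$ must be shown to be uniformly negligible after averaging over $j$ and differencing $Y$ against $Y_{obs}$; this needs the supplementary moment and differentiability conditions to supply uniform integrability of the importance weights $w_{kj}^{(i)}$ and their derivatives near $\theta_c$, so that $\tfrac1n\sum_j r_{kj}$ is of smaller order than the linear term. Second, establishing the $O_p(a_n^{-1})$ rate of the fluctuation part and the convergence of $\tfrac1n\sum_j m_{kj}^{(i)}$ to a differentiable limit requires a law of large numbers and central limit theorem for the dependent, windowed sequence of marginal terms, which is exactly where the stationarity/mixing behind \ref{cond:sum_conv} and the supplementary average-information condition enter. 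Making the first-order Taylor step in $(\theta-\theta_0)$ rigorous, rather than heuristic, is the remaining technical burden, and it is controlled by the smoothness of $\bar m^{(i)}$ in $\theta$ together with the ABC concentration $\theta-\theta_0=O_p(a_n^{-1})$ from \citet[Lemma A3]{li2018convergence}.
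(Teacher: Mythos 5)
Your proposal follows essentially the same route as the paper: you split the Monte Carlo error into a doubly-centred fluctuation term (the paper's $V$, giving $c_1=O_p(a_n^{-1})$ via the variance/CLT conditions) and a conditional-mean-given-$X^{(i)}$ term (the paper's $U$, built from $v(x;\theta,\beta)=E_\theta\{u(Y_{k1},x;\beta)\}$), which is Taylor-expanded in $\theta$ about $\theta_0$ to extract the $(\theta-\theta_0)$ factor, with $c_2=O_p(1)$ following from the zero mean of these terms under $q$ --- exactly the cancellations the paper exploits. The only differences are cosmetic: you linearise the self-normalised ratio jointly by the delta method where the paper first shows $\widehat f_k/f_k=1+O_p(N^{-1/2})$ uniformly and then works with the exact-denominator estimator $\widetilde{cs}_k$, and the uniformity issues you flag as the ``hard part'' (over $\beta$ near $\theta_c$ and over the Taylor intermediate point) are precisely what the paper resolves with the Donsker-class arguments of Conditions (C7), (C9)--(C10) and \citet[Theorem 19.26]{van2000asymptotic}.
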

By \ref{cond:sum_conv}, $cs_{1:K}(Y;\widehat{\theta})-cs_{1:K}(Y_{obs};\widehat{\theta})$ can be decomposed into $\mu(\theta,\widehat{\theta})-\mu(\theta_0,\widehat{\theta})+O_p(a_n^{-1})$ which dominates the right hand side of the above decomposition. Therefore, the theorem above shows $N$ can take a moderate value and does not need to increase with $n$ for the Monte Carlo error in $\widehat{cs}_{1:K}(Y;\widehat{\theta})-\widehat{cs}_{1:K}(Y_{obs};\widehat{\theta})$ to be dominated by $cs_{1:K}(Y;\widehat{\theta})-cs_{1:K}(Y_{obs};\widehat{\theta})$. The proof is given in the Supplement. Roughly speaking, the two remainders come from decomposing $\widehat{cs}_{k}(Y;\beta)-cs_{k}(Y;\beta)$ into two empirical processes of $X^{(1:N)}$ as follows,
$$
\begin{aligned}
& \frac{1}{n_k} \sum_{j=1}^{n_k} \frac{1}{N} \sum_{i=1}^N u\left(Y_{k j}, x_{kj}^{(i)} ; \beta\right)-\frac{1}{{n_k}} \sum_{j=1}^{n_k} u\left(Y_{k j} ; \beta\right) \\
= & \frac{1}{N} \sum_{i=1}^N\left[\frac{1}{{n_k}} \sum_{j=1}^{n_k} \left\{u\left(Y_{k j}, x_{kj}^{(i)} ; \beta\right)-v\left(x_{kj}^{(i)} ; \theta, \beta\right)\right\}-\left\{\frac{1}{{n_k}} \sum_{j=1}^{n_k} u\left(Y_{k j} ; \beta\right)-E_{q_k}(v(X_{k1} ; \theta, \beta))\right\}\right] \\
+ & \frac{1}{N} \sum_{i=1}^N\left[\frac{1}{{n_k}} \sum_{j=1}^{n_k}v\left(x_{kj}^{(i)} ; \theta, \beta\right)-E_{q_k}\{v(X_{k1} ; \theta, \beta)\}\right],
\end{aligned}
$$
where $u(Y_{k j}, x_{kj}^{(i)} ; \beta)$ and $u(Y_{k j}; \beta)$ denote the individual terms in  $\widehat{cs}_{k}(Y;\beta)$ and $cs_{k}(Y;\beta)$, respectively, and $v(x ; \theta, \beta)$ is the expectation of $u(Y_{k 1}, x ; \beta)$ with respect to $Y_{k1}$. Their orders are obtained by showing the associated classes of functions are Donsker classes. Both $c_{1}$ and $c_{2}$ have zero mean under the distribution of $X_{1:N}$, and $c_{1}$ has zero mean under the distribution of $(Y,Y_{obs})$.

The error from using $\widehat{\theta}$ can also be negligible if $\widehat{\theta}$ converges to $\theta_0$ in the rate of $a_n^{-1}$ and $\nabla_{\beta}{cs}_{1:K}(Y;\beta)$ converges to a Lipschitz function in probability as $n\rightarrow\infty$. Specifically, assume $\nabla_{\beta}{cs}_{1:K}(Y;\beta)\rightarrow\mu_{\beta}(\beta,\theta)$ in probability, and consider $Y\sim f(\cdot;\theta)$. Intuitively the approximation error can be written as 
$$
\begin{aligned}
& \{cs_{1:K}(Y ; \widehat{\theta})-cs_{1:K}(Y_{obs} ; \widehat{\theta})\}-\{cs_{1:K}(Y ; \theta_0)-cs_{1:K}(Y_{obs} ; \theta_0)\} \\
= & \{\mu_{\beta}(\theta, \theta_0)-\mu_{\beta}(\theta_0, \theta_0)+o_p(1)\}(\widehat{\theta}-\theta_0),
\end{aligned}
$$
where $\dot{\theta}$ is consistent to $\theta_0$. As the tolerance of ABC goes to $0$, $cs_{1:K}(Y ; \theta_0)-cs_{1:K}(Y_{obs} ; \theta_0)$ is in the order of $O(a_n^{-1})$. If the ABC posterior distribution is consistent for $\theta_0$, as the tolerance goes to $0$, $\theta-\theta_0$, and hence $\mu_{\beta}(\theta, \theta_0)-\mu_{\beta}(\theta_0, \theta_0)$, also goes to $0$, and the approximation error is $o_p(a_n^{-1})$ and negligible. The MCLE using the equally weighted sum of the component composite log-likelihood can be used as $\widehat{\theta}$, since standard asymptotic results are available for its convergence to $\theta_0$ \citep{varin2011overview}. Therefore, the overall computational complexity is linear in $n$. The impact of $N$ illustrated in the numerical example in Section \ref{sec:ssmodel} conforms to the theorem above. 


\begin{remark}
In contrast, particle Markov chain Monte Carlo (pMCMC) evaluates the likelihood of the full data using the particle filter and has the computational complexity $O(n^2)$ \citep{kantas2015particle}.  If the number of particles $N_p$ is fixed, the computational complexity of pMCMC is $O(n)$. However, it is recommended that $N_p$ should be chosen such that the variance of the log-likelihood estimator is roughly some fixed value \citep[]{pitt2012some, sherlock2015efficiency, doucet2015efficient}. Generally the variance of the estimate of the log-likelihood increases linearly with $n$, hence $N_p$ increases linearly with $n$, and the overall computational complexity of pMCMC is $O(n^2)$. Furthermore, numerical studies in Section \ref{sec:num_studies} show that, with the optimal combination property, the statistical efficiency of the new algorithm is fairly close to pMCMC in several cases.	
\end{remark}


\subsection{Application to the state-space model}\label{sec:ssmodel}

Here the approach in Section \ref{sec:abc_est_scores} is applied to a general state space model. Assume the observed process $\left\{Y_t\right\}$ depends on a latent Markov process $\left\{X_t\right\}$. The state-space model can be described by the two probability densities: for $t \geq 1$,

$$ 
Y_t\sim f(\cdot|X_t; \theta),\quad X_t\sim g(\cdot| X_{t-1}; \theta),
$$
where $f$ is the measurement density, $g$ is the transition density and $\theta$ is the unknown $p$-dimensional static parameter to be estimated.  Assume the latent process $\{X_t\}$ is stationary. For the simplicity of illustration, assume that both $Y_t$ and $X_t$ are scalar, though the method here can be extended naturally to multivariate cases. Denote the joint density for time steps $i:j$ by $f_{i:j}$ and $g_{i:j}$. Given observations $y_{1:n}$, the full likelihood is
\begin{align*}
{L}(\theta; y_{1:n}) =  \int \dots \int g_{1:n}(x_{1:n}; \theta) f_{1:n}(y_{1:n} | x_{1:n} ; \theta) dx_1\dots dx_n.
\end{align*}
Except in some special cases, full likelihood-based inference is difficult due to the $n$-dimensional integral. Although composite likelihoods will generally not be available in analytical form, estimation of the lower dimensional sub-likelihood integrals is easier.

As an example, consider composite likelihoods formed from likelihoods of pairs at some fixed lag $k$. Here, the sub-likelihoods are 2-dimensional. Assume $g_{1:n}(\cdot; \theta)$ can be easily simulated from, and $f_{i,i+k}(y_i, y_{i+k}\mid x_i, x_{i+k};\theta)$ is easy to evaluate. In the ABC algorithm, the pairwise scores for pseudo-data simulated for any parameter value can be estimated using the same sample from the latent process. More specifically, by the stationarity of $X_t$, the sub-likelihood evaluated at $\theta_0$ of the pair $(y_{i}, y_{i+k})$ simulated for any parameter value $\theta$ can be written as
\begin{align*}
f_{i,i+k}(y_i, y_{i+k}) & = \int \int f_{i,i+k}(y_i, y_{i+k}, x_i, x_{i+k}) dx_idx_{i+k} \\
& = \int \int f_{i,i+k}(y_i, y_{i+k}\mid X_i=x_1, X_{i+k}=x_{1+k})g_{1,1+k}(x_1,x_{1+k})dx_1dx_{1+k},
\end{align*}
and estimated by
\[
\widehat{f}_{i,i+k}(y_{i},y_{i+k};\widehat{\theta})=N^{-1}\sum_{i=1}^{N}f_{i,i+k}(y_{i},y_{i+k}\mid x_{1}^{(i)},x_{1+k}^{(i)};\widehat{\theta})
\]
where $x_{1,1+k}^{(i)}\overset{i.i.d}{\sim}g_{1,1+k}(\cdot;\widehat{\theta})$, $i=1,\dots,N$. Similarly, the corresponding sub-score $\nabla \log f_{i, i+k}(y_i, y_{i+k})$ can be estimated as in \eqref{eq:est_fisher_identity} using $g_{1,1+k}(\cdot;\widehat{\theta})$ as $q(\cdot)$ and samples $\{x_{1,1+k}^{(i)}\}_{i=1,\dots,N}$ as above. The estimator uses the fact that in composite score based ABC, for all pseudo and observed datasets, all sub-scores are evaluated at the same $\beta=\widehat{\theta}$, e.g. the MCLE. Therefore a Monte Carlo sample can be drawn from $g_{1,1+k}(\cdot;\widehat{\theta})$ and reused to estimate all sub-scores.  

If the latent process is non-stationary, a similar approach could be implemented. However, the same samples could not be reused for pairs with the same time lag. Instead, we would require samples $x_{1: n}^{(j)}, j=1, \dots, N$, drawn from the latent process $f\left(x_{1: n} ; \theta\right)$, and the sub-likelihood for pair $\left(y_i, y_{i+k}\right)$ would be estimated using samples $\left\{x_i^{(j)}, x_{i+k}^{(j)}\right\}, j=1, \dots, N$. However, samples only need to be generated once, not at each ABC iteration, and the computational cost is still linear in $n$. 

We now demonstrate the impact of $N$ with a numerical example. Consider the stationary AR(1) with additive noise state-space model \citep[see, for example,][for further details]{varin2008pairwise}. The pairwise sub-likelihoods and sub-scores are analytically available for this model. Therefore, we can directly compare the ABC parameter estimates obtained using analytical and estimated summaries. Our experiments show both methods produce very similar posterior mean estimates, hence these results are omitted. Figure \ref{fig:ARscoreest} shows the relative efficiency (see Section \ref{sec:num_studies} for definition) of ABC with estimated summaries with respect to ABC with analytical summaries for a range of $n$ and $N$. For all $n$, the relative efficiency increases towards 1 with increasing $N$, but can be very low when $N$ is small. However, with a sufficiently large $N$ a fairly accurate variance estimate can be obtained. Additionally, we observe that, for a fixed $N$, the relative efficiencies are similar for all $n$, suggesting that the $N$ required to achieve a desired level of posterior variance approximation does not depend on the data size. 

\begin{figure}[!ht]
\centering
\includegraphics[width=\textwidth]{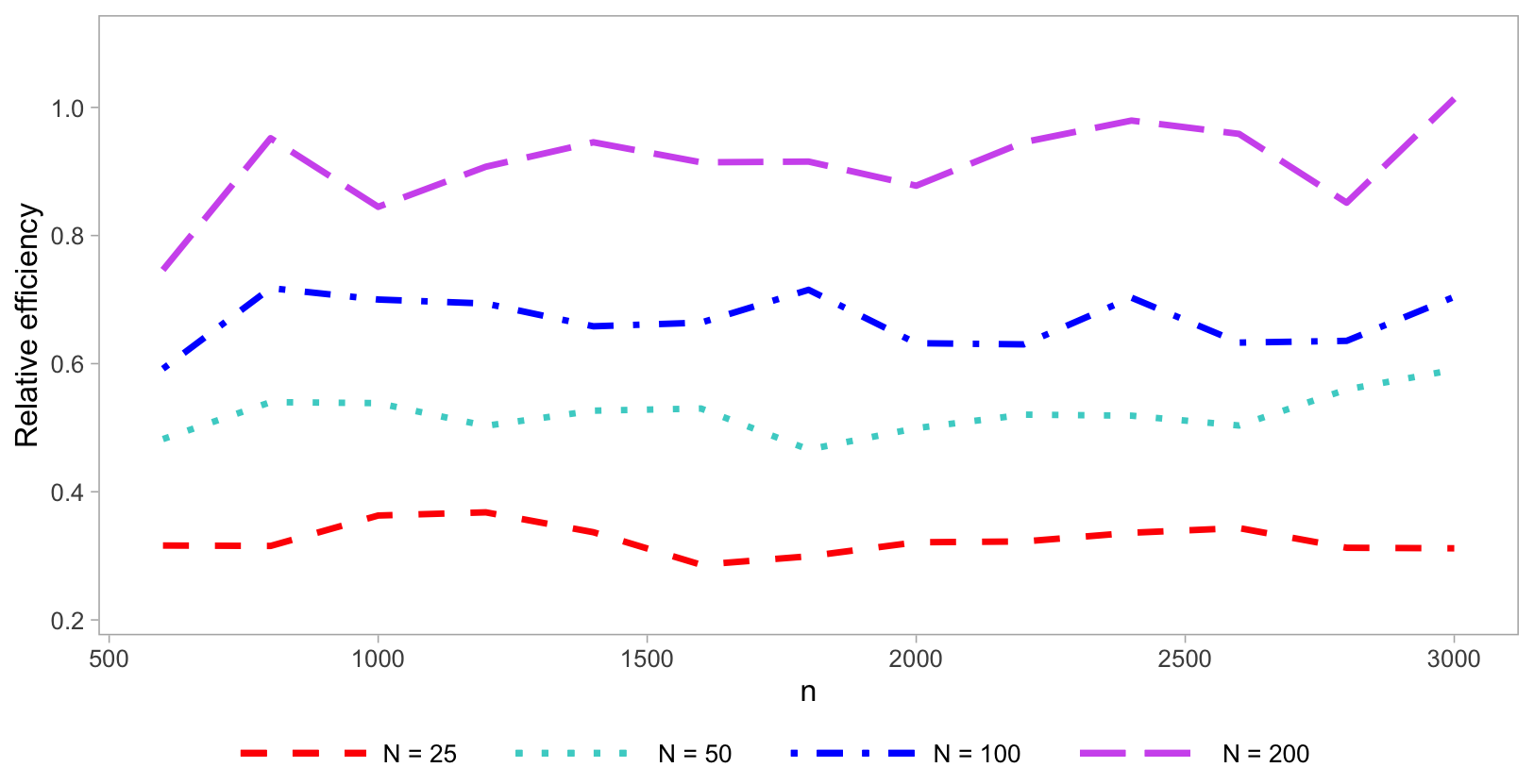}
\caption{Average relative efficiency of ABC with estimated summaries with respect to ABC with analytical summaries, for different time series lengths $n$ and numbers of Monte Carlo samples $N$. The summary statistic is the stack of the lag-$k$ pairwise scores for $k=1,2$, evaluated at the MCLE using the lag-2 summed pairwise likelihood. The results are calculated from 75 replicated datasets generated from the AR(1) with additive noise model with true parameters $\tau^2=0.3$, $\gamma=0.95$, $\sigma^2$=1 \citep[see][for model definition]{varin2008pairwise} and for each of $n = 600, 800, 1000, \dots, 3000$. }
\label{fig:ARscoreest}
\end{figure}

\section{Adaptive selection of composite scores}\label{sec:adaptiveABC}

\subsection{Motivation}

Suppose there are $\tau$ candidate component composite scores $cs_{k}(Y;\theta)$, $k=1,\dots,\tau$. If they converge in the same rate, all candidates should be included in the summary statistic by Theorem \ref{thm:optimal_comb}. The increased computational complexity from the increased dimensionality can be mitigated by efficient ABC algorithms including dimension reduction \citep{blum2013comparative} and efficient importance proposal to ensure a non-degenerate acceptance probability \citep{li2018convergence}. However, since the size of events $n_k$ in each set varies, convergence rates of $cs_k$ can be different, and Theorem \ref{thm:optimal_comb} can not be applied. For example, consider a time-series dataset with length $n$ and the lag-$k$ pairwise scores with $k<m$. The choice of $m$ needs to be determined. If $m$ is too small, we exclude informative summaries and lose efficiency. However, the lag-$m$ pairwise score has $n-m$ summation terms, and as $m$ increases, its convergence rate changes, e.g.~if $m=n/2$. As another example, for spatial datasets , the number of observations involved in the pairwise score batched by distance can be very small for large distances, and varies greatly for different distances if the observations are irregularly spaced.

\subsection{ABC posterior under summary statistic with heterogeneous rates}

Here we study the asymptotic behaviour of the ABC posterior distribution with summary statistics that converge in heterogeneous rates as $n\rightarrow\infty$. Suppose there are multiple summary statistics $\{s_{k}(Y)\}_{k=1}^{\tau}$, each of them satisfies a central limit theorem with rate $a_{kn}$, $k=1,\dots,\tau$, where $a_{kn}\rightarrow\infty$ as $n\rightarrow\infty$ and $a_{kn}$ may be different. For simplicity, assume $a_{kn}$ are in a decreasing order, i.e.~the fastest convergence rate is $a_{1n}$ and the slowest rate is $a_{\tau n}$. The asymptotic properties of the ABC posterior distribution with $a_{kn}$ being identical are studied in \citet{li2018convergence} and \citet{frazier2018asymptotic}. It is known that under certain regularity
conditions, the posterior distribution conditional on $S_{k}$ satisfies
a Bernstein-von Mises theorem with the rate $a_{kn}$. The ABC posterior
distribution can retain the convergence rate if $\varepsilon_{n}$
degenerates to $0$ fast enough, at the rate $a_{kn}^{-1}$, otherwise
its rate would be limited to the slower
rate$\varepsilon_{n}^{-1}$. The convergence properties here extend results in \citet[Proposition 1]{li2018convergence} to the scenario of heterogeneous rates.

Denote the dimensions of the summaries by $d_{s,k}$. Let $d=\sum_{k=1}^{\tau}d_{s,k}$. Let $K(x)$ be a positive kernel, where $\max_{x}K(x)=1$, and $\varepsilon>0$ is the bandwidth. Let $s_{1:\tau}(Y)$ be the stack of $s_{1}(Y),\dots,s_{\tau}(Y)$ and assume it follows a parametric density $f_n(s\mid \theta)$. Denote the random variable with density $f_n(s\mid \theta)$ by $S_{1:\tau}$. Assume a joint density, $\pi_{\varepsilon}(\theta,s\mid s_{{\rm obs}})$, for $(\theta,s)$ as
\begin{align} 
 & \frac{\pi(\theta)f_{n}(s\mid\theta)K\{\varepsilon^{-1}(s-s_{{\rm obs}})\}}{\int_{\mathbb{R}^p\times\mathbb{R}^{d}}\pi(\theta)
f_{n}(s\mid\theta)K\{\varepsilon^{-1}(s-s_{{\rm obs}})\}\,d\theta d s},\label{eq:parameter_summary_jointdensity}
\end{align}
where $s_{{\rm obs}}=s_{1:\tau}(Y_{{\rm obs}})$. The approximate Bayesian computation posterior density is the marginal density
\begin{equation} \label{eq:ABC_def}
\pi_{\varepsilon}(\theta\mid s_{{\rm obs}})=\int \pi_{\varepsilon}(\theta,s\mid s_{{\rm obs}})\,d s.
\end{equation}

Extension of relevant conditions of \citet[Proposition 1]{li2018convergence} for heterogeneous rates are listed below. Other regularity
conditions from \citet[Proposition 1]{li2018convergence} are given in the Appendix.
\begin{description}
\item[{(C2*)\label{sum_conv}}] There exists a sequence of vectors $a_{1n},\dots,a_{\tau n}$,
with scalars $a_{kn}\rightarrow\infty$ as $n\rightarrow\infty$ for
$k=1,\dots,\tau$, a $d$-dimensional vector $s(\theta)$ and a $d\times d$
matrix $A(\theta)$, such that for all $\theta\in\mathcal{P}_{0}$,
\[
a_{n,diag}\{S_{1:\tau}-s(\theta)\}\xrightarrow{\fL}N\{0,A(\theta)\},\mbox{\ensuremath{n\rightarrow\infty}},
\]
in distribution, where $a_{n,diag}=diag(a_{1n}I_{d_{s,1}},\dots,a_{\tau n}I_{d_{s,\tau}})$.
We also assume that $s_{\rm{obs}}\rightarrow s(\theta_{0})$ in
probability. Furthermore, that 

\begin{itemize}
\item[(i)] $s(\theta)$ and $A(\theta)\in C^{1}(\mathcal{P}_{0})$, and $A(\theta)$
is positive definite for any $\theta$; 
\item[(ii)] for any $\delta>0$ there exists $\delta'>0$ such that $\|s(\theta)-s(\theta_{0})\|>\delta'$
for all $\theta$ satisfying $\|\theta-\theta_{0}\|>\delta$; and 
\item[(iii)] for each $k=1,\dots,\tau$, $Ds_{k}(\theta)^{T}A_{kk}^{-1}(\theta)Ds_{k}(\theta)$
has full rank at $\theta=\theta_{0}$.
\end{itemize}
\end{description}
\begin{description}
\item[{(C3*)\label{sum_approx}}] There exists $\alpha_{n}$ satisfying
$\alpha_{n}/a_{\tau n}^{2/5}\rightarrow\infty$ and a density $r_{max}(w)$
satisfying \ref{kernel_prop} (ii)-(iv), such that $\sup_{\theta\in\mathcal{P}_{0}}\alpha_{n}|f_{W_{n}}(w|\theta)-\widetilde{f}_{W_{n}}(w|\theta)|\leq c_{3}r_{max}(w)$
for some positive constant $c_{3}$.
\end{description}
Without loss of generality, assume that $S_{k}$ are ordered with
their convergence rates strictly decreasing, i.e.~for $i>j$, $a_{in}=o(a_{jn})$.
Let 
\begin{align*}
a_{n,\varepsilon} & =\begin{cases}
a_{1n}, & \text{when }\varepsilon_{n}=O(a_{1n}^{-1}),\\
a_{kn}, & \text{when }\varepsilon_{n}=O(a_{kn}^{-1})\text{ and }\varepsilon_{n}=\omega(a_{k-1,n}^{-1})\text{ for }1<k\leq\tau,\\
\varepsilon_{n}^{-1}, & \text{ when }\varepsilon_{n}=\omega(a_{\tau n}^{-1}),
\end{cases}
\end{align*}
indicating the rate that is closest to $\varepsilon_{n}^{-1}$ among
rates slower than or equal to $\varepsilon_{n}^{-1}$ or $\varepsilon_{n}^{-1}$
itself. The result below considers the posterior distribution of approximate Bayesian computation, denoted by $\Pi_{\varepsilon}(\theta\in A\mid s_{{\rm obs}})$ for $A\in\mathscr{B}^p$,
as $n\rightarrow\infty$, and shows that it converges with the rate $a_{n,\varepsilon}$. 

\begin{thm}
\label{thm:ABC_heteo_rates}Assume Condition \ref{cond:prior},\ref{sum_conv},\ref{sum_approx} and \citet[Conditions 4 and 5]{li2018convergence}. Let $\theta_{\varepsilon}$
be the ABC posterior mean. As $n\rightarrow\infty$, if $\varepsilon_{n}=o(a_{\tau n}^{-1/2})$,
the following results hold:
\begin{description}
\item [{(a)}] $\sup_{A}\left|\Pi_{\varepsilon}\{a_{n,\varepsilon}(\theta-\theta_{0})\in A\mid s_{\text{obs}}\}-\int_{A}\psi_{\varepsilon}(t)\,dt\right|\rightarrow 0$ in probability, where $\psi_{\varepsilon}(t)$ is a probability density in $\mathcal{P}$
depending on $\varepsilon_{n}$. 
\item [{(b)}] $\lim_{n\rightarrow\infty}\int_{\|t\|>M}\psi_{\varepsilon}(t)\,dt\rightarrow0$
in probability as $M\rightarrow\infty$.  
\item [{(c)}] $\theta_{\varepsilon}-\theta_{0}=O_{p}(a_{1n}^{-1})+O_p(\alpha_n^{-1}a_{n,\varepsilon}^{-1})$.
\end{description}
 
\end{thm}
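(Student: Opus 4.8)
\emph{Strategy.} The plan is to follow the change-of-variables proof of \citet[Proposition 1]{li2018convergence}, replacing the single scale $a_{n}$ there by the block-diagonal scaling $a_{n,diag}$ in the summary space and the scalar posterior rate by $a_{n,\varepsilon}$. Writing the unnormalised ABC likelihood as $L_{\varepsilon}(\theta)=\int f_{n}(s\mid\theta)K\{\varepsilon^{-1}(s-s_{\rm obs})\}\,ds$, I would first standardise the pseudo-summary by $w=a_{n,diag}\{s-s(\theta)\}$, so that by \ref{sum_conv} the law of $W_{n}$ is close to $N\{0,A(\theta)\}$, and then replace $f_{W_{n}}$ by its Gaussian surrogate $\widetilde{f}_{W_{n}}$. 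Condition \ref{sum_approx} bounds the replacement error by $c_{3}\alpha_{n}^{-1}r_{max}(w)$ uniformly on $\mathcal{P}_{0}$, and the bandwidth restriction $\varepsilon_{n}=o(a_{\tau n}^{-1/2})$ together with $\alpha_{n}/a_{\tau n}^{2/5}\to\infty$ makes this error $o(1)$ after integration against the kernel, exactly as in the homogeneous case.

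\emph{Block decomposition (the crux).} After rescaling $t=a_{n,\varepsilon}(\theta-\theta_{0})$ and Taylor expanding $s_{k}(\theta)-s_{k}(\theta_{0})=a_{n,\varepsilon}^{-1}Ds_{k}(\theta_{0})\,t+o(a_{n,\varepsilon}^{-1})$ by \ref{sum_conv}(i), the $k$th block of the kernel argument becomes $(\varepsilon a_{n,\varepsilon})^{-1}Ds_{k}t+(\varepsilon a_{kn})^{-1}(w_{k}-w_{{\rm obs},k})$. Let $m$ be the index with $a_{n,\varepsilon}=a_{mn}$, so the blocks split according to whether $a_{kn}\le\varepsilon_{n}^{-1}$. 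For the \emph{resolved} blocks $k\ge m$ the prefactor $(\varepsilon a_{kn})^{-1}$ diverges, and on integrating the corresponding Gaussian coordinate the kernel acts as an approximate Dirac that pins $w_{k}$ to $w_{{\rm obs},k}$ shifted by the mean term, contributing a Gaussian factor in $t$ at scale $a_{kn}^{-1}$; for the \emph{fast} blocks $k<m$ that prefactor vanishes, the own-fluctuation smearing is negligible, and the block contributes a sharp kernel factor in the mean shift alone. The delicate accounting, and the main obstacle, is to show that once the correlated Gaussian $w\sim N\{0,A(\theta)\}$ is integrated out, the diverging and vanishing prefactors from the different blocks combine so that the block whose rate is closest to $\varepsilon_{n}^{-1}$ from below governs the overall scale, yielding a fixed limiting integrand $\psi_{\varepsilon}(t)$ with effective rate $a_{n,\varepsilon}=a_{mn}$; this is the step where the heterogeneity is genuinely resolved and where careful bookkeeping of the cross-block covariance $A_{kl}(\theta_{0})$ is unavoidable.

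\emph{Parts (a) and (b).} With the limiting integrand identified, I would dominate it by a fixed integrable envelope, uniformly in $t$ and in $n$, built from the positive-definite Gaussian factor in $A(\theta_{0})^{-1}$ and the kernel tail bounds of \citet[Conditions 4 and 5]{li2018convergence}. Dominated convergence then gives convergence of the rescaled posterior density on compacts, while \ref{sum_conv}(iii) guarantees that $Ds_{m}^{T}A_{mm}^{-1}Ds_{m}$ is full rank, so the normaliser is bounded away from $0$, and continuity of $\pi$ at $\theta_{0}$ from \ref{cond:prior} fixes the prior factor. The standard argument upgrading pointwise convergence of ratios plus uniform tail control to a $\sup_{A}$ statement then yields (a). Part (b) follows from the same envelope, since the Gaussian factor forces $\psi_{\varepsilon}(t)$ to decay faster than any polynomial, whence $\int_{\|t\|>M}\psi_{\varepsilon}\to0$.

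\emph{Part (c).} For the posterior mean I would decompose $\theta_{\varepsilon}-\theta_{0}$ into the location of the limiting law and the bias from the non-Gaussian remainder. The location is a generalised-least-squares point determined by all summaries; because the fastest block enters at the tightest scale, its random centring is $O_{p}(a_{1n}^{-1})$, giving the first term. The second term is the product of the $O(\alpha_{n}^{-1})$ deviation of $f_{W_{n}}$ from $\widetilde{f}_{W_{n}}$ in \ref{sum_approx} and the posterior spread $a_{n,\varepsilon}^{-1}$, and is controlled using the envelope together with the first-moment bound $\int\|t\|\,\psi_{\varepsilon}(t)\,dt<\infty$ from (b). Throughout, the recurring difficulty is the uniform-in-$\theta$ control of the Gaussian approximation simultaneously across the disparate scales $a_{1n},\dots,a_{\tau n}$, which is precisely what \ref{sum_approx} and the condition $\varepsilon_{n}=o(a_{\tau n}^{-1/2})$ are designed to deliver.
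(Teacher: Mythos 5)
Your overall route is the paper's own: extend \citet[Proposition 1]{li2018convergence} by sorting the summary blocks according to whether their rates are faster than, comparable to, or slower than $\varepsilon_n^{-1}$, pass to the Gaussian surrogate via (C3*), and read off the limit from the pivotal block. However, the step you yourself identify as the crux is left unexecuted, and it is not routine. After rescaling, the naive Bernstein--von Mises expansion of $\log\{\widetilde{f}_n(s_{\rm obs}+\varepsilon_n v\mid\theta_0+a_{1n}^{-1}t)/\widetilde{f}_n(s_{\rm obs}+\varepsilon_n v\mid\theta_0)\}$ produces the quadratic form with matrix $Ds(\theta_0)^T(a_{1n}^{-1}a_{n,diag})A(\theta_0)^{-1}(a_{1n}^{-1}a_{n,diag})Ds(\theta_0)$, which is singular in the limit because $a_{1n}^{-1}a_{kn}\to0$ for $k>1$; so "integrating out the correlated Gaussian" does not directly yield a nondegenerate limit density $\psi_\varepsilon$. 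The paper's device is to factorise the surrogate normal into the marginal of the fastest block times the conditional of the remaining blocks given it, which produces the nondegenerate $V_{\theta_0}=Ds_1(\theta_0)^TA_{11}^{-1}Ds_1(\theta_0)+\{A_{2:\tau,1}A_{11}^{-1}Ds_1(\theta_0)\}^TA_{2:\tau|1}^{-1}\{A_{2:\tau,1}A_{11}^{-1}Ds_1(\theta_0)\}$ (Lemma \ref{lem:missp_CDF_expans}), combined with a QR-based linear recombination of the slow blocks so that every scaling entering the variance converges to $0$ or $1$. Saying that "careful bookkeeping of the cross-block covariance is unavoidable" names the obstacle without supplying the mechanism.

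The more serious gap is in part (c). An integrable envelope together with a finite first moment of $\psi_\varepsilon$ can only give $\theta_\varepsilon-\theta_0=O_p(a_{n,\varepsilon}^{-1})$, the posterior's own scale; the claimed first term is the strictly faster $O_p(a_{1n}^{-1})$, which is precisely the surprising content of the result. Obtaining it requires two explicit computations that your sketch replaces with the phrase "the fastest block enters at the tightest scale": (i) the generalised-least-squares coefficient applied to $W_{\rm obs}$ must be shown to collapse onto the fastest block with an extra factor $a_{n,\varepsilon}a_{1n}^{-1}$, i.e.\ $\beta_{K_n^{-1}A}\approx a_{n,\varepsilon}a_{1n}^{-1}\beta_{A_{1;\theta_0,-1/2}Ds_1(\theta_0)}$; and (ii) the kernel-smearing contribution $\int v\,r(v;A,b_n,K_n,c)^{({\rm norm})}\,dv$ must be shown to be $O_p(a_{n,diag}^{-1}\varepsilon_n^{-1})$ rather than the naive $O_p(1)$, a cancellation that comes from the symmetry of the normal surrogate and is established in the paper via the case analysis of Lemma \ref{lem:normal_expect_bounded}. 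Without (i) and (ii) the decomposition you propose only recovers the weaker rate, so the first term of (c) is unjustified as written. The second term $O_p(\alpha_n^{-1}a_{n,\varepsilon}^{-1})$ you attribute correctly to the non-Gaussian deviation times the posterior spread.
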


The proof and the explicit form of $\psi_{\varepsilon}(t)$ are given in Appendix B.2. The expression of the leading term of $\psi_{\varepsilon}(t)$ is tedious as it depends on the projections of the summary statistics onto multiple linear sub-spaces grouped by different rates. The proof extends the proof of \citet[Proposition 1]{li2018convergence} to accommodate the heterogeneous rates by dividing the summary statistics into three groups according to their rates (faster, equal or slower than $\varepsilon_n^{-1}$) and handling each group separately. The main technical challenge is to find different upper bounds and linear projections onto the $p$-dimension space of $\theta$ for each group. 

With the ABC approximation, some summary statistics with faster rates are dominated by the bandwidth and some are not. The theorem shows that the convergence rate of the ABC posterior distribution depends on the tolerance level $\varepsilon_n$ in such a way that it is equal to the fastest rate among those of the summary statistics that are not dominated by the bandwidth. In particular, the summary-based posterior distribution, which is a special case of the ABC posterior distribution when $\varepsilon_n=0$, converges with the fastest rate $a_{1n}$. So the ABC posterior distribution loses the efficiency when $\varepsilon_n$ is sufficiently big that the approximation dominates the more efficient part of the summary statistics. This result suggests to only choose the composite scores that have the highest convergence rates so that the bandwidth does not dominate their efficiency, and include all such composite scores. 

On the other hand, the ABC posterior mean is consistent to $\theta_0$ and converges faster than the ABC posterior distribution. Similar to the result for homogeneous rates in \citet{li2018convergence}, it is much easier for the ABC posterior mean to converge than the ABC posterior variance.  Let $\widetilde{\pi}_{\varepsilon}(\theta\mid s_{obs})$ be the normal counterpart of $\pi_{\varepsilon}(\theta\mid s_{obs})$ where the likelihood of $s$ is replaced by its normal limit density, and $\widetilde{\theta_{\varepsilon}}$ be its mean. In the expansion of $\theta_{\varepsilon}-\theta_{0}$, the term $O_p(a_{1n}^{-1})$ corresponds to $\widetilde{\theta_{\varepsilon}}-\theta_{0}$. It is surprising that under the normal density, its rate does not depend on the bandwidth and always has the fastest rate. This is due to the symmetry of normal density in all directions leading to cancellation of terms when being averaged. The second term comes from the difference of $\widetilde{\pi}_{\varepsilon}(\theta\mid s_{obs})$ and $\pi_{\varepsilon}(\theta\mid s_{obs})$ and is much faster than the $a_{n,\varepsilon}^{-1}$ rate of the ABC posterior variance. Compared to the results obtained under conditions in \citet{frazier2018asymptotic} \citep{lawless2023asymptotics}, by assuming asymptotic normality of the composite scores which is typically satisfied in scenarios where composite likelihood is used, Theorem \ref{thm:ABC_heteo_rates} gives sharper results for the convergence rate and the limit distribution of ABC posterior distribution, and reveals the faster convergence of the ABC posterior mean. 

The theorem above also implies that when some summary statistics have slower rates than others, using the bandwidth obtained by fixing the acceptance probability at a non-degenerate value will cause loss of efficiency such that the ABC posterior distribution converges in the slowest rate $a_{\tau n}$ because such bandwidth is in the order $O(a_{\tau n}^{-1})$. Specifically, following arguments in \citet{li2018convergence}, for the summary statistic $s$ sampled with parameter value $\theta$, the acceptance probability depends on 
\begin{align*}
	\frac{s_{}-s_{{\rm obs}}}{\varepsilon_{n}} & =\frac{1}{\varepsilon_{n}}[\{s_{}-s(\theta)\}+\{s(\theta)-s(\theta_{0})\}+\{s(\theta_{0})-s_{{\rm obs}}\}].
\end{align*}
 For the acceptance probability to not go to $0$, $\varepsilon_{n}$ needs to decrease with a rate not faster than that of the distance between $s$ and $s_{\rm obs}$. Since the distance is determined by the largest distance among the dimensions, $\varepsilon_{n}$ needs to be $O(a_{\tau n}^{-1})$. Therefore, among the summary statistics, except those with the slowest convergence rate, all others are dominated by the bandwidth. The bandwidth might be relaxed to be $O(a_{\tau n}^{-3/5})$ if the regression adjustment is used \citep{li2018convergence}, but still dominates summaries that converge with rates faster than $O(a_{\tau n}^{-3/5})$.


\subsection{Adaptively chosen component composite scores}

For a set of summary statistics, Theorem \ref{thm:ABC_heteo_rates} suggests to include only the subset that has the fastest convergence rate. Theorem \ref{thm:optimal_comb} suggests to include as many summaries in the subset until the gain in efficiency of including more is too small and can not offset the additional ABC approximation error due to the increased dimension. Therefore, if such a subset can be identified, or the summary statistics can be sorted in the order of decreasing rates, the summary statistics can be added sequentially until the approximate posterior distribution does not improve. This is not straightforward when selecting summary statistics in general. However, the batched composite scores often have a natural order that is related to the order of convergence rates. For example, given a time series dataset $y_{1:n}$, the lag-$k$ pairwise scores $pl_k$ with increasing $k$ are ordered by the corresponding time lag between pairs. In \citet{davis2011comments} and \citet{ng2011composite}, when $k$ is a fixed number, it is shown that the lag-$k$ pairwise score with $k$ fixed has the same convergence rate $\sqrt{n}$. Furthermore, it is plausible to assume that if the inclusion of pairs with lag $k$ results in negligible improvement in the quality of the posterior inference, then the contribution of pairs with lag greater than $k$ will also be negligible. Therefore the lag-batched pairwise scores with increasing lags are ordered with decreasing rates. 


Suppose there is a sequence of summary statistics $z_1(Y),z_2(Y),\dots,z_K(Y)$ where $z_{k}$ might be a subset of $z_{k+1}$ and the summaries can have heterogeneous convergence rates. Based on the above discussion, we propose to choose $z_m$ by monitoring the turning point $m$ of the sequence of corresponding ABC posterior distributions at which the ABC posterior mean converges and the variance of the ABC posterior does not decrease and instead increases. One naive way is to conduct multiple runs of standard ABC algorithm separately at a grid of $\{1,\dots,K\}$ and to locate the turning points. However, this may not be computationally practical as each ABC run requires manual tuning and parallel runs are memory intensive. We propose to use a single run of the ABC-population Monte Carlo (PMC) scheme with the value of $k$ sequentially increased to find the turning point. Let $q(\theta;\mu,\nu)$ be the probability density of a location-scale family with location $\mu$ and scale $\nu$. A generic summary statistic adaption scheme is given below.

\begin{center}
\begin{algorithm}
\caption{\textbf{Summary statistics adaptation scheme}}
\small{
{At the $k$th step, suppose a sample from the ABC posterior $\pi_{\varepsilon_k}\{\theta\mid z_k(Y_{obs})\}$ is available. Denote the empirical posterior mean and variance matrix by $\widehat{E}_k$ and $\widehat{V}_k$.} 
{\begin{enumerate}
	\item Use an ABC-PMC algorithm starting from the proposal $q(\theta;\widehat{E}_k, \widehat{V}_k)$ to sample from $\pi_{\varepsilon_{k+1}}\{\theta\mid z_{k+1}(Y_{obs})\}$.
	\item Calculate the empirical mean and variance matrix of the sample from step 1 denoted by $\widehat{E}_{k+1}$ and $\widehat{V}_{k+1}$.
	\item If $\widehat{E}_{k+1}\neq \widehat{E}_{k}$, or $\widehat{E}_{k+1}=\widehat{E}_{k}$ and $\widehat{V}_{k+1}\leq\widehat{V}_{k}$, let $k\leftarrow k+1$. Otherwise terminate.   
\end{enumerate}
}}
\label{algo:adaptation}
\end{algorithm}
\end{center}

Let $a_{n,{\rm min}}^{(k)}$ be the slowest convergence rate of those of the dimensions of $z_k$ and $q$ be the smallest index of $\{k:a_{n,{\rm min}}^{(k)}\neq a_{n,{\rm min}}^{(k+1)}\}$. By listing $z_k$ in the order where $Var\{\theta| z_k(Y_{obs})\}$ is non-increasing in $k$ for $k\leq q$, with a fixed acceptance rate, the scheme will terminate at $k=q$ or $k<q$ when the reduction of the summary-based posterior variance can not offset the increase of Monte Carlo variation. For example, with $pl_k$ as the composite components, let $z_k=ps_{1:k}$; with the k-consecutive log-likelihood $tl_k$ which combines the sub-likelihoods of tuples with size $k$, let $z_k=ts_k$. In the scheme $\widehat{E}_k=\widehat{E}_{k+1}$ is checked before comparing the two ABC posterior variance matrices. This is because the ABC posterior mean converges with the bandwidth larger than that required for the ABC posterior variance to converge, as shown in Theorem \ref{thm:ABC_heteo_rates}. Hence if the ABC posterior mean changes significantly when $z_k$ is changed to $z_{k+1}$, it indicates that $z_{k+1}$ is much more informative or $\varepsilon_n$ is not small enough, and more adaption is needed.

Another benefit of Algorithm \ref{algo:adaptation} is that if $z_k$ is computationally more costly as $k$ increases, it sequentially increases the cost of computing the summary throughout the algorithm by the adaptive importance sampling. Even if a suitable summary $z_K(Y)$ was known prior to running ABC, designing a sequence of increasingly more informative and more costly summary statistics with $z_K(Y)$ as the last one, Algorithm \ref{algo:adaptation} can reduce the overall computing time by starting the algorithm with the cheapest summary. This is illustrated by the numerical study in Section \ref{sec:SV}. 

The specifics of the implementation of the scheme in our numerical studies are given below. First, the ABC-PMC algorithm in \citet{lenormand2013adaptive} is used, the description of which is given in the Supplements. The stopping rule in step 3 of the scheme is used at the end of each PMC cycle. It compares estimates between the current cycle and the previous cycle to determine which summary is used in the next cycle. In particular, if the mean changes or variance decreases, the same summary used in the current cycle is used in the next cycle; if the mean does not change and the variance does not decrease, the summary is adapted for the next cycle; if the summary is adapted in two consecutive cycles, the algorithm terminates. Second, statistical tests are used to assess the equivalence for the empirical means and variances of the ABC samples in step 3. In our numerical studies we use bootstrap hypothesis tests (see, for example, \citet{davidson1997bootstrap}, \citet{boos1989bootstrap}). The choice of these criteria is outside the scope of this work. More discussions on tuning the significance levels of the tests can be seen in \citet{rosabeththesis}. Third, the regression-adjusted mean and variance are used within the adaptation rule due to its significant improvement on the variance estimators \citep{li2018convergence}.

\section{Numerical studies}\label{sec:num_studies}

Given a composite likelihood $cL(Y_{obs};\theta)$, suppose a set of candidate composite log-likelihoods $\{n_kcl_k(Y_{obs};\theta)\}_{k=1}^{K_n}$ is obtained using the batching method. Let $$cl_{{\rm equal};d}(Y;\theta)=\sum_{k=1}^d n_kcl_k(Y_{obs};\theta)$$ be the equally weighted sum of the composite log-likelihood components up to $k=d$, and $cs_{{\rm equal};d}(Y;\theta)$ be its score function. Let $cs_{1:d}(Y;\theta)$ be the stack of the scores of the composite log-likelihood components up to $k=d$.  Let $cL_{{\rm adj};d}(Y_{obs};\theta)$ be $cL_{{\rm equal};d}(Y;\theta)$ adjusted using the method in \citet{pauli2011bayesian}, given in Supplement. Let $\widehat{\theta}$ be the MCLE of $cs_{{\rm equal};d}(Y;\theta)$ with some value of $d$. In this section the following approximate posterior distributions will be compared with the full posterior distribution: 
\begin{description}\addtolength{\itemsep}{-0.5\baselineskip}
\item[Composite posterior:] The quasi-posterior density proportional to $\pi(\theta)cL_{{\rm equal};d}(Y_{obs};\theta)$.
\item[Calibrated composite posterior:] The quasi-posterior density proportional to $\pi(\theta)cL_{{\rm adj};d}(Y_{obs};\theta)$.
\item[ICS-ABC:] Individual composite score ABC, the ABC posterior distribution conditional on $cs_{{\rm equal};d}(Y_{obs};\widehat{\theta})$.
\item[MCS-ABC:] Multiple composite scores ABC, the ABC posterior distribution conditional on $cs_{1:d}(Y_{obs};\widehat{\theta})$.
\item[Adaptive MCS-ABC:] The value of $d$ of MCS-ABC is adaptively selected using Algorithm \ref{algo:adaptation}.
\end{description}

The full posterior distributions are obtained using either MCMC or pMCMC \citep{andrieu2010particle}. The number of particles in pMCMC is chosen such that the standard deviation of the estimate of the log-likelihood evaluated at an estimate of the posterior mean is approximately $1.5$ \citep{pitt2012some,doucet2015efficient,sherlock2015efficiency}. The ABC posterior distributions are obtained using the adaptive ABC-PMC algorithm of \citet{lenormand2013adaptive} followed by the linear regression adjustment \citep{beaumont2002approximate}. Note that both $cs_{\rm equal;d}(Y;\widehat{\theta})$ and $cs_{1:d}(Y;\widehat{\theta})$ are constructed from the same set of sub-scores, and the computational costs  of evaluating both are the same. Further note that the two methods are equivalent with $d=1$.

When comparing these methods, the following metrics are used to measure the quality of the posterior distribution approximation: 
the root mean square error (RMSE) of the approximate posterior point estimators $\widetilde{\theta}$ over $M$ replications, defined as 
$$
{\rm RMSE}(\widetilde{\theta})=\sqrt{\frac{1}{M}\sum_{i=1}^M (\widetilde{\theta}_i-\theta_{\rm post})},
$$ 
where $\theta_{\rm post}$ is the full posterior mean; the log generalised variance ($\log GV$), defined as $\log\det(\widehat{\Sigma})$ where $\widehat{\Sigma}$ is the approximate posterior variance matrix; and the average relative efficiency (RE) of the approximate posterior distribution with respect to the full posterior distribution over $M$ replications, defined as $$
{\rm RE}=\frac{1}{M}\sum_{i=1}^M\frac{\det{\Sigma_{\rm post}}}{\det{\widehat{\Sigma}}},
$$
where $\Sigma_{\rm post}$ is the full posterior variance matrix. The relative efficiency reports the gained proportion of the full efficiency when using the approximation, and is better when it is closer to $1$. The ratio between the REs of two approximations reports the efficiency improvement of one over the other. 

	

\subsection{Gaussian random field for spatial data}

 Let $\{Z({s}), {s} \in \mathbb{R}^2\}$ be a stationary zero-mean spatial Gaussian random field where ${s}$ denotes location coordinates. Given $n$ observations, $\left(Z\left({s}_1\right), Z\left(s_2\right), \dots, Z\left(s_n\right)\right) \sim$ $N(0, \Sigma)$ where $\Sigma_{i j}=\operatorname{Cov}\left(Z\left(s_i\right), Z\left(s_j\right)\right)$. We assume an exponential covariance function $\mathrm{Cov}(Z(s_i), Z(s_j))= \sigma^2 \exp( - {\| s_i- s_j\|}/{\phi})$ with known variance $\sigma^2=5$ and unknown scalar range $\phi>0$. Let $v_{ij}=Z(s_i)-Z(s_j)$, $i,j\in\{1,\dots,n\}$, be the pairwise differences. Batching the pairwise log-likelihood of all differences by the Euclidean distance between locations gives the candidate composite log-likelihoods  
 \begin{align}\label{eq:plik}
cl_k(Y; \phi) &=  \frac{1}{n_k}\sum_{||s_i- s_j||=d_k} \log f(v_{ij};\phi), \qquad  k = 1, \dots, K_n, 
  \end{align}
where $K_n$ is the number of unique distances between all pairs of observations, $\{d_k\}$ are the unique distances, and $n_k$ is the number of pairs with distance $d_k$. Let $cL_{{\rm equal};d}(Y;\theta)$ denote the sum of the composite components corresponding to pairs with distances $||s_i- s_j||< d$ for some distance $d$. Equivalently, this corresponds to the binary weighted composite likelihood in which a weight of zero is assigned to pairs of observations if the distance between their corresponding observation sites is greater than or equal to $d$. We use $d=\infty$ to indicate that all sites are included. \citet{bevilacqua2012estimating} assess the efficiency of these weighted composite likelihoods in the MCLE setting.

The observed spatial process is generated on the $n=64$ points on the regular $8\times 8$ grid $\{0,1,\dots, 7\}^2$ with the true parameters $\phi_0=10$. We assign a Uniform $U(0,100)$ prior distribution to $\phi$.  Figure \ref{fig:spatial} shows boxplots of the sampled posteriors. 

\begin{figure}[!ht]
\centering
\includegraphics[width=\textwidth]{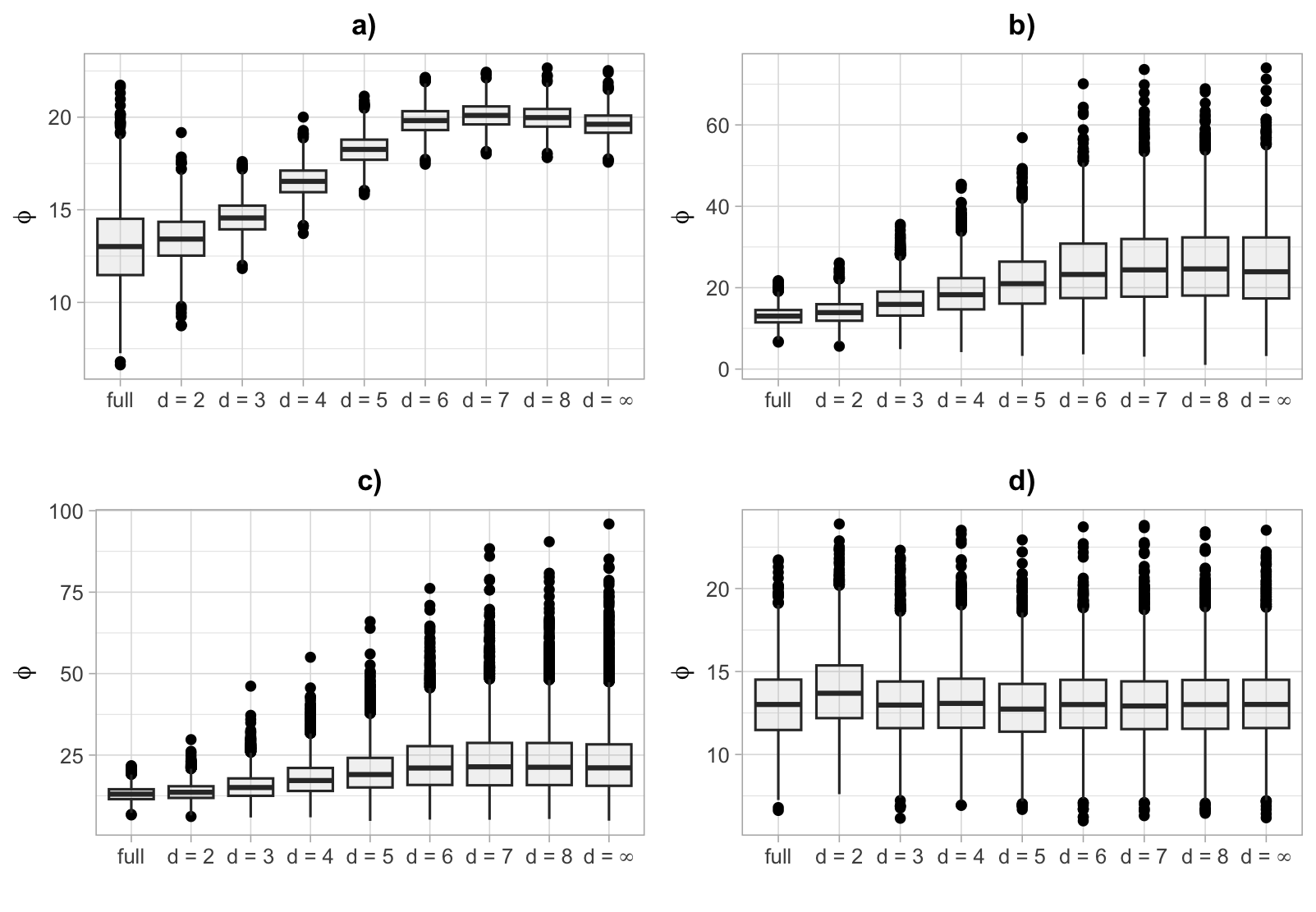}
\caption{Full posterior and approximate posteriors using composite likelihood methods with different $d$. a) Composite posteriors; b) Calibrated composite posteriors; c) ICS-ABC; d) MCS-ABC. }
\label{fig:spatial}
\end{figure}

For all choices of $d$, the (non-calibrated) composite posteriors are over-concentrated (with respect to the full posterior), whereas the calibrated composite posteriors are overly dispersed. Similarly, the ICS-ABC posteriors produce over-inflated estimates of posterior variance. In all three cases, the level of over or under concentration decreases and the posterior median estimates generally improve as $d$ is decreased.  In contrast, MCS-ABC produces a good approximation to the full posterior with any choice of $d$, with a slight improvement when $d$ is increased from $d=2$ to $d=3$, and very similar results for all $d\geq 3$. 

Figure \ref{fig:spatial_rep} and Table \ref{tab:spatial_rep} report the performance over 50 replications for each method at their best performing weighting scheme, which is $d=3$ for MCS-ABC and $d=2$ for all other methods. MCS-ABC obtains a significantly lower RMSE estimating the posterior mean compared with the other methods. Calibrated composite posterior and ICS-ABC both overestimate posterior variance (relative efficiency below $1$), and composite posterior significantly underestimates posterior variance (relative efficiency above $1$). In contrast, MCS-ABC obtains efficiency very close to the full efficiency.

\begin{figure}[!ht]
\centering
\includegraphics[width=\textwidth]{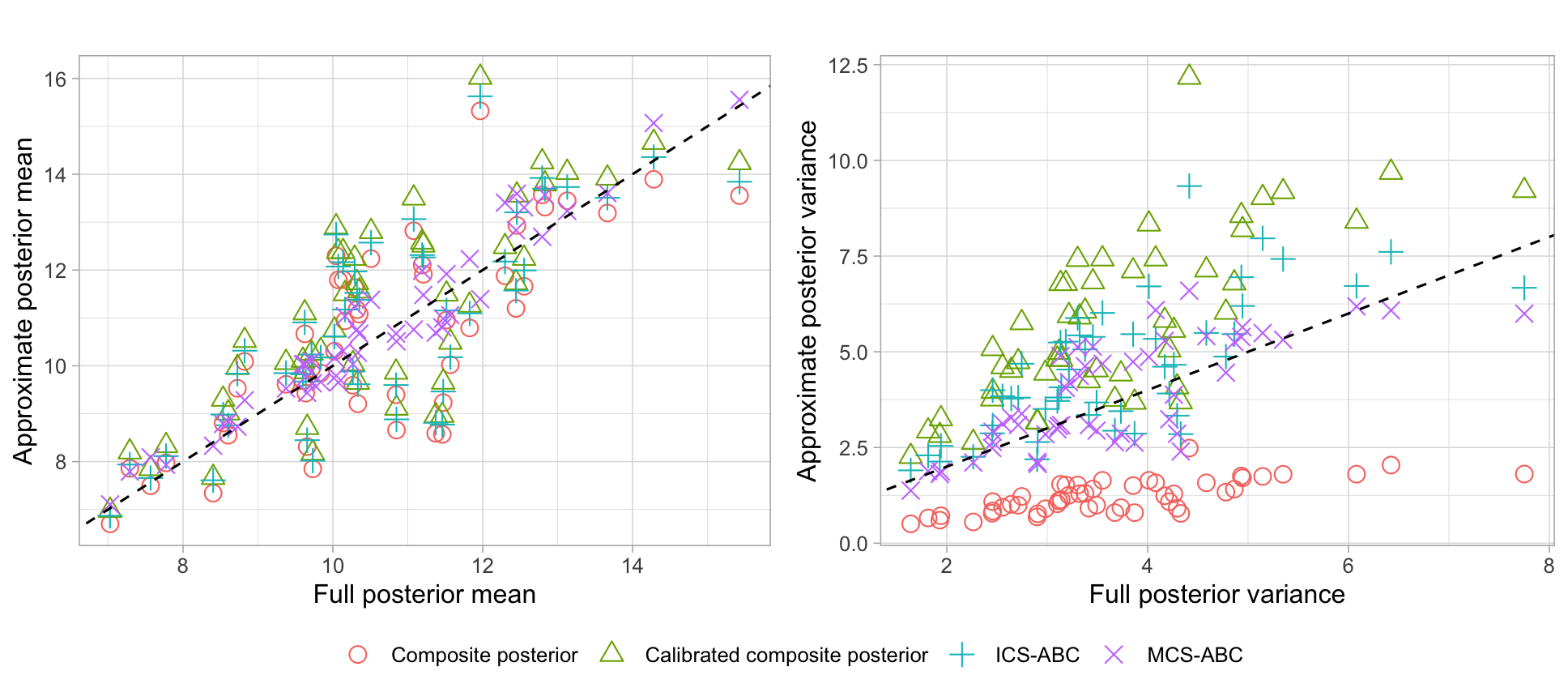}
\caption{Posterior means (left) and posterior variances (right) obtained from 50 replications. The dashed line shows the line $y=x$. }
\label{fig:spatial_rep}
\end{figure}

\begin{table}[!ht]
\[
\begin{array}{lcc}
\hline & \text { RMSE posterior mean } & \text { Average relative efficiency } \\
\hline \text { Composite posterior } & 1.302 & 3.129 \\
\text { Calibrated composite posterior } & 1.427 & 0.681 \\
\text { ICS-ABC } & 1.347 & 0.847 \\
\text { MCS-ABC } & \textbf{0.490} & \textbf{0.988} \\
\hline
\end{array}
\]
\caption{RMSEs of the approximate posterior mean estimates with respect to the full posterior mean, and the average relative efficiencies with respect to the full posterior variance. Results are obtained from 50 replications. The smallest RMSE and relative efficiency closest to 1 are highlighted in bold.  }
\label{tab:spatial_rep}
\end{table}

Our results show two benefits of the MCS-ABC method. First, including distant pairs does not negatively impact the quality of the inference. For $cL_{{\rm equal};d}(Y;\theta)$, as noted by \citet{bevilacqua2012estimating} and observed in our results for the three methods using it, the close pairs of observations contain more information about $\phi$. As $d$ is increased, more pairs of observations with greater distance are inextricably linked with and skew the information contained in the nearby pairs, resulting in poorer estimation. This issue is not observed with MCS-ABC. Second, MCS-ABC is able to make use of additional information to achieve an improved quality of inference. \citet{bevilacqua2012estimating} suggest choosing the $d$ that maximises the trace of the Godambe information matrix. For this example, this is $d=2$, providing the best estimation with the composite posterior methods and ICS-ABC. However, results for MCS-ABC suggest that, under the optimal combination, the pairs with distance $2 \leq ||s_i- s_j||< 3$ provide some additional information. Only the MCS-ABC method is able to make use of this information.

\subsection{ARFIMA(0,r,0) model}\label{sec:ARFIMA}

Suppose $Y_{obs}$ follows the ARFIMA($0,r,0$) model, defined by the equation
 \begin{align*}
(1 - B)^r Y_t = \epsilon_t, \qquad \epsilon_t \overset{i.i.d}{\sim} N(0, 1),
\end{align*}
where $(1-B)^r$ is the fractional differencing operator and $r\in(0,0.5)$ is the fractionally integrated component to be estimated. The process is stationary and exhibits long-range positive dependence with larger $r$ indicating longer-range dependence. 

Consider the set of lag-$k$ pairwise log-likelihoods $pl_k(Y_{obs};\theta)$ defined in \eqref{eq:lag_k_batched}. With the $d$th-order summed pairwise likelihood $pl_{{\rm equal};d}(Y;\theta)$ for some fixed $d$, \citet{davis2011comments} show that for $r\in (0,0.2)$, the MCLE is fairly efficient with the asymptotic variance similar to that of the MLE. However, there are two issues when $r\in[0.2,0.5)$. First, the convergence rate of the MCLE is slower than $\sqrt{n}$ when $r\in[0.25,0.5)$, hence the asymptotic efficiency of the MCLE relative to the MLE is 0 for $r\geq0.25$. Second, including the lag-$k$ pairwise likelihood with $k>1$ does not improve efficiency when $r\in[0.2,0.25)$, although, intuitively, pairs with a larger lag should be beneficial when $r$ is larger to account for the longer range of dependency. Despite these issues, numerical studies in \citet{davis2011comments} show that the $1$st-order summed pairwise likelihood can perform reasonably well for finite samples, even for $r\geq 0.25$.

Here we compare the performance of ICS-ABC using $ps_{\rm equal;d}(Y;\widehat{\theta})$ and MCS-ABC using $ps_{1:d}(Y;\widehat{\theta})$ as the summary statistics with different values of $d$, where $\widehat{\theta}$ is the MCLE of $r$ using $ps_{\rm equal;1}(Y_{obs};\theta)$. A Uniform $U(0,0.5)$ prior distribution is assigned to $r$. The results are reported in Figure \ref{fig:ARFIMA}. All methods produce fairly similar posterior mean estimates, hence these results are omitted. However, as shown in Figure \ref{fig:ARFIMA}, the choice of ABC summary has considerable effect on the posterior variance estimates. 

\begin{figure}[!ht]
\centering
\includegraphics[width=\textwidth]{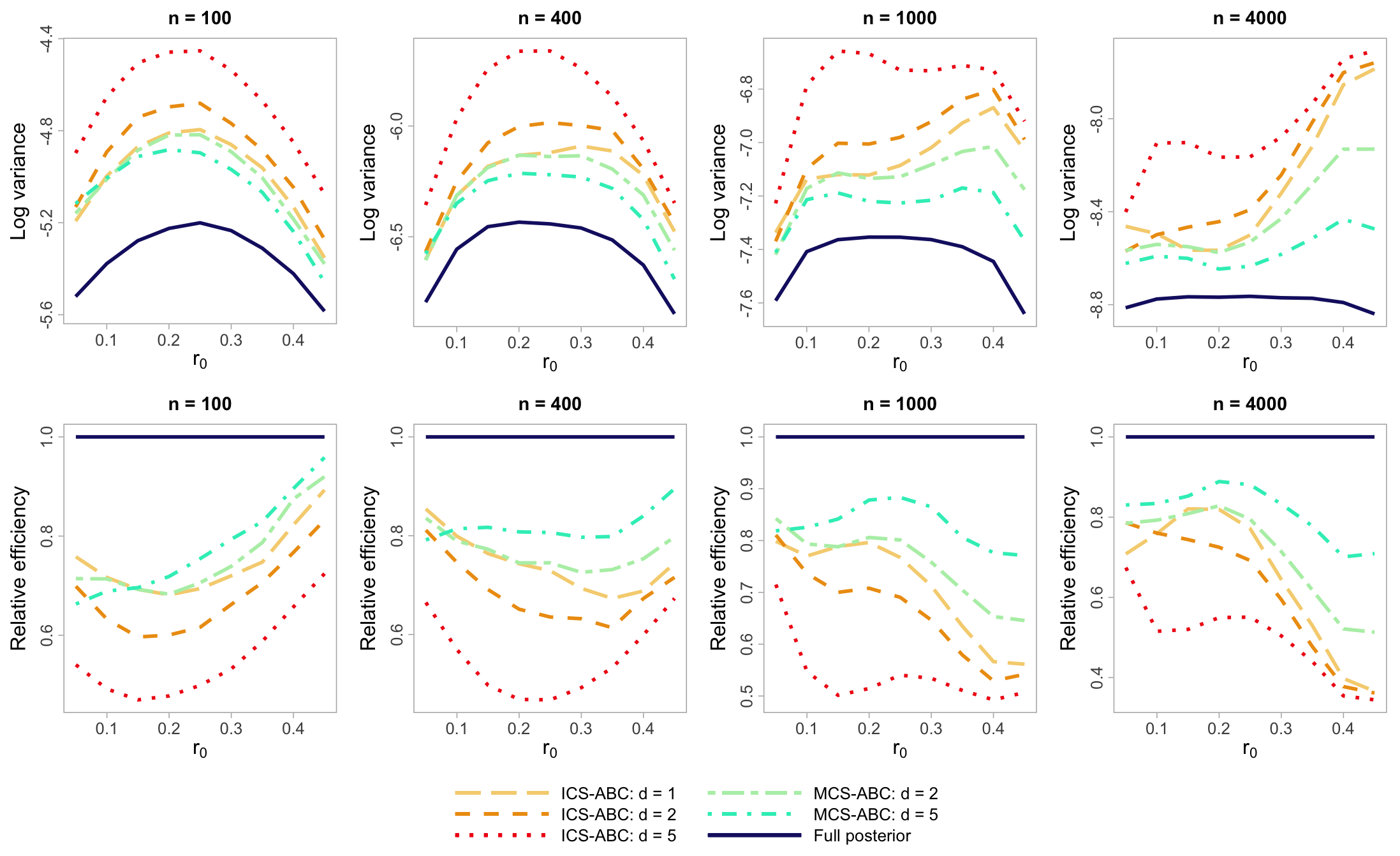}
\caption{Posterior log variances (top) and relative efficiencies with respect to the full posterior distribution (bottom). The results are averaged over 50 replicated datasets generated from the ARFIMA($0,r,0$) model for each of $r_0= 0.05, 0.10, 0.15, \dots, 0.45$ and $n = 100, 400, 1000, 4000$. }
\label{fig:ARFIMA}
\end{figure}

Our results show that ICS-ABC performs best with $d=1$ in almost all cases; including higher order lags decreases efficiency. In contrast, with MCS-ABC, increasing $d$ generally improves efficiency, up to $0.709/0.366=1.94$ times when $r_0=0.45$ and $n=4000$. This is expected from the optimal combination property. It also suggests that the counter-intuitive behaviour of ICS-ABC and the MCLE is due to the non-optimal combination using equal weights. Therefore, MCS-ABC solves the second issue mentioned above.   

Additionally, we observe that MCS-ABC almost always outperforms ICS-ABC with the same $d$, and the difference in performance becomes more significant as $r_0$ increases, up to more than doubling the efficiency with $d=5$ when $r_0=0.45$ and $n=4000$. Due to the first issue mentioned above, the relative efficiency of ICS-ABC deteriorates quickly when $r\geq0.25$ for larger data sizes. Although that of MCS-ABC also decreases when $r\geq0.25$ as $n$ increases, with higher-order composite components that capture the longer-range dependency, the loss of efficiency is much slower than that of ICS-ABC and relative efficiency does not fall below 0.7 using $d=5$. Therefore, MCS-ABC mitigates the first issue in finite samples. 



\subsection{Stochastic volatility model}\label{sec:SV}

Consider the stochastic volatility (SV) model:

$$
Y_t = \exp\left(\frac{X_t}{2}\right) S_t,\quad X_t =\gamma X_{t-1}+W_t,
$$
where $S_t\sim N(0,\sigma^2)$, $W_t\sim N(0, \tau^2)$, $X_0\sim N(0, \frac{\tau^2}{1-\gamma^2})$ and $|\gamma| < 1$. The parameters to be estimated are $\theta = (\tau^2, \gamma, \sigma^2)^T$. We assign inverse-gamma $InvGa(1,1)$ priors to the variance parameters $\tau^2$ and $\sigma^2$, and a uniform $U(-1,1)$ prior to the AR(1) coefficient $\gamma$. The observed dataset is generated with $\theta_0 = (0.25, 0.95, 1)^T$ and has length $n=1000$. 

Consider the set of lag-$k$ pairwise scores $ps_k(Y_{obs};\theta)$. All terms in the scores are two-dimensional integrals and have no closed form. Therefore, we estimate the terms using the method in Section \ref{sec:ssmodel}. The Monte Carlo sample size $N$ is chosen by using a pilot study to find the value of $N$ with which the estimation of a baseline composite score, e.g.~$ps_{{\rm equal};2}$, stabilises. Details can be seen in \citet{rosabeththesis}. $N=100$ is chosen.



The posterior distribution of $\theta$ is estimated using MCS-ABC with the summary statistic $\widehat{ps}_{1:m}(Y;\widehat{\theta})$ where $\widehat{\theta}$ is the MCLE of $ps_{{\rm equal};2}(Y_{obs};\theta)$.  
The ABC posterior distributions obtained using different $d$ are shown in Figure \ref{fig:SVnestmarg}. Our results show clear improvements in the ABC posterior as $d$ increases for $d\in[1, 4]$, especially for $\tau^2$ and $\gamma$, and using $d=4$ provides a good approximation to the full posterior. The results suggest that pairs with lag $\geq5$ do not provide significant additional information. However, similar to Figure \ref{fig:spatial}, including high order lags does not have a negative impact on the quality of the inference. These results are due to the optimal combination property of MCS-ABC. In contrast, results in \citet{varin2008pairwise} show that estimation quality can decrease when $d$ is increased above a threshold value in application of the (equally weighted) MCLE to state-space models.

\begin{figure}[!ht]
\centering
\includegraphics[width=\textwidth]{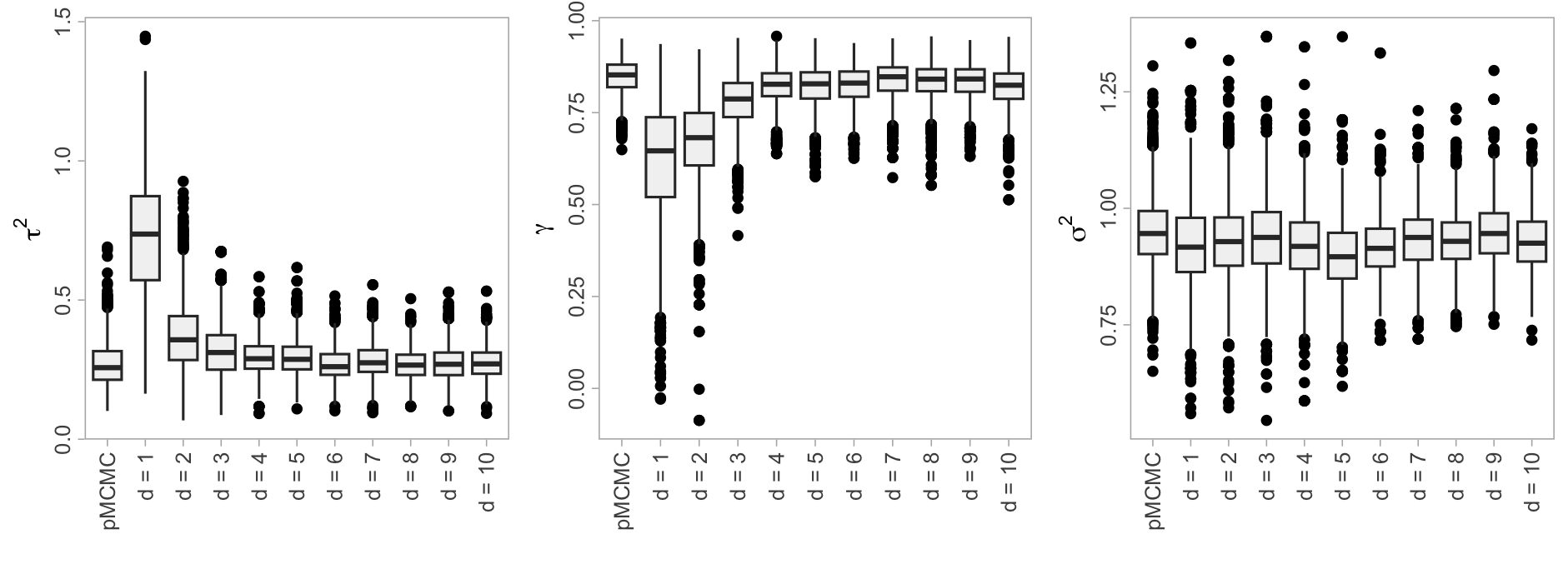}
\caption{ABC posterior marginals for parameters of SV model using $\widehat{ps}_{1:m}(Y;\widehat{\theta})$ as the summary, and the full posterior obtained by pMCMC.}
\label{fig:SVnestmarg}
\end{figure}

In order to obtain the best possible approximation with MCS-ABC, we need to choose a suitable $d$. Typically, this choice of $d$ will not be known beforehand and will depend on the unknown true parameter value. Algorithm \ref{algo:adaptation} is applied to choose $d$. Its performance is compared at the same computing cost with the ABC-PMC algorithm using a pre-specified, fixed $d$. The comparison is reported in Figure \ref{fig:SVnestcomp}. 

\begin{figure}[!ht]
\centering
\includegraphics[width=\textwidth]{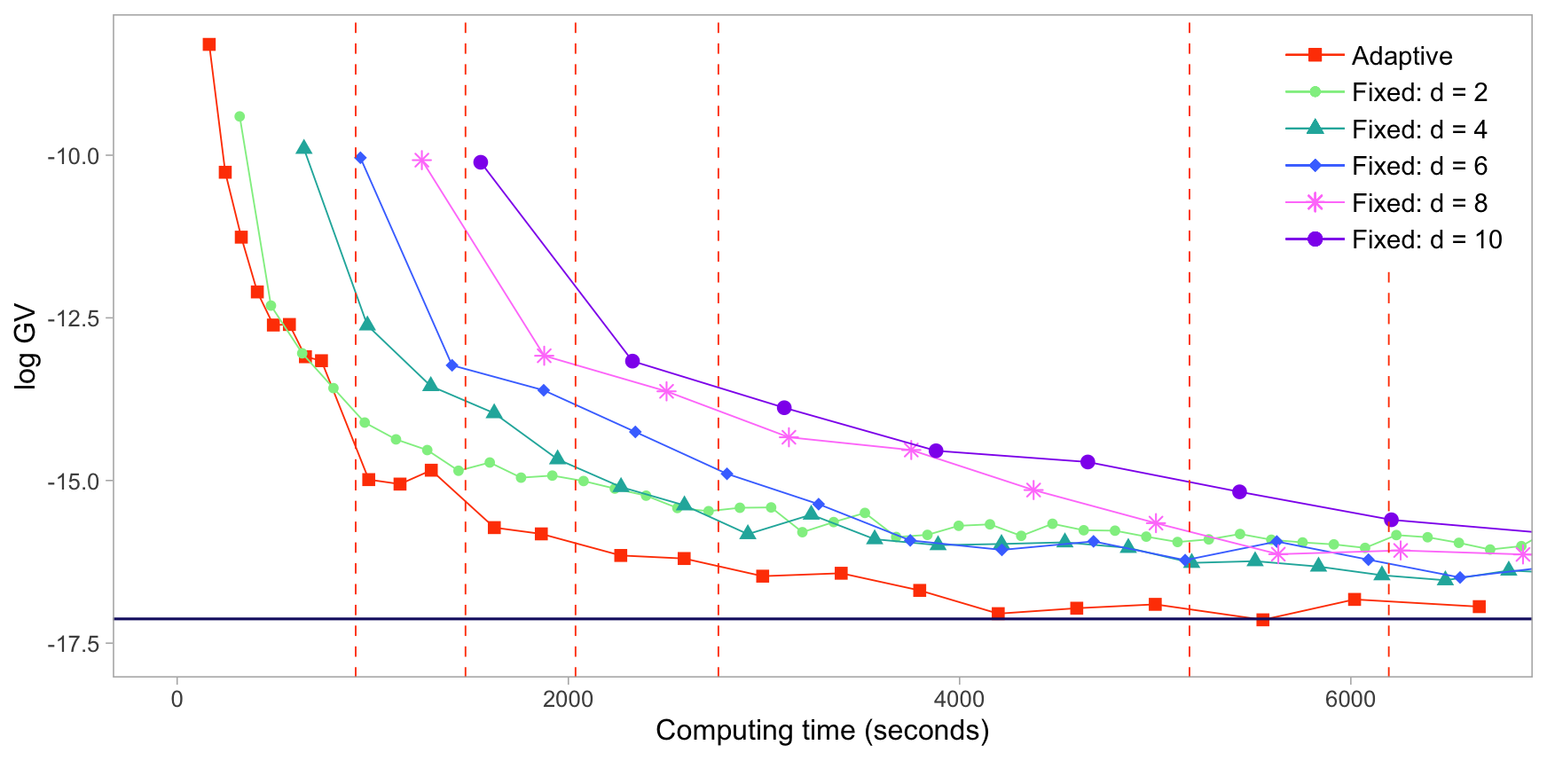}
\caption{ABC posterior $\log GV$ obtained with Adaptive MCS-ABC and fixed $d$ MCS-ABC compared by computing time. Each point shows the variance estimate at the end of a PMC cycle. The horizontal line indicates the posterior $\log GV$ obtained with pMCMC. The vertical lines indicate a change of summary in Algorithm \ref{algo:adaptation}. }\label{fig:SVnestcomp}
\end{figure}

Algorithm \ref{algo:adaptation} terminates after including the lag $d=7$ pairwise component. The final posterior variance estimate is similar to that obtained with pMCMC, and, based on the results in Figure \ref{fig:SVnestmarg}, we do not expect the inclusion of additional composite components to improve estimation. Therefore, the algorithm has successfully chosen the summary statistic close to a sufficient statistic. Moreover, for a fixed computational budget, the posterior variance estimate obtained by Algorithm \ref{algo:adaptation} is more accurate than the fixed summary algorithm with any $d$ considered. For example, with the most expensive summary considered ($d=10$), the fixed summary algorithm has only completed 1 PMC cycle within 2000 seconds, whereas Algorithm \ref{algo:adaptation}, which uses the cheapest summaries in the early PMC cycles, has completed 13 cycles and achieved a substantially more accurate variance estimate. Therefore even if the choice of a suitable $d$ is known beforehand, Algorithm \ref{algo:adaptation} can still be applied to significantly reduce the computational cost.

\section{Discussion} 


This paper proposes to combine multiple composite scores in the Bayesian framework to achieve the maximal Godambe information of the optimal estimating function in $\mathcal{H}_{cs_{1:K}}$. It proposes a novel algorithm to estimate and select the component composite scores to conditional on and provides justification based on asymptotic theory. This approach avoids evaluating the asymptotic moments of component composite scores which is often challenging. 

For time series models, composite likelihood is a computationally convenient alternative to the full likelihood giving fairly accurate point estimators, with the price of loss of efficiency due to ignoring correlations between sub-likelihoods. Our numerical studies show that the optimally combination of lag-batched composite scores significantly reduces the loss of efficiency and avoids it in several cases with enough lags by accounting for the correlations between component composite scores in the summary-based posterior distribution. The computational complexity is $O(n)$, making the proposed ABC method an computationally attractive alternative to the popular particle MCMC for long time series. 

When the chosen $\widehat{\theta}$ is a poor estimator of $\theta_0$ or for the question whether $\widehat{\theta}$ is needed at all, the following class of estimating functions is of interest, $\mathcal{H}_{cs_{1:K}}':=\{W(\theta)[cs_{1:K}(Y_{\rm obs};\theta_1)-E_{\theta}\{cs_{1:K}(Y;\theta_1)\}]:W(\theta)\in\mathbb{R}^{p\times d}\}$ for some constant $\theta_1$. Its maximal Godambe information is
\[
I_{G'}(h):=c_{n}^2\nabla_{\theta}E_{\theta}\{cs_{1:K}(Y;\theta_1)\}Var\{cs_{1:K}(Y_{obs};\theta_1)\}^{-1}\nabla_{\theta}E_{\theta}\{cs_{1:K}(Y;\theta_1)\}^{T}\mid_{\theta=\theta_{0}},
\]
and achieved by $\pi\{\theta\mid cs_{1:K}(Y_{\rm obs};\theta_1)\}$ in the first order. The impact of the value of $\theta_1$ requires further research. When $\theta_1=\theta_0$, obviously $I_{G'}(h)=I_{G}(h)$. When $\theta_1\neq\theta_0$, since each sub-likelihood is the likelihood of the sub-event, $\nabla_{\theta}E_{\theta}\{cs_{1:K}(Y;\theta_1)\}\mid_{\theta=\theta_{0}}\leq E_{\theta_0}\{\nabla_{\beta}cs_{1:K}(Y;\theta_0)\}$ by the Cram\'er-Rao inequality. However, due to that the sub-scores are correlated, the comparison of $Var\{cs_{1:K}(Y_{obs};\theta_1)\}$ and $Var\{cs_{1:K}(Y_{obs};\theta_0)\}$, hence $I_{G'}(h)$ and $I_{G}(h)$, is case dependent.   


\printbibliography

\newpage

\newrefsection 

\appendix
\part{Supplement} 
\parttoc 

\section{Related composite likelihood methods}\label{sub:cl_methods}

Given a composite likelihood $cL(Y_{obs};\theta)$, the maximum composite likelihood estimate (MCLE) is 
$$\widehat{\theta}_c:=\argmax_{\theta}cL(Y_{obs};\theta).$$ 
When the dataset contains $n$ independent and identically distributed observations, standard asymptotic results show that $\widehat{\theta}_c$ is consistent and asymptotic normal with variance $n^{-1}G(\theta_0)^{-1}$, where $G(\theta):=H(\theta)^TJ(\theta)^{-1}H(\theta)$ is the Godambe information for a single observation, with $H(\theta):=E\{-\nabla cs(Y;\theta)\}$ and $J(\theta):=Var\{cs(Y;\theta)\}$ \citep{varin2011overview}. Generally, composite likelihoods do not satisfy the second Bartlett equality, and this can result in a loss of efficiency compared to the maximum likelihood estimation. 

Asymptotic properties of the MCLE are also available for dependent data in some cases. In time series settings, \citet{ng2011composite} and \citet{davis2011comments} provide conditions for consistency and asymptotic normality of composite likelihood estimators for stationary time series with both short and long-range dependence. For spatial data, \citet{wang2013partial} discuss asymptotic properties of the MCLE as the size of the geographic sampling area increases. In this case, the MCLE has desirable asymptotic properties under assumptions that the spatial dependence decays sufficiently quickly as the geographic distance increases. In this work we assume the marginal events in sub-likelihoods are identically distributed and not necessarily independent, which encompasses both asymptotic cases. Therefore, the asymptotic results for both independent and dependent data can be applied in our framework.  

In Bayesian framework, the composite posterior density $\pi_c(\theta ; Y_{obs}) \propto \pi(\theta) c L(Y_{obs};\theta)$ can be estimated using, for example, MCMC. \citet{smith2009extended} and \citet{chan2017performance} apply this approach with the pairwise likelihood to max-stable processes for modelling spatial extremes. However, the direct replacement of the full likelihood with a composite likelihood can lead to underestimation of the posterior uncertainty \citep{ribatet2012bayesian}. To address this, \citet{pauli2011bayesian} and \citet{ribatet2012bayesian} propose calibration methods to adjust the curvature of the composite likelihood to correct its asymptotic properties. \citet{pauli2011bayesian} propose to use the weighted composite log-likelihood, 
$$cl_{adj}(Y_{obs};\theta)=\sum_{j=1}^{N_{c l}} \frac{\log f\left(A_j ; \theta\right)}{\widetilde{\lambda}}\mbox{, where }\widetilde{\lambda}=\operatorname{tr}\{H(\widehat{\theta}_c)^{-1} J(\widehat{\theta}_c)\}/p.$$
Calibration methods generally improve estimates of posterior variability. However, the calibrated composite posterior distribution can be over-dispersed, resulting in an overestimation of posterior variability \citep{ruli2016approximate}. \citet{ruli2016approximate} propose to use composite likelihoods with ABC by using $J(\widehat{\theta}_c)^{-1/2}cs(Y;\widehat{\theta}_c)$ as the summary statistic for $Y$ simulated with the parameter $\theta$. The covariance matrix of the observed composite score is used to weight the ABC distance to account for the potentially differing variances of terms in the score \citep{gleim2013approximate}. \citet{ruli2016approximate} show in numerical studies that their ABC scheme outperforms the calibrated composite posterior method of \citet{pauli2011bayesian}. 

For all methods above, a suitable composite likelihood, including the associated weights, must first be chosen. For many models there is no clear choice for the weights, or there are several candidate composite likelihoods to choose from.

\section{Standard ABC algorithms} \label{ABC_algos}

\begin{center}
\begin{algorithm}[H]
\caption{Importance sampling ABC}
\vspace{0.2cm}
\small{
{Suppose a proposal density $q(x)$, a bandwidth $\varepsilon$ and a Monte Carlo size $N$ are given.}
{\begin{enumerate}\setlength{\itemsep}{0.1em}
\item Draw $\theta^{(1)}, \dots, \theta^{(N)} \sim q(\cdot)$.

\item For $i=1, \dots, N$, simulate $\boldsymbol{Y}^{(i)} \sim f\left(\cdot ; \theta^{(i)}\right)$.

\item For $i=1, \dots, N$, calculate $\boldsymbol{s}^{(i)}$ and accept $\theta^{(i)}$ if $d\left(\boldsymbol{s}^{(i)}, \boldsymbol{s}_{\mathrm{obs}}\right)<\varepsilon$. Define the associated weights as $\pi(\theta^{(i)})/q(\theta^{(i)})$.

\end{enumerate}}}
\end{algorithm}
\end{center}

\begin{center}
\begin{algorithm}[H]
\caption{ABC-population Monte Carlo \citep{lenormand2013adaptive}}
\vspace{0.2cm}
\small{
{Choose $N>0$ and $\alpha \in (0,1)$, where $N_{\alpha}:= N\alpha$ is the number of samples to keep at each PMC step.}
{\begin{enumerate}\setlength{\itemsep}{0.1em}
\item Set $t=1$. For $i = 1, \dots, N$\\
Simulate $\theta^{(i,0)} \sim \pi(\theta)$ and $\bm{y}^{(i,0)}\sim f(\cdot; \theta^{(i,0)})$. Compute $d^{(i,0)} = d(S(\bm{y}^{(i,0)}), S(\bm{y}_{\text{obs}}))$. Set $w^{(i,0)} = 1$.
\item Let $\epsilon_1$ be the first $\alpha$-quantile of $\{d^{(i,0)}\}_{i = 1, \dots,N}$.
\vspace{0.05cm}
Let $\left\{(\theta^{(i,1)}, w^{(i,1)}, d^{(i,1)}, \bm{y}^{(i,1)}, \bm{s}^{(i,1)})\right\}_{i = 1, \dots, N_\alpha} = $\\
\vspace{0.15cm}
$\qquad \left\{(\theta^{(i,0)}, w^{(i,0)}, d^{(i,0)}, \bm{y}^{(i,0)}, S(\bm{y}^{(i,0)}))| d^{(i,0)} \leq \epsilon_1\right\}_{i = 1,\dots,N} $\\
\vspace{0.15cm}
Set $\Sigma^{(1)}$ to be twice the weighted covariance of $\{(\theta^{(i,1)}, w^{(i,1)})\}_{i = 1, \dots, N_\alpha}$\\
Set $t = t +1$.
\item For $i = N_{\alpha} +1, \dots, N$\\
Sample $\theta^{*}$ from $\{\theta^{(j,{t-1})}\}_{j=1,\dots, N_\alpha}$ with probability $\{w^{(j_{t-1})} / \sum_{k=1}^{N_{\alpha}} w^{(k_{t-1})}\}_{j = 1, \dots, N_{\alpha}}$\\
Generate $\theta^{(i,{t-1})} \sim N(\theta^{*}, \Sigma^{(t-1)})$ and $\bm{y}^{(i,{t-1})} \sim f(\cdot ; \theta^{(i,{t-1})})$\\
Compute $d^{(i,{t-1})} = d(S(\bm{y}^{(i,{t-1})}), S(\bm{y}_{\text{obs}}))$
\begin{flalign*}
\text{Set } w^{(i,{t-1})} = \frac{\pi(\theta^{(i,{t-1})})}{\sum_{j=1}^{N_{\alpha}}(w^{(j,{t-1})}/\sum_{k=1}^{N_\alpha}w^{(k,{t-1})})\zeta(\theta^{(i,{t-1})}-\theta^{(j,{t-1})}, \Sigma^{(t-1)})}&&
\end{flalign*}
where $\zeta(x, A) = (2\pi)^{-p/2}\det(A)^{-1/2}e^{-\frac{1}{2}(x^TA^{-1}x)}.$
\item Let $\epsilon_t$ be the first $\alpha$-quantile of $\{d^{(i,{t-1})}\}_{i = 1, \dots, N}$\\
\vspace{0.05cm}
Let $\left\{(\theta^{(i,t)}, w^{(i,t)}, d^{(i,t)}, \bm{y}^{(i,{t-1})}, \bm{s}^{(i,t)})\right\}_{i=1, \dots, N_\alpha} = $\\
\vspace{0.15cm}
$\qquad \left\{(\theta^{(i,{t-1})}, w^{(i,{t-1})}, d^{(i,{t-1})}, \bm{y}^{(i,{t-1})}, S(\bm{y}^{(i,{t-1})}))| d^{(i_{t-1})} \leq \epsilon_t \right\}_{i=1, \dots, N}$\\
\vspace{0.15cm}
Set $\Sigma^{(t)}$ to be twice the weighted covariance of $\{(\theta^{(i,t)}, w^{(i,t)})\}_{i=1, \dots, N_\alpha}$\\
Set $t = t +1$.
\end{enumerate}}}
\label{algo:nested}
\end{algorithm}
\end{center}
In \citet{lenormand2013adaptive} the authors use a stopping rule based on the proportion of newly accepted particles that aims to terminate the algorithm when further simulations would result in very little change to the estimated posterior.

\section{Technical conditions of results in Section \ref{sub:abc_est_scores} and discussions}

\subsection{Theorem \ref{thm:optimal_comb}}

For a dataset containing i.i.d observations, \ref{cond:sum_conv} and \ref{cond:limit_moments} hold alongside the corresponding central limit theorems and moment convergence, which can be verified by standard arguments in, e.g.~\citet{serfling2009approximation}. For a dataset with dependent observations, in the literature, \ref{cond:sum_conv} and \ref{cond:limit_moments} are often developed to obtain the asymptotic normality of the maximum composite likelihood estimator \citep{bevilacqua2012estimating,heagerty1998composite}. For example, \citet{davis2011comments} consider the estimation of a class of linear time series models, including ARMA, ARIMA processes, etc., using the $m$th-order summed pairwise likelihood. They discuss sufficient conditions for \ref{cond:sum_conv} to hold and give the analytical forms of the moment in \ref{cond:limit_moments}. They show that for both long and short-memory models, the pairwise likelihood is inferior to the $m$th-order summed pairwise likelihood with fixed $m$, since most of the dependence occurs in pairs that are close to each other and the pairwise likelihood has too many redundant pairs of observations when $n$ is large. However, the asymptotic variance of the maximum composite likelihood estimator is not monotone in $m$, and it is not clear how to choose $m$, i.e.~weights of which component composite likelihood should be $0$. As another example, \citet{ng2011composite} consider a class of state-space models with latent Gaussian autoregressive processes. They discuss the asymptotics of the $m$th-order summed pairwise log-likelihood and the \textit{$k$-consecutive log-likelihood}, defined as
\begin{align}
tl_{k}(Y;\beta):=(n-k)^{-1}\sum_{t=1}^{n-k}\log f(Y_{t},\dots,Y_{t+k};\beta). \label{eq:consecutive_batched}	
\end{align}
 For $pl_{1:k}(Y;\beta)$ and $tl_{k}(Y;\beta)$, they provide conditions for \ref{cond:sum_conv} to hold and prove the existence of $A(\theta)$ and $\mu(\theta)$ in \ref{cond:limit_moments}. In particular, \ref{cond:sum_conv}(ii) is satisfied if for some $k\in\{1,\dots,K\}$, the Fisher information $E_{\theta_0}\{\Delta_{\beta}\log f_k(Y_{k1})\}$ is of full rank.

\subsection{Theorem \ref{thm:estimated_score}}
Let $\mathcal{Y}_k$ be the union of supports of $f_k(\cdot;\theta)$ over $\theta\in\mathcal{P}$. Let $\mathcal{X}\subset\mathbb{R}^l$ be the union of supports of $f_X(\cdot;\theta)$ over $\theta\in\mathcal{P}$ and $\mathcal{X}_k\subset\mathbb{R}^{l_k}$ be that for $f_{k,X}(\cdot;\theta)$. Let $\eta_{y}(x;\beta)=f_k(y| x;\beta)/f_k(y;\beta)$. For all $k=1,\dots,K$ and a small neighbourhood $B_{\theta_0}\subset\mathcal{P}_0$ of $\theta_0$, we assume the following conditions on several probability densities of $X$ and $Y$. 
\begin{description}
	\item[(C4)\label{cond:differentiability}] $q(x) \in C^{1}(\mathcal{X}), f_k(y, x ; \beta) \in C^{1}(\mathcal{Y}_k \times \mathcal{X}_k \times \mathcal{P}_0)$, and $f_k(y;\beta) \in C^{1}(\mathcal{Y}_k)$, for all $\beta \in \mathcal{P}_0$.
	\item[(C5)\label{cond:fk_upper_bound}] $\exists$ $c_1$, $c_2$ and $c_3$ such that $\sup _{x, y, \beta} f_k(x| y ; \beta)<c_1$, $\sup _{x, y, \beta}\left\|\nabla_x f_k(x| y ; \beta)\right\|<c_2$ and $\sup _{x, y, \beta}\left\|\nabla_{\beta} f_k(x| y ; \beta)\right\|<c_3$.
	\item[(C6)\label{cond:fk_information}]  $\sup_{\theta,\beta\in\mathcal{P}_0}E_\theta\left\{\left\|\nabla_\beta \log f_k(Y_{k1} ; \beta)\right\|^2\right\}<\infty$.
	\item[(C7)\label{cond:Q_Donsker}] $\exists$ a partition $\mathbb{R}^{l_k}=\bigcup_j I_j$, where $I_j$ are unit cubes, such that
$$
\sum_{j=1}^{\infty}\left[\left\{\sup _{x \in I_j}\left\|\nabla_x \log q_k(x)\right\|+1\right\} \sup_{x \in I_j}q_k(x)^{-1} Q(I_j)\right]^{\frac{l_k}{l_k+1}}<\infty,
$$
where $Q(I_j)=\int_{I_j}q_k(x)\,dx$. 
	\item[(C8)\label{cond:IS_q}] $\sup_{x\in\mathcal{X}_k,\beta\in B_{\theta_0}}f_k(x;\beta)/q_k(x)<M$ where $M$ is a positive constant. 	
	\item[(C9)\label{cond:envelop1}]  $E_{q(\cdot)f_k(\cdot;\theta)}\left\{ \sup_{\beta\in B_{\theta_0}}\eta_{Y_{k1}}(X_{k1};\beta)^{4} \right\}<\infty$ for all $\theta\in\mathcal{P}_0$,
	\item[(C10)\label{cond:envelop2}]  $E_{q}\left[\sup_{\theta\in\mathcal{P}_{0}}E_{\theta}\left\{ \nabla_{\beta}\log f_{k}(Y_{k1};\theta)\sup_{\beta\in B_{\theta_0}}\nabla_{\beta}\log f_{k}(Y_{k1},X_{k1};\beta)^{T}\eta_{Y_{k1}}(X_{k1};\beta)\right\} ^{2}\right]<\infty$.
	\item[(C11)\label{cond:average_joint_information}] $\lim _{n \rightarrow \infty} a_n^2\Var_{q(\cdot)f_k(\cdot; \theta)}\left\{\frac{1}{n_k} \sum_{j=1}^{n_k} \nabla_\beta \log f\left(Y_{k j}, X_{k1} ; \beta\right) \eta_{Y_{k1}}(X_{k1};\beta)\right\}<\infty$ for all $\theta\in\mathcal{P}_0$ and $\beta\in B_{\theta_0}$.
	
\end{description}

Conditions \ref{cond:differentiability} and \ref{cond:fk_upper_bound}
are standard regularity conditions for probability densities in fixed-dimensional spaces. Our results will depend on the uniform convergence of some empirical processes of $X_{1:N}$ over some function classes which requires \ref{cond:fk_information} and \ref{cond:Q_Donsker}. \ref{cond:fk_information} assumes finite second moments on the score function of the marginal density of $Y_{k1}$. \ref{cond:Q_Donsker} assumes that the user-chosen $q(x)$ satisfies a standard bracketing number condition for Donsker classes. For $q_k(x)$ being a continuous probability density, since $Q(I_j)\approx q_k(x_j^*)$ for some $x_j^*\in I_j$, \ref{cond:Q_Donsker} requires that $q_k(x)$ decreases sufficiently fast in the tails. It is easy to check that the Gaussian family satisfies \ref{cond:Q_Donsker} and the Student's t family does not. \ref{cond:IS_q}--\ref{cond:envelop2} are needed to control the variance of the importance sampling estimators in \eqref{eq:est_fisher_identity}. The importance weights in \eqref{eq:est_fisher_identity} are in the form of $f_{k}(y,x;\beta)/\{f_{k}(y;\beta)q_k(x)\}$, up to a remainder, and can be written as $\eta_y(x;\beta)\cdot f_{k}(x;\beta)/q_k(x)$. \ref{cond:IS_q} requires that the proposal density $q_k(x)$ has equal or heavier tails than $f_k(x;\beta)$. Hence it is natural to use $f_X(x;\widehat{\theta})$ as the proposal density $q(x)$  with which the importance weights in \eqref{eq:est_fisher_identity} have the simpler form $f_{k}(Y_{kj}| x_{kj}^{(i)};\widehat{\theta})/\widehat{f}_{k}(Y_{kj};\widehat{\theta})$. \ref{cond:envelop1} and \ref{cond:envelop2} assume that on average the importance sampling estimators using $\eta_y(x;\beta)$ as the importance weights over $x$ have bounded variances, and these can be expected to hold, since in $\eta_y(x;\beta)$ the proposal density $f_k(y;\beta)$ is a continuous mixture density of the target density $f_k(y| x;\beta)$ over $x$. \ref{cond:average_joint_information} depends on the weak convergence property of the component composite scores and can be verified as a byproduct when obtaining the asymptotic covariance matrix of the maximum composite likelihood estimator, e.g.~in \citet{ng2011composite}.

\section{Proof of Theorem \ref{thm:optimal_comb}}
Since the summary-based posterior $\pi\{\theta| h(Y_0)\}$ is invariant to the scaling of $h(Y)$ by a constant diagonal matrix, without loss of generality hereafter assume the weights $\{w_{kj}\}_{j=1}^{n_k}$, $k=1,\dots,K$, of the sub-scores satisfy $\sum_{j=1}^{n_k}w_{kj}/n_k=1$.

\begin{proof}[Proof of Theorem \ref{thm:optimal_comb}] 
Since $h(Y;\theta)$ is an unbiased estimating function, $\mu(\theta_0)=\lim_{n\rightarrow\infty}E_{\theta_0}\{h(Y)\}=0.$ By \citet[Proposition 1]{li2018convergence}, the asymptotic variance
of $\pi\{\theta| h(Y_{obs})\}$ is the inverse of $a_{n}^{2}I(h)$
if \ref{cond:prior}, \ref{cond:sum_conv}, \citet[Condition 4 and 5]{li2018convergence}  and the following conditions hold: 
\begin{description}
\item [{(i)}] For any $\delta<\delta_{0}$ there exists $\delta'>0$ such
that $\max_{\|\theta-\theta_{0}\|>\delta}\|\mu(\theta)\|>\delta'$.
\item [{(ii)}] $a_n^2I(h)$ has full rank.
\end{description}
For (i) above, if it doesn't hold, there exists a series $\theta_{t}\rightarrow\theta_{0}$,
as $t\rightarrow\infty$, such that $\{\theta_{t}\}\backslash\theta_{0}$
is not empty and $\mu(\theta_{t})=0$ for all $t$. By the Taylor expansion and $\mu(\theta_{0})=0$, $\mu(\theta_{t})=\nabla\mu(\dot{\theta}_{t})^{T}(\theta_{t}-\theta_{0})$, where $\dot{\theta}_{t}$ is between $\theta_{t}$
and $\theta_{0}$. It implies that for each $t$, the rank of $\nabla\mu(\dot{\theta}_{t})$, hence $\nabla\mu(\theta_{0})$, is smaller than $p$. On the other hand, an important property of unbiased estimating function is $\nabla E_{\theta}\{h(Y;\theta)\}=E_{\theta}\{\nabla h(Y;\theta)\}$, which can be obtained by differentiating under the integral and the interchange of the order of integration and differentiation is permitted using Leibniz integral rule \citep[Section 2.3]{sen2010finite}. Then we have  
$$
\nabla\mu(\theta_0)=\lim_{n\rightarrow\infty}\nabla E_{\theta}\{h(Y;\theta)\}\mid_{\theta=\theta_0}=\lim_{n\rightarrow\infty} E_{\theta}\{\nabla h(Y;\theta)\}\mid_{\theta=\theta_0}.
$$
and by \ref{cond:sum_conv}(ii) there is contradiction. Therefore (i) holds, and by \ref{cond:sum_conv}(ii) again, (ii) holds. 

Hence by \ref{cond:limit_moments},
\begin{align*}
a_n^2I(h) & =\lim_{n\rightarrow\infty}E_{\theta_{0}}\{\nabla_{\beta}h(Y)\}^{T}Var_{\theta_0}\{h(Y)\}^{-1}E_{\theta_{0}}\{\nabla_{\beta}h(Y)\}\\
 & =\lim_{n\rightarrow\infty}a_n^2I_{G}(W^{*}h).
\end{align*}

\end{proof}

\section{Proof of Theorem \ref{thm:estimated_score}}
\begin{lem}
\label{lem:Vaart}\citep{van2000asymptotic} Consider a partition
$\mathbb{R}^{d}=\bigcup_{j=1}^{\infty}I_{j}$, where $I_{j}$ are
unit cubes. Let $\mathcal{F}$ be the class of all functions $f:\mathbb{R}^{d}\rightarrow\mathbb{R}$
satisfying that $f(x)\in C^{1}(\mathbb{R}^{d})$ and $\|\sup_{x\in I_{j}}\nabla f(x)\|\leq M_{j}$
for some constants $M_{j}<\infty$, $\forall j$. If 
\begin{equation}
\sum_{j=1}^{\infty}\{M_{j}Q(I_{j})\}^{\frac{d}{d+1}}<\infty,\label{eq:bracketing_bound}
\end{equation}
then $\mathcal{F}$ is a Q-Donsker class.
\end{lem}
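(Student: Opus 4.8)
The plan is to deduce the $Q$-Donsker property from a finite bracketing entropy integral, invoking the bracketing central limit theorem (Ossiander's criterion, as in van der Vaart (2000)): it suffices to exhibit a square-integrable envelope for $\mathcal{F}$ and to show that $\int_0^\infty \sqrt{\log N_{[\,]}(\epsilon,\mathcal{F},L_2(Q))}\,d\epsilon<\infty$. The hypothesis \eqref{eq:bracketing_bound} is meant to enter at two points: to guarantee integrability of the envelope, and to control the assembly of local brackets across the infinitely many cubes $I_j$. For the envelope, I would fix a reference cube and use $f(x)-f(y)=\int \nabla f$ together with $\|\sup_{x\in I_j}\nabla f(x)\|\le M_j$ to bound the oscillation of any $f\in\mathcal{F}$ on $I_j$ by $\sqrt{d}\,M_j$, then chain these oscillation bounds to produce a majorant $F$ that one checks lies in $L_2(Q)$ using \eqref{eq:bracketing_bound}.

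For the local step I would bracket $\mathcal{F}$ one cube at a time. Restricted to the unit cube $I_j$, every $f\in\mathcal{F}$ is $M_j$-Lipschitz, so partitioning $I_j$ into subcubes of side $\sim\delta_j/M_j$ and recording the discretized values at the grid nodes yields brackets of $L_\infty$-oscillation at most $\delta_j$ on $I_j$, their number being of order $\exp\{C(M_j/\delta_j)^d\}$; hence the local log-bracketing number obeys $\log N_j(\delta_j)\lesssim (M_j/\delta_j)^d$. A bracket of $L_\infty$-width $\delta_j$ on $I_j$ contributes at most $\delta_j\,Q(I_j)$ to the global $L_1(Q)$ width, so the weight $Q(I_j)$ is exactly what discounts cubes of small mass.

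The crux is the global assembly. Taking products of the local brackets, the global log-bracketing number at total width $\epsilon$ is $\sum_j (M_j/\delta_j)^d$ minimized over allocations $\{\delta_j\}$ subject to $\sum_j \delta_j\,Q(I_j)\le\epsilon$. Solving this convex program by Lagrange multipliers (equivalently, by Hölder's inequality) gives the optimal allocation $\delta_j\propto (M_j^d/Q(I_j))^{1/(d+1)}$ and the bound
\[
\log N_{[\,]}(\epsilon,\mathcal{F},L_1(Q))\ \lesssim\ \epsilon^{-d}\Big\{\sum_{j=1}^{\infty}\big(M_j Q(I_j)\big)^{\frac{d}{d+1}}\Big\}^{\,d+1},
\]
in which the exponent $d/(d+1)$ is precisely the output of the optimization and the bracketed series is finite by \eqref{eq:bracketing_bound}. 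I would then transfer this estimate to $L_2(Q)$ via the oscillation-and-mass relation above and feed it into the bracketing integral to conclude that $\mathcal{F}$ is $Q$-Donsker.

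The hard part will not be the single-scale entropy estimate but the interchange between the infinite partition and the entropy integral: one must verify that products of infinitely many local brackets still form finitely many genuine global brackets with controlled $L_2(Q)$ width, and that the tail cubes (where $M_j Q(I_j)$ is small) contribute negligibly. This is where \eqref{eq:bracketing_bound} does the real work, simultaneously forcing the envelope into $L_2(Q)$ and rendering the allocation summable; confirming that the assembled entropy bound yields a convergent bracketing integral, uniformly enough to apply the bracketing central limit theorem, is the step I expect to demand the most care.
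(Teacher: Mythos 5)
Your local entropy computation is sound and, in substance, reproduces the result that the paper merely cites: the paper's entire proof is the single line ``apply \citet[Example 19.9]{van2000asymptotic} with $r=1$, $V=d$ and $\alpha=1$'', and your cube-by-cube Lipschitz bracketing followed by the H\"older/Lagrange allocation $\delta_j\propto (M_j^d/Q(I_j))^{1/(d+1)}$ is exactly the argument behind that cited bound (it is the proof of van der Vaart and Wellner's Corollary 2.7.4, of which Example 19.9 is the statement). So up to the entropy estimate the two routes coincide, yours simply doing from first principles what the paper outsources to a reference; the exponent on the assembled series should be $(d+1)/d$ rather than $d+1$, but that is a constant in $\epsilon$ and harmless.

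The genuine gap is the step you yourself flag as the hard one, and it is not merely delicate --- as outlined it does not close. The bracketing central limit theorem needs $\int_0^1\sqrt{\log N_{[\,]}(\epsilon,\mathcal{F},L_2(Q))}\,d\epsilon<\infty$, and the entropy bound you derive (and the paper cites) has rate $\log N_{[\,]}(\epsilon)\asymp\epsilon^{-d}$, so the integrand is of order $\epsilon^{-d/2}$ and the integral converges near $0$ only when $d<2$. Transferring your $L_1(Q)$ brackets to $L_2(Q)$ does not improve this rate: running the whole construction with $r=2$ replaces the hypothesis by $\sum_j(M_j^2Q(I_j))^{V/(V+2)}<\infty$ but still forces $V\ge d/\alpha=d$, since the $\epsilon^{-d/\alpha}$ growth is intrinsic to gradient-bounded classes. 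This is precisely why van der Vaart's Example 19.9 attaches the extra requirement $\alpha>d/2$ to its Donsker conclusion; for $\alpha=1$ that means $d=1$. Your proof therefore establishes the entropy bound but not the Donsker conclusion for $d\ge2$ (a regime the paper does use, e.g.\ $d=l_k\ge 2$ in Corollary \ref{cor:IS_weight_Donsker}); the paper's one-line citation silently inherits the same obstruction. To close the argument one needs either higher-order smoothness ($\alpha>d/2$) of the function class, or a Donsker criterion that exploits the finite-dimensional parametrisation of the specific functions fed into the lemma rather than the full nonparametric class $\mathcal{F}$.
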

\begin{proof}
This holds by letting $r=1$, $V=d$ and $\alpha=1$ in \citet[Ex19.9]{van2000asymptotic}.
\end{proof}
For each $k=1,\dots,K$, let $H:=\{h\in L_{2}(f_{k,\theta}):\theta\in\mathcal{P}_{0}\}$
where $L_{2}(f_{k,\theta})$ contains all random variables following
$f_{k}(\cdot;\theta)$ with finite $L_{2}$ norm. For $h\in H$ and
some function $g(\cdot)$ on H, let $\|g(h)\|_{H}:=\sup_{h\in H}\|g(h)\|_{\theta}$
where $\|\cdot\|_{\theta}$ is the the $L_{2}$ norm under $f_{k}(\cdot;\theta)$.
Let $\psi_{y}(x;\beta)=f_{k}(x| y;\beta)/q_k(x)$.

\begin{cor}
\label{cor:IS_weight_Donsker}Assume \ref{cond:differentiability}--\ref{cond:Q_Donsker}. Then $\{\psi_{\tilde{Y}}(x;\beta):\tilde{Y}\in H,\beta\in\mathcal{P}_{0}\}$
is a Q-Donsker class.
\end{cor}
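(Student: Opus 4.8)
The plan is to realize $\{\psi_{\tilde Y}(\cdot;\beta):\tilde Y\in H,\beta\in\mathcal{P}_0\}$ as a subclass of a class $\mathcal{F}$ on $\mathbb{R}^{l_k}$ of exactly the type handled by Lemma \ref{lem:Vaart}, and then invoke that lemma; since a subclass of a $Q$-Donsker class is again $Q$-Donsker, it suffices to produce a single sequence of constants $\{M_j\}$, \emph{not depending on the index} $(\tilde Y,\beta)$, that dominates the cube-wise gradient suprema of every member of the class and satisfies the summability condition \eqref{eq:bracketing_bound}. First I would dispose of the smoothness requirement: by \ref{cond:differentiability}, $f_k(y,x;\beta)$ and $f_k(y;\beta)$ are $C^1$ and $q_k$ inherits $C^1$ regularity and positivity from $q$, so $f_k(x\mid \tilde Y;\beta)=f_k(\tilde Y,x;\beta)/f_k(\tilde Y;\beta)$ and hence $\psi_{\tilde Y}(x;\beta)=f_k(x\mid \tilde Y;\beta)/q_k(x)$ is $C^1$ in $x$ on the support where $q_k>0$.

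The heart of the argument is the gradient bound. Differentiating the quotient gives
$$
\nabla_x\psi_{\tilde Y}(x;\beta)=\frac{\nabla_x f_k(x\mid \tilde Y;\beta)}{q_k(x)}-\frac{f_k(x\mid \tilde Y;\beta)\,\nabla_x q_k(x)}{q_k(x)^2},
$$
so that, writing $\nabla_x q_k/q_k=\nabla_x\log q_k$ and using the uniform bounds $\sup_{x,y,\beta} f_k(x\mid y;\beta)<c_1$ and $\sup_{x,y,\beta}\|\nabla_x f_k(x\mid y;\beta)\|<c_2$ of \ref{cond:fk_upper_bound},
$$
\|\nabla_x\psi_{\tilde Y}(x;\beta)\|\le q_k(x)^{-1}\left\{c_2+c_1\|\nabla_x\log q_k(x)\|\right\},
$$
uniformly over $\tilde Y\in H$ and $\beta\in\mathcal{P}_0$. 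This uniformity in $y$ and $\beta$ is precisely what \ref{cond:fk_upper_bound} is engineered to deliver, and it is what permits a single $\{M_j\}$ to serve the entire class. I would then set
$$
M_j:=\sup_{x\in I_j}q_k(x)^{-1}\left\{c_2+c_1\sup_{x\in I_j}\|\nabla_x\log q_k(x)\|\right\},
$$
so that $\|\sup_{x\in I_j}\nabla_x\psi_{\tilde Y}(x;\beta)\|\le M_j$ for every member of the class.

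Finally I would verify \eqref{eq:bracketing_bound}. Since $c_2+c_1 t\le\max(c_1,c_2)(t+1)$, we have $M_j\le\max(c_1,c_2)\{\sup_{x\in I_j}\|\nabla_x\log q_k(x)\|+1\}\sup_{x\in I_j}q_k(x)^{-1}$, whence
$$
\sum_{j=1}^{\infty}\{M_jQ(I_j)\}^{\frac{l_k}{l_k+1}}\le\max(c_1,c_2)^{\frac{l_k}{l_k+1}}\sum_{j=1}^{\infty}\left[\left\{\sup_{x\in I_j}\|\nabla_x\log q_k(x)\|+1\right\}\sup_{x\in I_j}q_k(x)^{-1}Q(I_j)\right]^{\frac{l_k}{l_k+1}},
$$
which is finite by \ref{cond:Q_Donsker}. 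Lemma \ref{lem:Vaart} then shows the enclosing class $\mathcal{F}$ is $Q$-Donsker, and the conclusion follows because our class is contained in $\mathcal{F}$.

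I expect the main obstacle (indeed the only non-mechanical point) to be confirming that the gradient bound is genuinely uniform over the index set and reduces \emph{exactly} to the quantity controlled by \ref{cond:Q_Donsker}: one must recognize the $\nabla_x\log q_k$ term hidden in the quotient-rule derivative and match the structure $c_2+c_1\|\nabla_x\log q_k\|$ to the $(\|\nabla_x\log q_k\|+1)$ envelope appearing in \ref{cond:Q_Donsker}. A secondary care point is the positivity and $C^1$ regularity of the marginal $q_k$, inherited from $q$, which must be in place before the quotient differentiation is legitimate.
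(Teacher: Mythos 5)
Your proposal is correct and follows essentially the same route as the paper's proof: bound $\|\nabla_x\psi_{\tilde Y}(x;\beta)\|$ uniformly over $(\tilde Y,\beta)$ by $q_k(x)^{-1}\{c_2+c_1\|\nabla_x\log q_k(x)\|\}$ using \ref{cond:fk_upper_bound}, take $M_j$ as the cube-wise supremum of this envelope, verify \eqref{eq:bracketing_bound} via \ref{cond:Q_Donsker}, and conclude by Lemma \ref{lem:Vaart} together with the subclass observation. The only cosmetic difference is that you differentiate the quotient directly while the paper factors out $f_k(x\mid\tilde Y;\beta)/q_k(x)$ and writes the gradient in log-derivative form; the resulting bounds coincide up to constants.
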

\begin{proof}
Let $M_{j}=\sup_{x\in I_{j},\beta}\left\{ \left\Vert \nabla_{x}\psi_{\tilde{Y}}(x;\beta)\right\Vert _{H}\right\} $.
By \ref{cond:fk_upper_bound}, for $x\in I_{j}$, we have 
\begin{align*}
\left\Vert \nabla_{x}\psi_{\tilde{Y}}(x;\beta)\right\Vert _{H} & =\sup_{\tilde{Y}\in H}\left\Vert \nabla_{x}\frac{f_{k}(x\mid\tilde{Y};\beta)}{q_{k}(x)}\right\Vert _{\theta}\\
 & =\sup_{\tilde{Y}\in H}\left\Vert \frac{f_{k}(x\mid\tilde{Y};\beta)}{q_{k}(x)}\left\{ \nabla_{x}\log f_{k}(x\mid\tilde{Y};\beta)-\nabla_{x}\log q_{k}(x)\right\} \right\Vert _{\theta}\\
 & \leq c_{1}\{c_{2}+\sup_{x\in I_{j}}\|\nabla_{x}\log q_{k}(x)\|\}\sup_{x\in I_{j}}q_{k}(x)^{-1}.
\end{align*}
Then by \ref{cond:Q_Donsker}, we have $\sum_{j=1}^{\infty}\{M_{j}Q(I_{j})\}^{\frac{l_{k}}{l_{k}+1}}<\infty$
hence $M_{j}$ exists, and by Lemma \ref{lem:Vaart}, $\mathcal{F}$
in Lemma \ref{lem:Vaart} is a Q-Donsker class. By \ref{cond:differentiability}, the target
set is a subset of $\mathcal{F}$ and therefore the corollary holds. 
\end{proof}

Define $u(y,x;\beta):=\nabla_{\beta}\log f_{k}(y,x;\beta)f_{k}(x| y;\beta)/q_{k}(x)$
and $v(x;\theta,\beta):=E_{\theta}\{u(Y_{k1},x;\beta)\}$. 
\begin{lem}\label{lem:v_class_Donsker} Assume \ref{cond:differentiability}--\ref{cond:fk_information}.
Then $\{v(x;\theta,\beta)-v(x;\theta_{0},\beta)\in\mathbb{R}^{p}:\theta\in\mathcal{P}_{0},\beta\in\mathcal{P}_{0}\}$
is a Q-Donsker class.
\end{lem}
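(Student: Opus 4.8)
The plan is to exploit that, in contrast to the class treated in Corollary~\ref{cor:IS_weight_Donsker}, the class here is indexed only by the finite-dimensional compact set $\mathcal{P}_0\times\mathcal{P}_0$, since the argument $y$ (equivalently $\tilde Y\in H$) has been integrated out inside $v$. This is precisely why the tail condition \ref{cond:Q_Donsker} is not assumed. I would first simplify the integrand with the log-derivative identity $\nabla_\beta\log f_k(y,x;\beta)\,f_k(x\mid y;\beta)=\nabla_\beta f_k(y,x;\beta)/f_k(y;\beta)$, giving
\[
v(x;\theta,\beta)=\frac{1}{q_k(x)}\int\nabla_\beta f_k(y,x;\beta)\,\frac{f_k(y;\theta)}{f_k(y;\beta)}\,dy,
\]
and then split the $\beta$-derivative of the joint density as $\nabla_\beta f_k(y,x;\beta)/f_k(y;\beta)=\nabla_\beta f_k(x\mid y;\beta)+f_k(x\mid y;\beta)\nabla_\beta\log f_k(y;\beta)$, so that $v$ becomes an integral against $f_k(y;\theta)$ of terms whose conditional-density factors are uniformly bounded by \ref{cond:fk_upper_bound}. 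Subtracting the value at $\theta_0$ factors out the $\theta$-dependence through $f_k(y;\theta)-f_k(y;\theta_0)$, and the vector-valued statement is handled coordinatewise.

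Because the class is parametric, I would invoke the bracketing bound for Lipschitz-in-parameter families \citep[Example 19.7]{van2000asymptotic}: if $(\theta,\beta)\mapsto v(x;\theta,\beta)-v(x;\theta_0,\beta)$ is Lipschitz in $(\theta,\beta)$ for each $x$ with a function $L(x)$ satisfying $\int L(x)^2 q_k(x)\,dx<\infty$, then the $L_2(q_k)$-bracketing numbers grow polynomially, the bracketing integral is finite, and the class is $q_k$-Donsker. The Lipschitz modulus in $\theta$ is immediate since $v$ is linear in $f_k(y;\theta)$, with envelope supplied by the score-moment bound \ref{cond:fk_information} and the continuity of $\nabla_\theta f_k$ on the compact $\mathcal{P}_0$ from \ref{cond:differentiability}. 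For the $\beta$-direction I would use \ref{cond:fk_upper_bound}, which bounds $f_k(x\mid y;\beta)$, $\nabla_x f_k(x\mid y;\beta)$ and $\nabla_\beta f_k(x\mid y;\beta)$ uniformly, to control the variation of the conditional-density factors, and \ref{cond:fk_information} for the marginal-score factor. The crucial feature is that the weight $1/q_k(x)$ inside $v$ combines with the integrating measure $q_k(x)\,dx$ in the $L_2(q_k)$ norm, so the envelope computation reduces to an importance-sampling second moment of the form $\int q_k(x)^{-1}\{\cdots\}^2\,dx$, which is finite after applying the boundedness of $f_k(x\mid y;\beta)$ and Cauchy--Schwarz against $f_k(y;\theta)$.

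The step I expect to be the main obstacle is establishing the uniform Lipschitz bound in $\beta$ with a square-integrable envelope. The integrand already carries the $\beta$-score $\nabla_\beta\log f_k$, so controlling its variation over $\beta\in\mathcal{P}_0$ genuinely tests what \ref{cond:differentiability}--\ref{cond:fk_information} provide: one must bound the $\beta$-sensitivity of both the conditional-density derivative $\nabla_\beta f_k(x\mid y;\beta)$ and the marginal score $\nabla_\beta\log f_k(y;\beta)$ using only the first-order uniform bounds of \ref{cond:fk_upper_bound} and the second-moment bound of \ref{cond:fk_information}, and then verify that the resulting envelope remains integrable against $q_k$ despite the $1/q_k$ weight. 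Once this envelope and modulus are secured, the passage to a finite bracketing integral and hence to the $q_k$-Donsker conclusion is routine and, unlike Corollary~\ref{cor:IS_weight_Donsker}, requires no tail condition on $q_k$.
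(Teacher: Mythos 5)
Your route is genuinely different from the paper's. The paper does not treat this as a parametric class at all: it bounds $\bigl\|\nabla_x\{v(x;\theta,\beta)-v(x;\theta_0,\beta)\}\bigr\|$ on each unit cube $I_j$ (after a Taylor expansion in $\theta$ that produces the factor $E_{\dot\theta}\{\nabla_\beta\log f_k(Y;\dot\theta)u(Y,x;\beta)^T\}(\theta-\theta_0)$, controlled via Cauchy--Schwarz, \ref{cond:fk_upper_bound} and \ref{cond:fk_information}), arriving at $M_j\le\bigl(c^{(1)}+c^{(2)}\sup_{x\in I_j}\|\nabla_x\log q_k(x)\|\bigr)\sup_{x\in I_j}q_k(x)^{-1}\sup_{\theta}\|\theta-\theta_0\|$, and then invokes Lemma \ref{lem:Vaart} together with the summability condition \ref{cond:Q_Donsker}. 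Your Taylor/Cauchy--Schwarz treatment of the $\theta$-direction matches the paper's; the divergence is that you replace the smoothness-in-$x$ bracketing of Lemma \ref{lem:Vaart} by the Lipschitz-in-parameter bracketing of a finite-dimensional class.

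The gap is in your central claim that the $L_2(q_k)$ envelope is finite under \ref{cond:differentiability}--\ref{cond:fk_information} alone, so that ``no tail condition on $q_k$'' is needed. The Lipschitz modulus $L(x)$ you need carries the factor $q_k(x)^{-1}$ from $u(y,x;\beta)=\nabla_\beta\log f_k(y,x;\beta)f_k(x\mid y;\beta)/q_k(x)$, so $\int L(x)^2q_k(x)\,dx$ reduces to $\int q_k(x)^{-1}\{\cdots\}^2\,dx$, exactly as you say --- but the boundedness of $f_k(x\mid y;\beta)$ from \ref{cond:fk_upper_bound} and Cauchy--Schwarz against $f_k(y;\theta)$ do not make this finite. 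Uniform bounds on the conditional density give at best a bounded integrand inside $\{\cdots\}$, and $\int q_k(x)^{-1}\cdot(\text{bounded})\,dx$ diverges on the unbounded support $\mathcal{X}_k$; finiteness requires the mixture $\int f_k(x\mid y;\beta)f_k(y;\theta)\,dy$ to decay at least as fast as $q_k$ in the tails, which is a tail condition of precisely the kind encoded in \ref{cond:IS_q}--\ref{cond:envelop2} (and, in the paper's route, in \ref{cond:Q_Donsker}, which its proof explicitly invokes even though the lemma statement omits it). Note also that \ref{cond:fk_information} bounds second moments only of the marginal score $\nabla_\beta\log f_k(Y_{k1};\beta)$, not of the joint score $\nabla_\beta\log f_k(Y,X;\beta)$ appearing in $u$, so the moment side of your envelope is likewise not covered by the stated hypotheses. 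Your overall strategy is viable, but only after importing a condition relating $q_k$ to the model's marginal density of $X$; as written, the key integrability step fails.
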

\begin{proof}
Similar to Corollary \ref{cor:IS_weight_Donsker}, this can be proved
by finding upper bounds of the derivatives of functions in the class
in $I_{j}$ and applying Lemma \ref{lem:Vaart}. By \ref{cond:differentiability}, $v(x;\theta,\beta)-v(x;\theta_{0},\beta)$
is differentiable. By the Taylor expansion, 
\[
v(x;\theta,\beta)-v(x;\theta_{0},\beta)=E_{\dot{\theta}}\left\{ \nabla_{\beta}\log f_{k}(Y;\dot{\theta})u(Y,x;\beta)^{T}\right\} (\theta-\theta_{0}),
\]
where $\dot{\theta}$ is between $\theta$ and $\theta_{0}$ and on
the line connecting them. For $x\in I_{j}$, $\beta\in\mathcal{P}_{0}$
and all $j=1,\dots,l_{k}$, 
\begin{align}
 & \left\Vert \frac{\partial}{\partial x_{j}}\{v(x;\theta,\beta)-v(x;\theta_{0},\beta)\}\right\Vert \nonumber \\
\leq & E_{\dot{\theta}}\left[\left\Vert \nabla_{\beta}\log f_{k}(Y;\dot{\theta})\left\{ \nabla_{\beta}\log f_{k}(Y,x;\beta)^{T}\frac{\partial}{\partial x_{j}}\frac{f_{k}(x| Y;\beta)}{q_{k}(x)}+\frac{\partial}{\partial x_{j}}\nabla_{\beta}\log f_{k}(Y,x;\beta)^{T}\frac{f_{k}(x| Y;\beta)}{q_{k}(x)}\right\} (\theta-\theta_{0})\right\Vert \right].\label{eq:lem2_1}
\end{align}
In the right hand side of the above, $\log f_{k}(Y,x;\beta)=\log f_{k}(Y;\beta)+\log f_{k}(x| Y;\beta)$
and
\begin{align*}
\left\{ \frac{\partial}{\partial x_{j}}\nabla_{\beta}\log f_{k}(Y,x;\beta)\right\} f_{k}(x| Y;\beta) & =\left\{ \frac{\partial}{\partial x_{j}}\nabla_{\beta}\log f_{k}(x| Y;\beta)\right\} f_{k}(x| Y;\beta)\\
 & =\frac{\partial}{\partial x_{j}}\nabla_{\beta}f_{k}(x| Y;\beta)-\nabla_{\beta}\log f_{k}(x| Y;\beta)\frac{\partial}{\partial x_{j}}f_{k}(x\mid Y;\beta).
\end{align*}
Then the right hand side of (\ref{eq:lem2_1}) can be rewritten as
\begin{align*}
 & E_{\dot{\theta}}\left[\left\Vert \nabla_{\beta}\log f_{k}(Y;\dot{\theta})\right.\right.\\
 & \cdot\left\{ \left(\nabla_{\beta}\log f_{k}(Y;\beta)+\nabla_{\beta}\log f_{k}(x\mid Y;\beta)\right)^{T}\left(\frac{\frac{\partial}{\partial x_{j}}f_{k}(x\mid Y;\beta)}{q_{k}(x)}-\frac{f_{k}(x\mid Y;\beta)}{q_{k}(x)}\frac{\partial}{\partial x_{j}}\log q_{k}(x)\right)\right.\\
 & \left.\left.\left.+\frac{\nabla_{\beta}\frac{\partial}{\partial x_{j}}f_{k}(x\mid Y;\beta)}{q_{k}(x)}-\nabla_{\beta}\log f_{k}(x\mid Y;\beta)\frac{\frac{\partial}{\partial x_{j}}f_{k}(x\mid Y;\beta)}{q_{k}(x)}\right\} ^{T}(\theta-\theta_{0})\right\Vert \right]\\
= & E_{\dot{\theta}}\left[\left\Vert \nabla_{\beta}\log f_{k}(Y;\dot{\theta})\left\{ \nabla_{\beta}\log f_{k}(Y;\beta)\left(\frac{\partial}{\partial x_{j}}f_{k}(x\mid Y;\beta)-f_{k}(x\mid Y;\beta)\frac{\partial}{\partial x_{j}}\log q_{k}(x)\right)\right.\right.\right.\\
 & \left.\left.\left.-\nabla_{\beta}f_{k}(x\mid Y;\beta)\frac{\partial}{\partial x_{j}}\log q_{k}(x)+\nabla_{\beta}\frac{\partial}{\partial x_{j}}f_{k}(x\mid Y;\beta)\right\} ^{T}q_{k}(x)^{-1}(\theta-\theta_{0})\right\Vert \right]\\
\leq & E_{\dot{\theta}}\left\{ \left\Vert \nabla_{\beta}\log f_{k}(Y;\dot{\theta})\right\Vert ^{2}\right\} ^{1/2}\left[E_{\dot{\theta}}\left\{ \left\Vert \nabla_{\beta}\log f_{k}(Y;\beta)\right\Vert ^{2}\right\} ^{1/2}\left\{ c_{2}+c_{1}\sup_{x\in I_{j}}\left\Vert \frac{\partial}{\partial x_{j}}\log q_{k}(x)\right\Vert \right\} \right.\\
 & \left.+c_{3}\sup_{x\in I_{j}}\left\Vert \frac{\partial}{\partial x_{j}}\log q_{k}(x)\right\Vert +\sup_{x\in I_{j},y,\beta}\left\Vert \nabla_{\beta}\frac{\partial}{\partial x_{j}}f_{k}(x\mid y;\beta)\right\Vert \right]\sup_{x\in I_{j}}q_{k}(x)^{-1}\sup_{\theta\in\mathcal{P}_{0}}\|\theta-\theta_{0}\|,
\end{align*}
where the inequality above holds by Cauchy-Schwarz inequality and
\ref{cond:fk_upper_bound}. By \ref{cond:differentiability} and \ref{cond:fk_upper_bound}, $\sup_{x\in I_{j},y,\beta}\left\Vert \nabla_{\beta}\frac{\partial}{\partial x_{j}}f_{k}(x\mid y;\beta)\right\Vert <\infty$.
Then by \ref{cond:fk_upper_bound} , \ref{cond:fk_information} and that $\mathcal{P}_{0}$ is compact, for $x\in I_{j}$,
$\beta\in\mathcal{P}_{0}$ and all $j=1,\dots,l_{k}$,
\[
\left\Vert \frac{\partial}{\partial x_{j}}\{v(x;\theta,\beta)-v(x;\theta_{0},\beta)\}\right\Vert \leq\left(c^{(1)}+c^{(2)}\sup_{x\in I_{j}}\left\Vert \frac{\partial}{\partial x_{j}}\log q_{k}(x)\right\Vert \right)\sup_{x\in I_{j}}q_{k}(x)^{-1}\sup_{\theta\in\mathcal{P}_{0}}\|\theta-\theta_{0}\|,
\]
for some constants $c^{(1)}$ and $c^{(2)}$. Similarly to the proof
of Corollary \ref{cor:IS_weight_Donsker}, by \ref{cond:Q_Donsker} the lemma holds.
\end{proof}

\begin{proof}[Proof of Theorem \ref{thm:estimated_score}] Since $Y$ and $X_{1:N}$ are independent, for
simplicity the condition notions will be omitted hereafter, and lower
case alphabets are used to indicate the conditioning when needed.
Since $\widehat{cs}_{1:K}(Y;\widehat{\theta})-cs_{1:K}(Y;\widehat{\theta})$
is the stack of $\widehat{cs}_{k}(Y;\widehat{\theta})-cs_{k}(Y;\widehat{\theta})$,
it is sufficient to decompose the latter. Consider a fixed $N$ and
$k\in\{1,\dots,K\}$. Let 
\[
\widetilde{cs}_{k}(Y;\beta)=\frac{1}{n_{k}}\sum_{j=1}^{n_{k}}\frac{1}{N}\sum_{i=1}^{N}\bigtriangledown_{\beta}\log f_{k}(Y_{kj},x_{kj}^{(i)};\beta)\frac{f_{k}(Y_{kj},x_{kj}^{(i)};\beta)}{f_{k}(Y_{kj};\beta)q_{k}(x_{kj}^{(i)})}.
\]
First we note that the Monte Carlo error of $\widehat{f}_{k}(y;\widehat{\theta})$
in $\widehat{cs}_{k}(Y;\beta)$ is negligible, which implies the difference
between $\widetilde{cs}_{k}(Y;\beta)$ and $\widehat{cs}_{k}(Y;\beta)$
is negligible. This can be seen by verifying conditions of and applying
the proof of \citet[Theorem 19.26]{van2000asymptotic}. By \ref{cond:IS_q} and
\ref{cond:envelop1}, $\sup_{\beta\in B_{\theta_{0}}}\|\psi_{\widetilde{Y}}(x_{k};\beta)\|_{H_{k}}=\sup_{\theta\in\mathcal{P}_{0},\beta\in B_{\theta_{0}}}E_{\theta}\left\{ \psi_{Y_{k1}}(x_{k};\beta)^{2}\right\} $
is integrable with respect to $x_{k}$ under the probability density
$q_{k}(\cdot)$, and with $x_{k}$ and $\beta$ fixed, $\psi_{\widetilde{Y}}(x_{k};\beta)$
has bounded norm for $\widetilde{Y}\in H_{k}$. Then $\sup_{\beta\in B_{\theta_{0}}}\|\psi_{\widetilde{Y}}(x_{k};\beta)\|_{H_{k}}$
is the envelope function of $\{\psi_{\widetilde{Y}}(x_{k};\beta):\beta\in B_{\theta_{0}},\widetilde{Y}\in H_{k}\}$.
By \ref{cond:IS_q} and \ref{cond:envelop1} again and that $\mathcal{P}_{0}$ is compact, $\sup_{\theta\in\mathcal{P}_{0},\beta\in B_{\theta_{0}}}E_{\theta}\left\{ \psi_{Y_{k1}}(x_{k};\beta)-\psi_{Y_{k1}}(x_{k};\theta_{0})\right\} ^{4}$
is integrable with respect to $x_{k}$ under the probability density
$q_{k}(\cdot)$. Then by the dominated convergence theorem, as $\beta\rightarrow\theta_{0}$
we have 
\begin{align*}
\left\Vert E_{q}\left\{ \psi_{\widetilde{Y}}(x_{k};\beta)-\psi_{\widetilde{Y}}(x_{k};\theta_{0})\right\} ^{2}\right\Vert _{H}^{2} & =\sup_{\theta\in\mathcal{P}_{0}}E_{\theta}\left[E_{q}\left\{ \psi_{Y_{k1}}(x_{k};\beta)-\psi_{Y_{k1}}(x_{k};\theta_{0})\right\} ^{2}\right]^{2}\\
 & \leq\sup_{\theta\in\mathcal{P}_{0}}E_{q(\cdot)f_{k}(\cdot;\theta)}\left\{ \psi_{Y_{k1}}(x_{k};\beta)-\psi_{Y_{k1}}(x_{k};\theta_{0})\right\} ^{4}\rightarrow0.
\end{align*}
Then together with Corollary \ref{cor:IS_weight_Donsker}, conditions
are satisfied to apply the proof of \citet[Theorem 19.26]{van2000asymptotic}.
By applying results therein, as $\widehat{\theta}$ converges in probability
and $N\rightarrow\infty$, we have that the empirical process $\sqrt{N}\{N^{-1}\sum_{i=1}^{N}\psi_{\widetilde{Y}}(X_{k}^{(i)};\widehat{\theta})-1\}$
weakly converges uniformly over $\widetilde{Y}\in H_{k}$ where $X_{k}^{(1)},X_{k}^{(2)},\dots$
are i.i.d random variables with density $q_{k}(\cdot)$. Therefore
\[
\frac{\widehat{f}(Y_{kj};\widehat{\theta})}{f(Y_{kj};\widehat{\theta})}=\frac{1}{N}\sum_{i=1}^{N}\psi_{Y_{kj}}(x_{kj}^{(i)};\widehat{\theta})=1+O_{p}\left(\frac{1}{\sqrt{N}}\right),
\]
uniformly over $j=1,\dots,n_{k}$, and we only need to consider $\widetilde{cs}_{k}$. 

Recall that $u(Y_{kj},x_{kj}^{(i)};\beta)$ are terms in the summation
in $\widetilde{cs}_{k}(Y;\beta)$. Define $u(y;\beta):=\triangledown_{\beta}\log f_{k}(y;\beta)$,
then $u(Y_{kj};\beta)$ are terms in the summation in $cs_{k}(Y;\beta)$.
Fisher's identity gives $u(Y_{kj};\beta)=E_{q_{k}}\{u(Y_{kj},X;\beta)\mid Y_{kj}\}$.
Define several terms having zero mean as follows, 
\begin{align*}
 & \overline{e}(Y,x^{(i)};\theta,\beta):=\frac{1}{n_{k}}\sum_{j=1}^{n_{k}}\left\{ u(Y_{kj},x_{kj}^{(i)};\beta)-v(x_{kj}^{(i)};\theta,\beta)\right\} ,\\
 & V(Y,x^{(i)};\theta,\beta):=\overline{e}(Y,x^{(i)};\theta,\beta)-E_{q}\{\overline{e}(Y,X;\theta,\beta)\}\text{ and }\\
 & U(x^{(i)};\theta,\beta):=\frac{1}{n_{k}}\sum_{j=1}^{n_{k}}v(x_{kj}^{(i)};\theta,\beta)-E_{q_{k}}\{v(X_{k1};\theta,\beta)\}.
\end{align*}
Note that 
\begin{align*}
 & E_{q}\{\overline{e}(Y,X;\theta,\beta)\}=\frac{1}{n_{k}}\sum_{j=1}^{n_{k}}u(Y_{kj};\beta)-E_{\theta}\{u(Y_{k1};\beta)\}=cs_{k}(Y;\beta)-E_{\theta}\{u(Y_{k1};\beta)\}\mbox{ and }\\
 & E_{q_{k}}\{v(X_{k1};\theta,\beta)\}=E_{\theta}\{u(Y_{k1};\beta)\}.
\end{align*}
Then we have the following decomposition, separating deviations due
to $Y$ and $X_{1:N}$. 
\begin{align}
 & \{\widetilde{cs}_{k}(Y;\widehat{\theta})-\widetilde{cs}_{k}(Y_{obs};\widehat{\theta})\}-\{cs_{k}(Y;\widehat{\theta})-cs_{k}(Y_{obs};\widehat{\theta})\}\nonumber \\
= & \frac{1}{\sqrt{N}}c_{1}(X^{(1:N)},Y,Y_{obs};\widehat{\theta})+\frac{1}{N}\sum_{i=1}^{N}\left\{ U(X^{(i)};\theta,\widehat{\theta})-U(X^{(i)};\theta_{0},\widehat{\theta})\right\} ,\label{eq:Thm2_decomp}\\
\text{where } & c_{1}(X^{(1:N)},Y,Y_{obs};\beta)=\frac{1}{\sqrt{N}}\sum_{i=1}^{N}\left\{ V(Y,X^{(i)};\theta,\beta)-V(Y_{obs},X^{(i)};\theta_{0},\beta)\right\} .\nonumber 
\end{align}
For the first term in (\ref{eq:Thm2_decomp}), we first claim that
$c_{1}(X^{(1:N)},Y,Y_{obs};\beta)/\sqrt{N}$ is dominated by $cs_{k}(Y;\beta)-cs_{k}(Y_{obs};\beta)$
for large $N$ as $n\rightarrow\infty$ where $\beta\in B_{\theta_{0}}$
is a constant. The claim is proved below. Since $E_{q}\{V(y,X;\theta,\beta)\}=0$
and $E_{\theta}\{V(Y,x;\theta,\beta)\}=0$, by the law of total variance, 

\begin{align}
 & Var\left\{ c_{1}(X^{(1:N)},Y,Y_{obs};\beta)\right\} =E\left[Var_{q}\left\{ c_{1}(X^{(1:N)},Y,Y_{obs};\beta)\right\} \right]\nonumber \\
= & E\left[Var_{q}\left\{ V(Y,X^{(1)};\theta,\beta)-V(Y_{obs},X^{(1)};\theta_{0},\beta)\right\} \right]\nonumber \\
= & Var\left\{ V(Y,X^{(1)};\theta,\beta)-V(Y_{obs},X^{(1)};\theta_{0},\beta)\right\} \nonumber \\
\leq & Var\left\{ \overline{e}(Y,X^{(1)};\theta,\beta)-\overline{e}(Y_{obs},X^{(1)};\theta_{0},\beta)\right\} +Var\left\{ cs_{k}(Y;\beta)-cs_{k}(Y_{obs};\beta)\right\} .\label{eq:Thm2_1}
\end{align}
In the above, since $E_{\theta}\{\overline{e}(Y,X^{(1)};\theta,\beta)\}=0$,
by the law of total variance we have  
\begin{align}
 & Var\left\{ \overline{e}(Y,X^{(1)};\theta,\beta)-\overline{e}(Y_{obs},X^{(1)};\theta_{0},\beta)\right\} \nonumber \\
\leq & E_{q}\left[Var_{\theta}\left\{ \frac{1}{n_{k}}\sum_{j=1}^{n_{k}}u(Y_{kj},X_{kj}^{(1)};\beta)\right\} +Var_{\theta_{0}}\left\{ \frac{1}{n_{k}}\sum_{j=1}^{n_{k}}u(Y_{kj},X_{kj}^{(1)};\beta)\right\} \right].\label{eq:Thm2_2}
\end{align}

By \ref{cond:average_joint_information} and \ref{cond:IS_q}, $a_{n}^{2}Var_{q(\cdot)f_{k}(\cdot;\theta)}\left\{ n_{k}^{-1}\sum_{j=1}^{n_{k}}u(Y_{kj},X_{kj};\beta)\right\} <\infty$,
and by (C3), $a_{n}^{2}Var\{cs_{k}(Y;\beta)\}<\infty$ and away from
$0$, for all $\theta\in\mathcal{P}_{0}$, $\beta\in B_{\theta_{0}}$
and large enough $n$. Therefore $Var\{c_{1}(X^{(1:N)},Y,Y_{obs};\beta)\}=O(a_{n}^{-2})$,
for all $\beta\in B_{\theta_{0}}$. Since $Var\{c_{1}(X^{(1:N)},Y,Y_{obs};\beta)\}$
is continuous in $\beta$, by the continuous mapping theorem, $Var\{c_{1}(X^{(1:N)},Y,Y_{obs};\widehat{\theta})\}=O(a_{n}^{-2})$.
Then since $E\{V(Y,X^{(1)};\theta,\beta)\}=0$, by Markov inequality
we have $c_{1}(X^{(1:N)},Y,Y_{obs};\widehat{\theta})=O_{p}(a_{n}^{-1})$
as $n\rightarrow\infty$. Therefore the claim holds. 

For the second term in (\ref{eq:Thm2_decomp}), 
\begin{align*}
 & U(X^{(i)};\theta,\beta)-U(X^{(i)};\theta_{0},\beta)\\
 & =\frac{1}{n_{k}}\sum_{j=1}^{n_{k}}v(X_{kj}^{(i)};\theta,\beta)-\frac{1}{n_{k}}\sum_{j=1}^{n_{k}}v(X_{kj}^{(i)};\theta_{0},\beta)-E_{q_{k}}\{v(X_{k1};\theta,\beta)\}+E_{q_{k}}\{v(X_{k1};\theta_{0},\beta)\}\\
\Rightarrow & \frac{1}{N}\sum_{i=1}^{N}\{U(X^{(i)};\theta,\beta)-U(X^{(i)};\theta_{0},\beta)\}=\frac{1}{\sqrt{N}}c_{2}(X_{1:N},\theta,\beta)(\theta-\theta_{0}),\\
 & \text{where }c_{2}(X_{1:N},\theta,\beta)=\frac{1}{n_{k}}\sum_{j=1}^{n_{k}}\frac{1}{\sqrt{N}}\sum_{i=1}^{N}(E_{\dot{\theta}}-E_{q_{k},\dot{\theta}})\left\{ \nabla_{\beta}\log f_{k}(Y_{k1};\dot{\theta})u(Y_{k1},X_{kj}^{(i)};\beta)^{T}\right\} ,
\end{align*}
and $\dot{\theta}$ is between $\theta$ and $\theta_{0}$. In the
proof of Lemma \ref{lem:v_class_Donsker}, it has been shown that
\begin{align*}
\left\{ E_{\dot{\theta}}\left\{ \nabla_{\beta}\log f_{k}(Y_{k1};\dot{\theta})u(Y_{k1},x;\beta)^{T}\right\} :\theta\in\mathcal{P}_{0},\beta\in\mathcal{P}_{0}\right\} 
\end{align*}
 is a Q-Donsker class. Similarly to the first part of the proof, by
\ref{cond:envelop2} and \ref{cond:IS_q}, the class has an integrable envelope function, and
by the dominated convergenec theorem, conditions are satisfied to
apply the proof of \citet[Theorem 19.26]{van2000asymptotic}. Then
we have the empirical process $N^{-1/2}\sum_{i=1}^{N}(E_{\dot{\theta}}-E_{q_{k},\dot{\theta}})\left\{ \nabla_{\beta}\log f_{k}(Y_{k1};\dot{\theta})u(Y_{k1},X_{k}^{(i)};\widehat{\theta})^{T}\right\} $
weakly converges uniformly over $\theta\in\mathcal{P}_{0}$ with $X_{k}^{(1)},X_{k}^{(2)},\dots$
i.i.d with density $q_{k}(\cdot)$. Therefore we have $c_{2}(X_{1:N},\theta,\widehat{\theta})=O_{p}(1)$
for all $\theta\in\mathcal{P}_{0}$.
\end{proof}

\section{Proof of Theorem \ref{thm:ABC_heteo_rates}} \label{append:proof_sec4}

\subsection{Notations and set-up}

For completeness the required conditions from \citet{li2018convergence}
are stated below, and denoted by (C4{*}) and (C5{*}) for convenience.
\begin{description}
\item [{(C4{*})\label{kernel_prop}}] 

\begin{itemize}
\item[(i)] $\int\bv K(\bv)\,d\bv=0$. 
\item[(ii)] $\int\prod_{k=1}^{l}v_{i_{k}}K(\bv)\,d\bv<\infty$ for any coordinates
$(v_{i_{1}},\dots,v_{i_{l}})$ of $\bv$ and $l\leq p+6$. 
\item[(iii)] $K(\bv)=K(\|\bv\|_{\Lambda}^{2})$ where $\|\bv\|_{\Lambda}^{2}=\bv^{T}\Lambda\bv$
and $\Lambda$ is a diagonal matrix, and $K(\bv)$ is a decreasing
function of $\|\bv\|_{\Lambda}$. 
\item[(iv)] $K(\bv)=O(e^{-c_{1}\|\bv\|^{\alpha_{1}}})$ for some $\alpha_{1}>0$
and $c_{1}>0$ as $\|\bv\|\rightarrow\infty$. 
\end{itemize}
\item [{(C5{*})\label{sum_approx_tail}}] $\sup_{\theta\in\mathcal{P}_{0}^{c}}f_{W_{n}}(w|\theta)=O(e^{-c_{2}\|w\|^{\alpha_{2}}})$
as $\|w\|\rightarrow\infty$ for some positive constants $c_{2}$
and $\alpha_{2}$, and $A(\theta)$ is bounded in $\mathcal{P}$.
\end{description}
For a constant $d\times p$ matrix $A$, let the minimum and maximum eigenvalues of $A^{T}A$ be $\lambda_{{\rm min}}^{2}(A)$ and $\lambda_{{\rm max}}^{2}(A)$ where $\lambda_{{\rm min}}(A)$ and $\lambda_{{\rm max}}(A)$ are non-negative. Obviously, for any $p$-dimension vector $x$, $\lambda_{{\rm min}}(A)\|x\|\leq\|Ax\|\leq\lambda_{{\rm max}}(A)\|x\|$. For two matrices $A$ and $B$, we say $A$ is bounded by $B$ or $A\leq B$ if $\lambda_{{\rm max}}(A)\leq\lambda_{{\rm min}}(B)$. For a set of matrices $\{A_{i}:i\in I\}$ for some index set $I$, we say it is bounded if $\lambda_{{\rm max}}(A_{i})$ are uniformly bounded in $i$. For a non-negative function $h(x)$, integrable in $\mathbb{R}^{l}$, denote the normalised function $h(x)/\int_{\mathbb{R}^{l}}h(x)\,dx$ by $h(x)^{({\rm norm})}$. For a vector $x$, denote a general polynomial of elements of $x$ with degree up to $l$ by $P_{l}(x)$. 

For any fixed $\delta<\delta_{0}$, let $B_{\delta}$ be the neighborhood $\{\theta:\|\theta-\theta_{0}\|<\delta\}$. Let $\pi_{\delta}(\theta)$ be $\pi(\theta)$ truncated in $B_{\delta}$. Let $t=t(\theta)=a_{n,\varepsilon}(\theta-\theta_{0})$ and $v=v(s)=\varepsilon_{n}^{-1}(s-s_{{\rm obs}})$, the rescaled versions $\theta$ and $s$. For any $A\in\mathscr{B}^{p}$, let $t(A)$ be the set $\{\phi:\phi=t(\theta)\text{ for some }\theta\in A\}$. 

Recall that $\ftil_{n}(s\mid\theta)$ denotes the normal density with
mean $s(\theta)$ and covariance matrix $a_{n,diag}^{-1}A(\theta)a_{n,diag}^{-1}$.
Define $\tPi_{\varepsilon}(\theta\in A\mid s_{{\rm obs}})$ to be
the normal counterpart of $\Pi_{\varepsilon}(\theta\in A\mid s_{{\rm obs}})$
with truncated prior, obtained by replacing $\pi(\theta)$ and $f_{n}(s\mid\theta)$
in $\Pi_{\varepsilon}$ by $\pi_{\delta}(\theta)$ and $\widetilde{f}_{n}(s\mid\theta)$.
So let 
\[
\tpi_{\varepsilon}(\theta,s\mid s_{{\rm obs}})=\frac{\pi_{\delta}(\theta)\ftil_{n}(s\mid\theta)K\{\varepsilon_{n}^{-1}(s-s_{{\rm obs}})\}}{\int_{B_{\delta}}\int_{\mathbb{R}^{d}}\pi_{\delta}(\theta)\ftil_{n}(s\mid\theta)K\{\varepsilon_{n}^{-1}(s-s_{{\rm obs}})\}\,d\theta ds},
\]
$\tpi_{\varepsilon}(\theta\mid s_{{\rm obs}})=\int_{\mathbb{R}^{d}}\tpi_{\varepsilon}(\theta,s\mid s_{{\rm obs}})\,ds$
and $\tPi_{\varepsilon}(\theta\in A\mid s_{{\rm obs}})$ be the distribution
function with density $\tpi_{\varepsilon}(\theta\mid s_{{\rm obs}})$.
Denote the mean of $\tPi_{\varepsilon}$ by $\ttheta_{\varepsilon}$. 

For $A\subset\mathbb{R}^{p}$ and a scalar function $h(\theta,s)$,
let 
\[
\pi_{A}(h)=\int_{A}\int_{\mathbb{R}^{d}}h(\theta,s)\pi(\theta)f_{n}(s\mid\theta)K\{\varepsilon_{n}^{-1}(s-s_{{\rm obs}})\}\varepsilon_{n}^{-d}\,dsd\theta,
\]
and 
\[
\tpi_{A}(h)=\int_{A}\int_{\mathbb{R}^{d}}h(\theta,s)\pi_{\delta}(\theta)\ftil_{n}(s\mid\theta)K\{\varepsilon_{n}^{-1}(s-s_{{\rm obs}})\}\varepsilon_{n}^{-d}\,dsd\theta.
\]
Then $\Pi_{\varepsilon}(\theta\in A\mid s_{{\rm obs}})=\pi_{A}(1)/\pi_{\mathcal{P}}(1)$
and its normal counterpart $\tPi_{\varepsilon}(\theta\in A\mid s_{{\rm obs}})=\tpi_{A}(1)/\tpi_{\mathcal{P}}(1)$.

For two scalar sequences $x_{n}$ and $y_{n}$, we use the following
limit notations: $x_{n}=o(y_{n})$ if $x_{n}/y_{n}\rightarrow0$;
$x_{n}=O(y_{n})$ if $x_{n}/y_{n}\rightarrow c<\infty$; $x_{n}=\Theta(y_{n})$
if $x_{n}/y_{n}\rightarrow c\in(0,\infty)$; and $x_{n}=\omega(y_{n})$
if $x_{n}/y_{n}\rightarrow\infty$. Let $I_{m}$ be the identity matrix
with dimension $m\times m$. Given matrices $X_{1},\dots,X_{m}$,
let $diag(X_{1},\dots,X_{m})$ denote the block diagonal matrix with
diagonal blocks $X_{1},\dots,X_{m}$. Recall that the summary statistics
are $S_{1},\dots,S_{\tau}$, $S_{k}\in\mathbb{R}^{d_{s,k}}$ and
$d=\sum_{k=1}^{\tau}d_{s,k}$. Given an m-dimensional vector $x=(x_{1},\dots,x_{m})$,
let $x_{diag}$ denote $diag(x_{1},\dots,x_{m})$. For example, $a_{n,diag}$
is the diagonal matrix corresponding to $a_{n}:=(a_{1n}\boldsymbol{1}_{d_{s,1}}^{T},\dots,a_{\tau n}\boldsymbol{1}_{d_{s,\tau}}^{T})^{T}$
where $\boldsymbol{1}_{m}$ is the all-one vector with dimension $m$.
Denote $A(\theta_{0})^{-1/2}$ and $A(\theta_{0})^{1/2}$ by $A_{\theta_{0},-1/2}$
and $A_{\theta_{0},1/2}$ for short. Let $W_{{\rm obs}}=a_{n,diag}A_{\theta_{0},-1/2}\{s_{{\rm obs}}-s(\theta_{0})\}.$
By Condition \ref{sum_conv}, $W_{{\rm obs}}\rightarrow Z$ in distribution
as $n\rightarrow\infty$, where $Z\sim N(0,I_{d})$. Let $\widetilde{\pi}_{\varepsilon,tv}(t,v):=a_{n,\varepsilon}^{-d}\pi_{\delta}(\theta_{0}+a_{n,\varepsilon}^{-1}t)\widetilde{f}_{n}(s_{\text{obs}}+\varepsilon_{n}v\mid\theta_{0}+a_{n,\varepsilon}^{-1}t)K(v)$.
For any $A\in\mathscr{B}^{p}$, we rewrite $\tPi_{\varepsilon}$ as,
\begin{equation}
\tPi_{\varepsilon}(\theta\in A\mid s_{{\rm obs}})=\int_{\mathbb{R}^{d}}\int_{t(B_{\delta})}\tPi(\theta\in A\mid s_{{\rm obs}}+\varepsilon_{n}v)\tpi_{\varepsilon,tv}(t,v)^{({\rm norm})}\,dtdv.\label{eq:ABC_posterior_altform1}
\end{equation}

Let $k_{\varepsilon}=\sum_{k=1}^{\tau}k\mathbbm{1}_{\{a_{n,\varepsilon}=a_{kn}\}}+(\tau+1)\mathbbm{1}_{\{\varepsilon_{n}=\omega(a_{\tau n}^{-1})\}}$,
i.e.~the index of summary the convergence rate of which $a_{n,\varepsilon}$
is equal to when $\varepsilon_{n}^{-1}$ is between the convergence
rates of two summaries or the arbitrary $\tau+1$ when $\varepsilon_{n}^{-1}$
is slower than than all $a_{kn}$. We use the superscripts $(i)$,
where $i=1,2,3$, to indicate groups or divisions with elements corresponding
to summary statistics with convergence rates faster, equal to and
slower than $a_{n,\varepsilon}$, and the division is determined by
$k_{\varepsilon}$. For example, let $d^{(1)}=\sum_{k=1}^{k_{\varepsilon}-1}d_{s,k}$
be the dimension of summary statistics with convergence rate faster
than $a_{n,\varepsilon}$, i.e.~being $\omega(a_{n,\varepsilon})$;
$d^{(2)}=d_{s,k_{\varepsilon}}$ be that of those with rates equal;
and $d^{(3)}=d-d^{(1)}-d^{(2)}$ be that of those with rates slower.
Let $a_{n,diag}^{(1)}$, $a_{n,diag}^{(2)}$ and $a_{n,diag}^{(3)}$
be the diagonal submatrices of $a_{n,diag}$ with corresponding blocks,
i.e. $a_{n,diag}^{(1)}=diag(a_{1n}I_{d_{s,1}},\dots,a_{k_{\varepsilon}-1,n}I_{d_{s,k_{\varepsilon}-1}}),$
$a_{n,diag}^{(2)}=a_{k_{\varepsilon}n}$ and $a_{n,diag}^{(3)}=diag(a_{k_{\varepsilon}+1,n}I_{d_{s,k_{\varepsilon}+1}},\dots,a_{\tau,n}I_{d_{s,\tau}})$.
Let the partition of columns of $A_{\theta_{0},-1/2}$ be $(A_{\theta_{0},-1/2}^{(1)},A_{\theta_{0},-1/2}^{(2)},A_{\theta_{0},-1/2}^{(3)})$
the column indices of the submatrices are $1:d^{(1)}$, $(d^{(1)}+1):(d^{(1)}+d^{(2)})$
and $(d^{(1)}+d^{(2)}+1):d$ respectively. 

In order to discuss various cases for convergence results depending
on the rate of $\varepsilon_{n}\rightarrow0$, convergence rates below
will be used. Given the definition of $a_{n,\varepsilon}$ and $k_{\varepsilon}$,
as $n\rightarrow\infty$ we have 
\begin{align*}
 & a_{n,\varepsilon}^{-1}a_{k_{\varepsilon}n}\begin{cases}
=1, & \text{when }k_{\varepsilon}\leq\tau,\\
\rightarrow\infty, & \text{when }k_{\varepsilon}=\tau+1,
\end{cases}\ a_{n,\varepsilon}^{-1}a_{kn}\rightarrow\begin{cases}
\infty, & \text{when }k\leq k_{\varepsilon},\\
1 & \text{when }k=k_{\varepsilon},\\
0, & \text{when }k>k_{\varepsilon},
\end{cases}\\
 & a_{n,\varepsilon}\varepsilon_{n}\begin{cases}
\rightarrow c\in[0,\infty), & \text{when }k_{\varepsilon}\leq\tau,\\
=1, & \text{when }k_{\varepsilon}=\tau+1,
\end{cases}\ a_{n,\varepsilon}^{-1}a_{n,diag}\rightarrow diag(\infty I_{d^{(1)}},I_{d^{(2)}},0I_{d^{(3)}}),\\
 & \text{and }\varepsilon_{n}a_{n,diag}\rightarrow diag(\infty I_{d^{(1)}},cI_{d^{(2)}},0I_{d^{(3)}}).
\end{align*}

\subsection{Proof of Theorem \ref{thm:ABC_heteo_rates}} \label{proof_thm:ABC_heteo_rates}

The proof follows the structure of the proof of \citet[Proposition 1]{li2018convergence}.
All lemmas in \citet{li2018convergence} needed for its Proposition
1 are extended to the scenario of heteogenous convergence rates. Define
\[
g_{n}(t,v):=N\{Ds(\theta_{0})t;a_{n,\varepsilon}\varepsilon_{n}v+a_{n,diag}^{-1}a_{n,\varepsilon}A_{\theta_{0},1/2}W_{\text{obs}},a_{n,diag}^{-1}a_{n,\varepsilon}^{2}A(\theta_{0})a_{n,diag}^{-1}\}K(v).
\]
It is shown in Lemma \ref{lem:ABC_dens_expans} that $g_{n}(t,v)$
is the dominating term of $\tpi_{\varepsilon}(\theta,s\mid s_{{\rm obs}})$
under the transformation $t(\theta)$ and $v(s)$ up to a scale. Throughout
the proof the data are considered to be random. For two square matrices $A$ and $B$, $A>B$ denotes that $A-B$ is positive definite.

\begin{proof}[Proof of Theorem \ref{thm:ABC_heteo_rates}]

Lemma \ref{lem:ABC_CDF_normal} shows that $\Pi_{\varepsilon}\{a_{n,\varepsilon}(\theta-\theta_{0})\in A\mid s_{\text{obs}}\}$
and $\tPi_{\varepsilon}\{a_{n,\varepsilon}(\theta-\theta_{0})\in A\mid s_{\text{obs}}\}$
have the same limit. Therefore it is sufficient to only consider the
convergence of $\tPi_{\varepsilon}$ of the scaled and centered $\theta$.
For $\varepsilon_{n}=o(a_{\tau n}^{-1/2})$, according to (\ref{eq:ABC_posterior_altform1}),
we have 
\[
\tPi_{\varepsilon}\{a_{k_{\varepsilon}n}(\theta-\theta_{0})\in A\mid s_{\text{obs}}\}=\int_{\mathbb{R}^{d}}\int_{t(B_{\delta})}\tPi_{\varepsilon}\{a_{k_{\varepsilon}n}(\theta-\theta_{0})\in A\mid s_{\text{obs}}+\varepsilon_{n}v\}\tpi_{\varepsilon,tv}(t',v)^{({\rm norm})}\,dt'dv,
\]
where $t'\in\mathbb{R}^{p}$ is the integration variable to be distinguished
from $t$ when $t$ is used as the integration variable. The leading
term of the above can be obtained by Lemma \ref{lem:missp_CDF_expans}
and Lemma \ref{lem:ABC_dens_expans}, and we have 
\begin{align}
 & \sup_{A\in\mathscr{B}^{p}}\left|\tPi_{\varepsilon}\{a_{k_{\varepsilon}n}(\theta-\theta_{0})\in A\mid s_{{\rm obs}}\}\right.\nonumber \\
 & \left.-\int_{\mathbb{R}^{d}}\int_{t(B_{\delta})}\int_{A}|a_{k_{\varepsilon}n}^{-1}a_{1n}|^{p}N\{a_{k_{\varepsilon}n}^{-1}a_{1n}t;Q_{\theta_{0}}(W_{{\rm obs}}''+a_{n,diag}\varepsilon_{n}v),V_{\theta_{0}}^{-1}\}g_{n}(t',v)^{(norm)}\,dtdt'dv\right|\nonumber \\
 & =o_{p}(1),\label{eq:Thm1_leading1}
\end{align}
where $W_{{\rm obs}}''$, $Q_{\theta_{0}}$ and $V_{\theta_{0}}$
are defined in Lemma \ref{lem:missp_CDF_expans}. Let $t''(t,t')=a_{k_{\varepsilon}n}^{-1}a_{1n}(t-t'),$
$t''(A,B_{\delta})$ be the set $\{t''(t,t'):t\in A,t'\in t(B_{\delta})\}$
and $v''(v,t)=W_{\text{obs}}''+a_{n,diag}\varepsilon_{n}v-a_{k_{\varepsilon}n}^{-1}a_{1n}Q_{\theta_{0}}'t$,
where $Q_{\theta_{0}}'=\left(Ds_{1}(\theta_{0})^{T},\{A_{2:\tau,1}(\theta_{0})A_{11}^{-1}(\theta_{0})Ds_{1}(\theta_{0})\}^{T}\right)^{T}$.
For the numerator of the integral in (\ref{eq:Thm1_leading1}), since
$Q_{\theta_{0}}Q_{\theta_{0}}'=I_{p}$, with the transformation $t''=t''(t,t')$
it can be rewritten as the following, 
\begin{align*}
 & \int_{\mathbb{R}^{d}}\int_{t(B_{\delta})}\int_{A}|a_{k_{\varepsilon}n}^{-1}a_{1n}|^{p}N\{a_{k_{\varepsilon}n}^{-1}a_{1n}t;Q_{\theta_{0}}(W_{{\rm obs}}''+a_{n,diag}\varepsilon_{n}v),V_{\theta_{0}}^{-1}\}g_{n}(t',v)\,dtdt'dv\\
= & \int_{\mathbb{R}^{d}}\int_{t(B_{\delta})}\int_{A}|a_{k_{\varepsilon}n}^{-1}a_{1n}|^{p}N\{a_{k_{\varepsilon}n}^{-1}a_{1n}(t-t');Q_{\theta_{0}}v''(v,t'),V_{\theta_{0}}^{-1}\}g_{n}(t',v)\,dtdt'dv\\
= & \int_{\mathbb{R}^{d}}\int_{t''(A,B_{\delta})}\int_{A}N\{t'';Q_{\theta_{0}}v''(v,t'),V_{\theta_{0}}^{-1}\}g_{n}(t-a_{k_{\varepsilon}n}a_{1n}^{-1}t'',v)\,dtdt''dv.
\end{align*}
With techniques in the proof of \citet[Proposition 1]{li2018convergence},
in the above integral by letting $a_{k_{\varepsilon}n}a_{1n}^{-1}\rightarrow0$
and then integrating over $t''$, its leading term can be obtained
and is stated below, 
\begin{align*}
 & \sup_{A\in\mathscr{B}^{p}}\left|\int_{\mathbb{R}^{d}}\int_{t''(A,B_{\delta})}\int_{A}N\{t'';Q_{\theta_{0}}v''(v,t'),V_{\theta_{0}}^{-1}\}g_{n}(t-a_{k_{\varepsilon}n}a_{1n}^{-1}t'',v)\,dtdt''dv-\int_{A}\int_{\mathbb{R}^{d}}g_{n}(t,v)\,dvdt\right|\\
 & =o_{p}(1).
\end{align*}
For the denominator of the integral in (\ref{eq:Thm1_leading1}),
its leading term can be obtained similarly. Then we have 
\[
\sup_{A\in\mathscr{B}^{p}}\mid\tPi_{\varepsilon}\{a_{k_{\varepsilon}n}(\theta-\theta_{0})\in A\mid s_{\text{obs}}\}-\int_{A}\psi_{\varepsilon}(t)\,dt\mid=o_{p}(1),
\]
where $\psi_{\varepsilon}(t):=\int_{\mathbb{R}^{d}}g_{n}(t,v)\,dv/\int_{\mathbb{R}^{d}\times\mathbb{R}^{p}}g_{n}(t,v)\,dvdt$.
Therefore \textbf{(a)} holds.

For \textbf{(b)}, it holds by Lemma \ref{lem:normal_expect_expans}.

For \textbf{(c)}, consider the bias of $\tilde{\theta}_{\varepsilon}$
which is the correspondence of $\theta_{\varepsilon}$ using the normal
approximation of $f(s\mid\theta)$. Since 
\begin{align}
\tilde{\theta}_{\varepsilon}-\theta_{0} & =a_{n,\varepsilon}^{-1}\int t\frac{g_{n}(t,v)}{\int g_{n}(t,v)\,dtdv}\,dtdv\text{ and }\nonumber \\
g_{n}(t,v) & =N\{\underbrace{Ds(\theta_{0})}_{A}t;\underbrace{a_{n,\varepsilon}\varepsilon_{n}}_{b_{n}}v+\underbrace{a_{n,diag}^{-1}a_{n,\varepsilon}A_{\theta_{0},1/2}}_{K_{n}}\underbrace{W_{\text{obs}}}_{c},\underbrace{a_{n,diag}^{-1}a_{n,\varepsilon}^{2}A(\theta_{0})a_{n,diag}^{-1}}_{K_{n}K_{n}^{T}}\}K(v),\label{eq:mean_expression}
\end{align}
we will use the following decomposition to write $g_{n}(t,v)$ into
two valid normal densities in $p$-dimension and $d-p$-dimension
respectively. Consider a matrix $A\in\mathbb{R}^{d\times p}$, a sequence
$b_{n}\rightarrow[0,\infty)$ and an invertible $K_{n}\in\mathbb{R}^{d\times d}$.
For a matrix $X$, let $Q_{X}=X^{T}X$, $\beta_{X}=Q_{X}^{-1}X^{T}$
and $P_{X}=X\beta_{X}$. We have the following decomposition, 
\begin{align*}
 & N(At;b_{n}v+K_{n}c,K_{n}K_{n}^{T})K(v)=\frac{1}{(2\pi)^{d/2}|K_{n}|}N(K_{n}^{-1}At;K_{n}^{-1}b_{n}v+c,I_{d})K(v)\\
 & =N\{t;\beta_{K_{n}^{-1}A}(K_{n}^{-1}b_{n}v+c),Q_{K_{n}^{-1}A}^{-1}\}r(v;A,b_{n},K_{n},c),\\
\text{where } & r(v;A,b_{n},K_{n},c)=\frac{1}{(2\pi)^{(d-p)/2}|K_{n}|}\exp\{-\frac{1}{2}(K_{n}^{-1}b_{n}v+c)^{T}(I_{d}-P_{K_{n}^{-1}A})(K_{n}^{-1}b_{n}v+c)\}K(v).
\end{align*}
Then 
\begin{align*}
 & \int t\frac{N(At;b_{n}v+K_{n}c,K_{n}K_{n}^{T})K(v)}{\int N(At;b_{n}v+K_{n}c,K_{n}K_{n}^{T})K(v)\,dtdv}\,dtdv=\beta_{K_{n}^{-1}A}c+R(A,b_{n},K_{n},c),\\
\text{ where } & R(A,b_{n},K_{n},c)=\beta_{K_{n}^{-1}A}K_{n}^{-1}b_{n}\int v\frac{r(v;A,b_{n},K_{n},c)}{\int r(v;A,b_{n},K_{n},c)\,dv}\,dv.
\end{align*}
Using the decomposition above, we have 
\begin{align}
\tilde{\theta}_{\varepsilon}-\theta_{0} & =a_{n,\varepsilon}^{-1}\left\{ \beta_{K_{n}^{-1}A}c+R(A,b_{n},K_{n},c)\right\} ,\label{eq:mean_decomp}
\end{align}
with expressions of $A$, $b_{n}$, $K_{n}$ and $c$ given in (\ref{eq:mean_expression}). 

Using terms in $g_{n}(t,v)$, write $A_{\theta_{0},-1/2}=(A_{1;\theta_{0},-1/2},A_{2:\tau;\theta_{0},-1/2})$
where $A_{1;\theta_{0},-1/2}\in\mathbb{R}^{d\times d_{1}}$, and we
have 
\begin{align*}
\beta_{K_{n}^{-1}A}= & a_{n,\varepsilon}\{Ds(\theta_{0})^{T}a_{n,diag}A(\theta_{0})^{-1}a_{n,diag}Ds(\theta_{0})\}^{-1}Ds(\theta_{0})^{T}a_{n,diag}A_{\theta_{0},-1/2}^{T}\\
= & a_{n,\varepsilon}a_{1n}^{-1}\left\{ (Ds_{1}(\theta_{0})^{T},Ds_{2:\tau}(\theta_{0})^{T}a_{2:\tau,n,diag}a_{1n}^{-1})A(\theta_{0})^{-1}\left(\begin{array}{c}
Ds_{1}(\theta_{0})\\
a_{2:\tau,n,diag}a_{1n}^{-1}Ds_{2:\tau}(\theta_{0})
\end{array}\right)\right\} ^{-1}\\
 & \cdot(Ds_{1}(\theta_{0})^{T},Ds_{2:\tau}(\theta_{0})^{T}a_{2:\tau,n,diag}a_{1n}^{-1})\left(\begin{array}{c}
A_{1;\theta_{0},-1/2}^{T}\\
A_{2:\tau;\theta_{0},-1/2}^{T}
\end{array}\right)\\
\approx & a_{n,\varepsilon}a_{1n}^{-1}\beta_{A_{1;\theta_{0},-1/2}Ds_{1}(\theta_{0})},
\end{align*}
and similarly as $\beta_{K_{n}^{-1}A}$, 
\begin{align*}
P_{K_{n}^{-1}A}= & A_{\theta_{0},-1/2}a_{n,diag}Ds(\theta_{0})\{Ds(\theta_{0})^{T}a_{n,diag}A(\theta_{0})^{-1}a_{n,diag}Ds(\theta_{0})\}^{-1}Ds(\theta_{0})^{T}a_{n,diag}A_{\theta_{0},-1/2}^{T}\\
\approx & P_{A_{1;\theta_{0},-1/2}Ds_{1}(\theta_{0})}.
\end{align*}
Then in $R(A,b_{n},K_{n},c)$, 
\begin{align}
r(v;A,b_{n},K_{n},c) & \propto\exp\left\{ -\frac{1}{2}\|(A_{\theta_{0},-1/2}a_{n,diag}\varepsilon_{n}v+W_{obs})^{T}(I_{d}-P_{K_{n}^{-1}A})^{1/2}\|\right\} K(v)\nonumber \\
 & \approx\exp\left\{ -\frac{1}{2}\|(A_{\theta_{0},-1/2}a_{n,diag}\varepsilon_{n}v+W_{obs})^{T}(I_{d}-P_{A_{1;\theta_{0},-1/2}Ds_{1}(\theta_{0})})^{1/2}\|\right\} K(v).\label{eq:r_leading1}
\end{align}
Since $P_{A_{1;\theta_{0},-1/2}Ds_{1}(\theta_{0})}$ is the projection
matrix in $d$-dimension onto a $p$-dimension subspace, let 
\begin{align*}
I_{d}-P_{A_{1;\theta_{0},-1/2}Ds_{1}(\theta_{0})}=U\left(\begin{array}{cc}
0 & 0\\
0 & I_{d-p}
\end{array}\right)U^{T} & \text{ and }P_{A_{1;\theta_{0},-1/2}Ds_{1}(\theta_{0})}=U\left(\begin{array}{cc}
I_{p} & 0\\
0 & 0
\end{array}\right)U^{T},
\end{align*}
where $U:=(u_{1:p},u_{(p+1):d})$ and $u_{i}$ are orthonormal eigenvectors
in $d$-dimension space. Then the right hand side of (\ref{eq:r_leading1})
is 
\begin{align}
 & \exp\left\{ -\frac{1}{2}\|(A_{\theta_{0},-1/2}a_{n,diag}\varepsilon_{n}v+W_{obs})^{T}u_{(p+1):d}\|^{2}\right\} K(v)\nonumber \\
= & \exp\left\{ -\frac{1}{2}\|\sum_{k=1}^{3}v^{(k)^{T}}a_{n,diag}^{(k)}\varepsilon_{n}u_{(p+1):d}^{*(k)}+W_{obs}^{T}u_{(p+1):d}\|^{2}\right\} K(v),\label{eq:r_leading2}
\end{align}
where $U^{*}=(u_{1:p}^{*},u_{(p+1):d}^{*}):=A(\theta_{0})^{-1/2}U$,
$u_{(p+1):d}^{*T}=(u_{(p+1):d}^{*(1)^{T}},u_{(p+1):d}^{*(2)^{T}},u_{(p+1):d}^{*(3)^{T}})$,
$u_{(p+1):d}^{*(k)}\in\mathbb{R}^{d^{(k)}\times(d-p)}$ and $v^{T}=(v^{(1)^{T}},v^{(2)^{T}},v^{(3)^{T}})$.
The norm in (\ref{eq:r_leading2}) is for $(d-p)$-dimension vector.
We would like to write it as the norm of a $d$-dimension vector,
so that it is a density of $v$. 

Consider $\sum_{k=1}^{3}p_{k}>r$ where $r$, $p_{1}$, $p_{2}$ and
$p_{3}$ are positive integers, matrices $C_{k}\in\mathbb{R}^{r\times p_{k}}$,
vectors $v_{k}\in\mathbb{R}^{p_{k}\times1}$ and $W\in\mathbb{R}^{r\times1}$,
we have 
\begin{align*}
\|\sum_{k=1}^{3}C_{k}v_{k}+W\|^{2} & =\sum_{i,j=1}^{3}v_{i}^{T}C_{i}^{T}C_{j}v_{j}+2\left(\sum_{k=1}^{3}C_{k}v_{k}\right)^{T}W+W^{T}W\\
 & =\left\{ \left(\begin{array}{c}
v_{1}\\
v_{2}\\
v_{3}
\end{array}\right)+\left(\begin{array}{c}
A_{1}W\\
A_{2}W\\
A_{3}W
\end{array}\right)\right\} ^{T}\left(C_{i}^{T}C_{j}\right)_{i,j=1,2,3}\left\{ \left(\begin{array}{c}
v_{1}\\
v_{2}\\
v_{3}
\end{array}\right)+\left(\begin{array}{c}
A_{1}W\\
A_{2}W\\
A_{3}W
\end{array}\right)\right\} +H(W),
\end{align*}
where $(C_{1},C_{2},C_{3})(A_{1}^{T},A_{2}^{T},A_{3}^{T})^{T}=I_{d}$
and $H(W)$ does not depend on $v_{1}$, $v_{2}$ and $v_{3}$. 

Then since $u_{(p+1):d}^{*T}A(\theta_{0})^{-1/2}u_{(p+1):d}=I_{d-p}$,
(\ref{eq:r_leading2}) can be written as 
\begin{align}
 & \exp\left\{ -\frac{1}{2}\|\left(u_{(p+1):d}^{*(i)^{T}}u_{(p+1):d}^{*(j)}\right)_{i,j=1,2,3}^{1/2}\left(a_{n,diag}\varepsilon_{n}v+A(\theta_{0})^{1/2}u_{(p+1):d}u_{(p+1):d}^{T}W_{obs}\right)\|^{2}\right\} K(v).\label{eq:r_leading3}
\end{align}
Since (\ref{eq:r_leading3}) is in the form of $\exp\{-\|C(K_{n}v+c)\|^{2}/2\}K(v)$
where $K_{n}=diag(K_{n}^{(1)},K_{n}^{(2)},K_{n}^{(3)})$ and $K_{n}^{(i)}$
are diagonal matrices with diagonal elements going to $\infty$, $c\in[0,\infty)$
and $0$ respectively for $i=1,2,3$, we can use the following upper
bound similar to that used in the proof of Lemma \ref{lem:normal_expect_bounded}
for its case $\textcircled{\small2}$: since $\|K_{n}v+c\|\geq\|K_{n}^{(1)}v^{(1)}+c\|^{2}$
and $\|v\|^{2}\geq\|v^{(3)}\|^{2}+\|v^{(2)}\|^{2}$, 
\begin{align*}
 & \int\|K_{n}v\|^{2}|K_{n}^{(1)}|\exp\{-\frac{1}{2}\|C(K_{n}v+c)\|^{2}\}K(v)\,dv\\
\leq & \int\sum_{k=1}^{3}\|K_{n}^{(k)}v^{(k)}\|^{2}|K_{n}^{(1)}|\exp\{-\frac{1}{2}\|K_{n}^{(1)}v^{(1)}+c\|^{2}\}K^{*}(v^{(2)},v^{(3)})\,dv=O_{p}(1),\\
\text{ and} & \int|K_{n}^{(1)}|\exp\{-\frac{1}{2}\|C(K_{n}v+c)\|^{2}\}K(v)\,dv=\Theta_{p}(1),
\end{align*}
where $K^{*}(v^{(2)},v^{(3)})=K(\|v^{(2)}\|^{2}+\|v^{(3)}\|^{2})$.
Hence we have 
\begin{align*}
\int v\frac{r(v;A,b_{n},K_{n},c)}{\int r(v;A,b_{n},K_{n},c)\,dv}\,dv & =O_{p}(a_{n,diag}^{-1}\varepsilon_{n}^{-1}),\\
\text{ and }R(A,b_{n},K_{n},c) & =a_{n,\varepsilon}a_{1n}^{-1}\beta_{A_{1;\theta_{0},-1/2}Ds_{1}(\theta_{0})}(a_{n,diag}^{-1}a_{n,\varepsilon}A_{\theta_{0},1/2})^{-1}a_{n,\varepsilon}\varepsilon_{n}O_{p}(a_{n,diag}^{-1}\varepsilon_{n}^{-1})\\
 & =O_{p}(a_{n,\varepsilon}a_{1n}^{-1}).
\end{align*}
Therefore by (\ref{eq:mean_decomp}), $a_{1n}(\widetilde{\theta}_{\varepsilon}-\theta_{0})=O_{p}(1)$
and by Lemma \ref{lem:ABC_probs_tails}, $\theta_{\varepsilon}-\theta_{0}=O_{p}(a_{1n}^{-1})+O_{p}(\alpha_{n}^{-1}a_{n,\varepsilon}^{-1})$.
\end{proof}

\subsection{Lemmas}
\begin{lem}
\label{lem:missp_CDF_expans}When $\varepsilon_{n}=O(a_{\tau n}^{-1})$,
for any fixed $v\in\mathbb{R}^{d}$ and small enough $\delta$, as
$n\rightarrow\infty$,
\begin{align*}
 & \sup_{A\in\mathscr{B}^{p}}\left|\tPi\{a_{n,\varepsilon}(\theta-\theta_{0})\in A\mid s_{{\rm obs}}+\varepsilon_{n}v\}-\int_{A}|a_{n,\varepsilon}^{-1}a_{1n}|^{p}N\{a_{n,\varepsilon}^{-1}a_{1n}t;Q_{\theta_{0}}(W_{{\rm obs}}''+a_{n,diag}\varepsilon_{n}v),V_{\theta_{0}}^{-1}\}\,dt\right|=o_{p}(1),
\end{align*}
where 
\begin{align*}
 & A(\theta)\text{ is divided as }\left(\begin{array}{cc}
A_{11} & A_{1,2:\tau}\\
A_{2:\tau,1} & A_{2:\tau,2:\tau}
\end{array}\right)(\theta),\ A_{11}(\theta)\in\mathbb{R}^{d_{s,1}\times d_{s,1}},W_{\text{obs}}'':=\left(\begin{array}{c}
A_{11}(\theta_{0})^{1/2}W_{\text{obs},1}\\
A_{2:\tau|1}(\theta_{0})^{1/2}W_{\text{obs},2:\tau}
\end{array}\right),\\
 & Q_{\theta_{0}}:=V_{\theta_{0}}^{-1}\left(Ds_{1}(\theta_{0})^{T}A_{11}(\theta_{0})^{-1},\{A_{2:\tau,1}(\theta_{0})A_{11}^{-1}(\theta_{0})Ds_{1}(\theta_{0})\}^{T}A_{2:\tau|1}(\theta_{0})^{-1}\right),\\
 & V_{\theta_{0}}:=Ds_{1}(\theta_{0})^{T}A_{11}(\theta_{0})^{-1}Ds_{1}(\theta_{0})+\{A_{2:\tau,1}(\theta_{0})A_{11}^{-1}(\theta_{0})Ds_{1}(\theta_{0})\}^{T}A_{2:\tau|1}(\theta_{0})^{-1}\{A_{2:\tau,1}(\theta_{0})A_{11}^{-1}(\theta_{0})Ds_{1}(\theta_{0})\},\\
 & A_{2:\tau|1}(\theta_{0}):=A_{2:\tau,2:\tau}(\theta)-A_{2:\tau,1}(\theta)A_{11}(\theta)^{-1}A_{1,2:\tau}(\theta),
\end{align*}
and $W_{\text{obs}}=(W_{\text{obs},1}^{T},W_{\text{obs},2:\tau}^{T})^{T}$. 
\end{lem}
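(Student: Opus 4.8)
The plan is to read the statement as a Bernstein--von Mises (Laplace) expansion for the exact, zero-tolerance posterior $\tPi(\cdot\mid s_{\rm obs}+\varepsilon_n v)$ formed from the Gaussian likelihood $\ftil_n$ and the truncated prior $\pi_\delta$. After the substitution $t'=a_{n,\varepsilon}^{-1}a_{1n}t=a_{1n}(\theta-\theta_0)$, the target density is $N\{t';Q_{\theta_0}(W_{\rm obs}''+a_{n,diag}\varepsilon_n v),V_{\theta_0}^{-1}\}$, so the claim is equivalent to total-variation convergence of the posterior law of $a_{1n}(\theta-\theta_0)$ to this normal. First I would write $\tPi$ as the ratio of integrals of $\pi_\delta(\theta)\ftil_n(s_{\rm obs}+\varepsilon_n v\mid\theta)$, localise $\theta$ at the fastest rate $a_{1n}$, and exploit that, since $\ftil_n(\cdot\mid\theta)$ is exactly Gaussian in $s$, the log-posterior equals $-\tfrac12 Q(\theta)-\tfrac12\log|A(\theta)|+\log\pi_\delta(\theta)$ up to an additive constant in $\theta$, where $Q(\theta)=\|a_{n,diag}A(\theta)^{-1/2}\{s_{\rm obs}+\varepsilon_n v-s(\theta)\}\|^2$.

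Next I would Taylor-expand $s(\theta)$ to second order and $A(\theta)$ to first order about $\theta_0$, substitute $\theta=\theta_0+a_{1n}^{-1}t'$, and separate the quadratic-in-$t'$ term (the normalised observed information) from the linear-in-$t'$ term (the score). Using $s_{\rm obs}-s(\theta_0)=A(\theta_0)^{1/2}a_{n,diag}^{-1}W_{\rm obs}$ and the fact that $W_{\rm obs}\to Z\sim N(0,I_d)$ is $O_p(1)$ by Condition \ref{sum_conv}, both the score and the information are expressed through the blocks of $a_{n,diag}A(\theta_0)^{-1}a_{n,diag}$.

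The crux is the heterogeneous-rate bookkeeping. In the normalised information $a_{1n}^{-2}Ds(\theta_0)^Ta_{n,diag}A(\theta_0)^{-1}a_{n,diag}Ds(\theta_0)$, every block carrying a rate $a_{kn}$ with $k\ge2$ is $o(a_{1n}^2)$, so only the $(1,1)$ block survives, leaving $Ds_1(\theta_0)^T\Omega_{11}Ds_1(\theta_0)$ where $\Omega_{11}$ denotes the $(1,1)$ block of $A(\theta_0)^{-1}$. By the block-inversion identity $\Omega_{11}=A_{11}^{-1}+A_{11}^{-1}A_{1,2:\tau}A_{2:\tau|1}^{-1}A_{2:\tau,1}A_{11}^{-1}$ this is exactly the stated $V_{\theta_0}$, and it is positive definite by Condition \ref{sum_conv}(iii) with $k=1$. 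I would then show that in the score the slower summaries enter only through their regression on $S_1$, producing the coefficient $A_{2:\tau,1}A_{11}^{-1}Ds_1$ and the conditional covariance $A_{2:\tau|1}$; collecting these reproduces $Q_{\theta_0}(W_{\rm obs}''+a_{n,diag}\varepsilon_n v)$ as the location.

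It remains to control the remainders and the tails. The prior is $C^2$ near $\theta_0$, hence asymptotically constant over the $O(a_{1n}^{-1})$ neighbourhood and cancels in the ratio; the second-order remainder of $s(\theta)$ and the first-order remainder of $A(\theta)$ must be shown to be negligible in the exponent, which is where $\varepsilon_n=O(a_{\tau n}^{-1})$ and the localisation $\theta-\theta_0=O(a_{1n}^{-1})$ are used. Tails of the integral outside a growing ball in $t'$ are negligible by the identifiability Condition \ref{sum_conv}(ii) together with the nonsingularity of $V_{\theta_0}$, and pointwise convergence of the rescaled density upgrades to the stated $\sup_{A\in\mathscr{B}^p}$ bound by a Scheff\'e argument. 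The main obstacle is precisely the block-matrix disentangling of the rates and the accompanying remainder control on $a_{n,diag}\{A(\theta)-A(\theta_0)\}a_{n,diag}$, whose blocks can be as large as $O(a_{1n})$, so that their products with the localised deviations and with the large shift $a_{n,diag}\varepsilon_n v$ must be tracked block by block.
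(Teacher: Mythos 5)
Your proposal reaches the right limit and follows the same overall strategy as the paper (relocalise at the fastest rate $a_{1n}$, prove a Bernstein--von Mises statement for the exactly Gaussian posterior, control tails), but the key technical device differs. The paper verifies the two conditions of Kleijn and van der Vaart's misspecified BvM theorem and deliberately does \emph{not} expand the full quadratic form: it first factorises the Gaussian density as $\widetilde{f}_{n,1}(s_1\mid\theta)\,\widetilde{f}_{n,2:\tau}(s_{2:\tau}\mid s_1,\theta)$ and expands each factor separately, so that the marginal of the fast block contributes $Ds_1^TA_{11}^{-1}Ds_1$, the conditional of the slow blocks contributes $\{A_{2:\tau,1}A_{11}^{-1}Ds_1\}^TA_{2:\tau|1}^{-1}\{A_{2:\tau,1}A_{11}^{-1}Ds_1\}$, and $V_{\theta_0}$, $Q_{\theta_0}$ and $W_{\rm obs}''$ emerge in exactly the stated form. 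You instead expand $a_{1n}^{-2}Ds(\theta_0)^Ta_{n,diag}A(\theta_0)^{-1}a_{n,diag}Ds(\theta_0)$ directly and recover the same positive-definite limit via the block-inversion identity $\Omega_{11}=A_{11}^{-1}+A_{11}^{-1}A_{1,2:\tau}A_{2:\tau|1}^{-1}A_{2:\tau,1}A_{11}^{-1}$; this is the algebraic shadow of the paper's conditional factorisation, and it is precisely what rescues the ``naive'' expansion that the paper flags as unusable. What the factorisation buys, and what you must supply by hand, is uniform control of the pieces your route leaves exposed: the finite-$n$ cross terms of order $a_{1n}^{-1}a_{kn}$, and above all the shift $a_{n,diag}\varepsilon_nv$, whose leading block $a_{1n}\varepsilon_nv^{(1)}$ diverges once $\varepsilon_n=\omega(a_{1n}^{-1})$ — the paper absorbs this through the reparametrisation $v''$ so that the effective shift $a_{n,diag}B_nv''$ stays bounded, and you should do the same before applying Scheff\'e to the (now $n$-dependent, possibly divergent) centering. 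Your tail step also needs $\delta$ chosen so that $Ds_1(\theta)$ has full column rank uniformly on $B_\delta$, which is the content of the paper's (KV2) verification.
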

\medskip{}

\begin{lem}
\label{lem:ABC_dens_expans}Assume Condition \ref{cond:prior}, \ref{sum_conv} and \ref{kernel_prop}. If $\varepsilon_{n}=o(a_{\tau n}^{-1/2})$,
then $\int_{\mathbb{R}^{d}}\int_{t(B_{\delta})}|\tpi_{\varepsilon,tv}(t,v)-\pi(\theta_{0})g_{n}(t,v)|\,dtdv\rightarrow0$
in probability and $\int_{\mathbb{R}^{d}}\int_{t(B_{\delta})}c_{n,\varepsilon}g_{n}(t,v)\,dtdv=\Theta_{p}(1)$,
as $n\rightarrow\infty$, where $c_{n,\varepsilon}$ is defined in
Lemma \ref{lem:normal_expect_expans}. 
\end{lem}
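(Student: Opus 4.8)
The plan is to exploit the fact that, after the rescaling $t=a_{n,\varepsilon}(\theta-\theta_{0})$ and $v=\varepsilon_{n}^{-1}(s-s_{{\rm obs}})$, the Gaussian surrogate density $\tpi_{\varepsilon,tv}$ differs from $\pi(\theta_{0})g_{n}$ only through (i) replacing $A(\theta_{0}+a_{n,\varepsilon}^{-1}t)$ by its value $A(\theta_{0})$, (ii) replacing $s(\theta_{0}+a_{n,\varepsilon}^{-1}t)$ by its first-order expansion $s(\theta_{0})+Ds(\theta_{0})a_{n,\varepsilon}^{-1}t$, and (iii) replacing $\pi_{\delta}(\theta_{0}+a_{n,\varepsilon}^{-1}t)$ by $\pi(\theta_{0})$. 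First I would make this explicit: writing $\ftil_{n}(s_{{\rm obs}}+\varepsilon_{n}v\mid\theta_{0}+a_{n,\varepsilon}^{-1}t)$ as a Gaussian in the residual $\xi=s_{{\rm obs}}+\varepsilon_{n}v-s(\theta_{0}+a_{n,\varepsilon}^{-1}t)$ with precision $a_{n,diag}A(\theta_{0}+a_{n,\varepsilon}^{-1}t)^{-1}a_{n,diag}$, one checks using $s_{{\rm obs}}-s(\theta_{0})=a_{n,diag}^{-1}A_{\theta_{0},1/2}W_{{\rm obs}}$ that the ``value minus mean'' appearing in $g_{n}$ equals $-a_{n,\varepsilon}\xi_{0}$, where $\xi_{0}$ is $\xi$ with $s$ linearised, and that the quadratic form and determinant of $g_{n}$ are precisely those of $a_{n,\varepsilon}^{-d}\ftil_{n}$ with the covariance frozen at $\theta_{0}$; the factor $a_{n,\varepsilon}^{-d}$ is exactly the Jacobian mismatch between the two covariances.

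Next I would prove pointwise convergence of the integrand. For fixed $(t,v)$ we have $\theta_{0}+a_{n,\varepsilon}^{-1}t\to\theta_{0}$, so by \ref{cond:prior} the prior factor tends to $\pi(\theta_{0})$, and by the $C^{1}$ part of \ref{sum_conv} the determinant ratio $|A(\theta_{0})|^{1/2}/|A(\theta_{0}+a_{n,\varepsilon}^{-1}t)|^{1/2}\to1$. The exponents differ only through $A(\theta_{0}+a_{n,\varepsilon}^{-1}t)^{-1}-A(\theta_{0})^{-1}=o(1)$ and the linearisation remainder $R_{t}=s(\theta_{0}+a_{n,\varepsilon}^{-1}t)-s(\theta_{0})-Ds(\theta_{0})a_{n,\varepsilon}^{-1}t$; here I would invoke $\varepsilon_{n}=o(a_{\tau n}^{-1/2})$, i.e.~$a_{\tau n}\varepsilon_{n}^{2}\to0$, to guarantee that the contributions of these second-order terms to $-\tfrac{1}{2}\xi^{T}a_{n,diag}A^{-1}a_{n,diag}\xi$ are asymptotically negligible across all rate blocks, yielding $\tpi_{\varepsilon,tv}(t,v)-\pi(\theta_{0})g_{n}(t,v)\to0$ in probability (the only randomness being $W_{{\rm obs}}=O_{p}(1)$ by \ref{sum_conv}).

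The bulk of the work, and the main obstacle, is upgrading this pointwise statement to $L^{1}$ convergence on $\mathbb{R}^{d}\times t(B_{\delta})$ by a dominated-convergence argument. The difficulty is that the limiting Gaussian in $g_{n}$ is degenerate: after conjugating by $a_{n,diag}^{-1}a_{n,\varepsilon}$ its covariance blocks blow up in the slow directions and collapse in the fast directions, so no single fixed Gaussian dominates uniformly in $n$. To handle this I would split $v=(v^{(1)},v^{(2)},v^{(3)})$ and the directions of $t$ according to the fast/equal/slow grouping induced by $k_{\varepsilon}$, and re-express the quadratic form through the orthogonal decomposition of the projection $P_{A_{1;\theta_{0},-1/2}Ds_{1}(\theta_{0})}$ (exactly as done later in the proof of Theorem \ref{thm:ABC_heteo_rates}), so that the Gaussian factor controls the fast directions while the exponential tail of $K$ from \ref{kernel_prop}(iv) controls the remaining ones. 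This yields an $n$-uniform integrable majorant combining a polynomial factor, a nondegenerate Gaussian in the fast directions, and the kernel $K(v)$, after which generalised dominated convergence gives the first claim.

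Finally, for $\int_{\mathbb{R}^{d}}\int_{t(B_{\delta})}c_{n,\varepsilon}g_{n}\,dt\,dv=\Theta_{p}(1)$ I would first replace the domain $t(B_{\delta})=\{\|t\|<a_{n,\varepsilon}\delta\}$ by $\mathbb{R}^{p}$ at the cost of a term that is exponentially small by the Gaussian tails, then integrate out $t$: because $Ds_{1}(\theta_{0})$ has full column rank by \ref{sum_conv}(iii), the Gaussian in $Ds(\theta_{0})t$ integrates to a finite, strictly positive constant, leaving a $v$-integral of Gaussian-times-$K(v)$ form. This last integral is precisely the object normalised by $c_{n,\varepsilon}$ in Lemma \ref{lem:normal_expect_expans}, whose expansion shows it is bounded above and bounded below away from zero, the only randomness entering through $W_{{\rm obs}}=O_{p}(1)$, which gives the stated $\Theta_{p}(1)$ conclusion.
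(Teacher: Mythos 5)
Your overall strategy --- pointwise convergence of the integrand followed by dominated convergence --- cannot establish the first claim in the regime that is the actual novelty of this lemma, namely $k_{\varepsilon}>1$, i.e.\ $d^{(1)}>0$, where some summaries converge strictly faster than $a_{n,\varepsilon}$. In that regime the covariance $a_{n,diag}^{-1}a_{n,\varepsilon}^{2}A(\theta_{0})a_{n,diag}^{-1}$ of the Gaussian factor in $g_{n}$ degenerates in the fast directions, so for almost every fixed $(t,v)$ both $g_{n}(t,v)$ and $\tpi_{\varepsilon,tv}(t,v)$ tend to zero (the mass concentrates on the lower-dimensional set where $Ds_{1}(\theta_{0})t$ matches the limit of the fast block of the mean), while $\int\int g_{n}\,dt\,dv=\Theta_{p}(c_{n,\varepsilon})$ need not vanish: for instance with $\tau=2$, $d_{s,1}=d_{s,2}=p=1$ and $\varepsilon_{n}\asymp a_{2n}^{-1}$ one has $c_{n,\varepsilon}=\Theta(1)$. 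A sequence converging to zero a.e.\ whose integral stays bounded away from zero admits no integrable majorant, so no refinement of your rate-block splitting can produce the dominating function you need; the obstruction is structural, not technical. Your pointwise step is also mis-justified: the condition $a_{\tau n}\varepsilon_{n}^{2}\rightarrow0$ controls the linearisation remainder only in the slowest block, whereas in the fastest block the remainder entering the argument of the Gaussian is of order $a_{1n}a_{n,\varepsilon}^{-2}\|t\|^{2}$, which can diverge for fixed $t$ whenever $a_{1n}\gg a_{n,\varepsilon}^{2}$.

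The paper sidesteps all of this by never comparing the two densities pointwise: it obtains the lemma as an immediate consequence of Lemma \ref{lem:normal_expect_expans}, whose proof Taylor-expands $\ftil_{n}$ with explicit remainder terms, applies a rate-adapted change of variables ($v\mapsto v'$ when $\varepsilon_{n}=O(a_{\tau n}^{-1})$, $v\mapsto v^{*}$ otherwise) that follows the concentration directions, and bounds the integrals of the remainders through the case analysis of Lemma \ref{lem:normal_expect_bounded}; the $L^{1}$ error is thereby controlled as $O_{p}(a_{n,\varepsilon}^{-1}c_{n,\varepsilon})$ directly, with no appeal to a pointwise limit. Your treatment of the second claim has the related defect that integrating out $t$ against the degenerate Gaussian does not yield ``a finite, strictly positive constant'': that integration is exactly what generates the rate factors $(a_{n,\varepsilon}\varepsilon_{n})^{p-d^{(1)}}$ and $a_{n,\varepsilon}^{-d^{(3)}}|a_{n,diag}^{(3)}|$ composing $c_{n,\varepsilon}$, and the case distinctions in that formula must be derived rather than assumed.
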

\medskip{}

\begin{lem}
\label{lem:ABC_CDF_normal}Assume Condition \ref{cond:prior}, \ref{sum_conv}--\ref{sum_approx_tail}.
Then as $n\rightarrow\infty$,
\[
\sup_{A\in\mathscr{B}^{p}}\left|\Pi_{\varepsilon}\{a_{n,\varepsilon}(\theta-\theta_{0})\in A\mid s_{{\rm obs}}\}-\tPi_{\varepsilon}\{a_{n,\varepsilon}(\theta-\theta_{0})\in A\mid s_{{\rm obs}}\}\right|=o_{p}(1).
\]
\end{lem}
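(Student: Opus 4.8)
The plan is to reduce the claimed supremum to a single total-variation-type bound on the unnormalised integrands and then split that bound into a prior-truncation part and a density-replacement part, mirroring the corresponding homogeneous-rate lemma of \citet{li2018convergence}. Writing $\Pi_{\varepsilon}(\theta\in A\mid s_{{\rm obs}})=\pi_{A}(1)/\pi_{\mathcal{P}}(1)$ and $\tPi_{\varepsilon}(\theta\in A\mid s_{{\rm obs}})=\tpi_{A}(1)/\tpi_{\mathcal{P}}(1)$ in the notation fixed above, I would first use the elementary bound
\[
|\Pi_{\varepsilon}(\theta\in A\mid s_{{\rm obs}})-\tPi_{\varepsilon}(\theta\in A\mid s_{{\rm obs}})|\le\frac{|\pi_{A}(1)-\tpi_{A}(1)|}{\pi_{\mathcal{P}}(1)}+\frac{|\pi_{\mathcal{P}}(1)-\tpi_{\mathcal{P}}(1)|}{\pi_{\mathcal{P}}(1)},
\]
which follows from $\tpi_{A}(1)/\tpi_{\mathcal{P}}(1)\le1$. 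Since $\sup_{A\in\mathscr{B}^{p}}|\pi_{A}(1)-\tpi_{A}(1)|$ and $|\pi_{\mathcal{P}}(1)-\tpi_{\mathcal{P}}(1)|$ are both at most
\[
D_{n}:=\int_{\mathcal{P}}\int_{\mathbb{R}^{d}}|\pi(\theta)f_{n}(s\mid\theta)-\pi_{\delta}(\theta)\ftil_{n}(s\mid\theta)|\,K\{\varepsilon_{n}^{-1}(s-s_{{\rm obs}})\}\varepsilon_{n}^{-d}\,ds\,d\theta,
\]
it suffices to show $D_{n}=o_{p}(\pi_{\mathcal{P}}(1))$.

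Because $\pi_{\delta}$ coincides with $\pi$ on $B_{\delta}$ and vanishes outside it, $D_{n}$ is bounded by the sum of a truncation term over $B_{\delta}^{c}$, in which only $f_{n}$ appears, and a replacement term over $B_{\delta}$, equal to $\int_{B_{\delta}}\int\pi(\theta)|f_{n}(s\mid\theta)-\ftil_{n}(s\mid\theta)|K(\cdot)\varepsilon_{n}^{-d}\,ds\,d\theta$. For the truncation term I would invoke Condition \ref{sum_conv}(ii): whenever $\|\theta-\theta_{0}\|>\delta$ the mean $s(\theta)$ is separated from $s(\theta_{0})$ by some $\delta'>0$, so that, using the tail Conditions \ref{sum_approx_tail} and \ref{kernel_prop}(iv), the kernel argument $\varepsilon_{n}^{-1}(s-s_{{\rm obs}})$ is of large magnitude on the bulk of the mass of $f_{n}(\cdot\mid\theta)$ and contributes an exponentially small factor; this renders the truncation term negligible relative to $\pi_{\mathcal{P}}(1)$ exactly as in the homogeneous case. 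For the replacement term I would change variables from $s$ to the standardised coordinate $w$ underlying $f_{W_{n}}(\cdot\mid\theta)$, so that $|f_{n}-\ftil_{n}|$ becomes $|f_{W_{n}}(w\mid\theta)-\ftil_{W_{n}}(w\mid\theta)|$ times a Jacobian, and then apply Condition \ref{sum_approx}, which bounds this difference by $c_{3}\alpha_{n}^{-1}r_{max}(w)$ uniformly over $\theta\in\mathcal{P}_{0}$.

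Since $r_{max}$ satisfies the moment and tail requirements of \ref{kernel_prop}(ii)--(iv), integrating $c_{3}\alpha_{n}^{-1}r_{max}(w)$ against the transformed kernel produces the factor $\alpha_{n}^{-1}$ multiplied by an integral of the same structure that bounds $\pi_{\mathcal{P}}(1)$ from above; establishing a matching lower bound $\pi_{\mathcal{P}}(1)=\Theta_{p}(\cdot)$ then shows the replacement term is $O_{p}(\alpha_{n}^{-1}\pi_{\mathcal{P}}(1))=o_{p}(\pi_{\mathcal{P}}(1))$ because $\alpha_{n}\rightarrow\infty$. Combining with the truncation estimate gives $D_{n}=o_{p}(\pi_{\mathcal{P}}(1))$, so the two displayed ratios are $o_{p}(1)$ uniformly in $A$, proving the lemma. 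The main obstacle is the heterogeneous scaling encoded in $a_{n,diag}=\mathrm{diag}(a_{1n}I_{d_{s,1}},\dots,a_{\tau n}I_{d_{s,\tau}})$: the change of variables mixes blocks that diverge at the different rates $a_{1n},\dots,a_{\tau n}$ with the single bandwidth $\varepsilon_{n}$, so one must track the block-wise Jacobians and the interaction of $\varepsilon_{n}$ with each block to verify that the $r_{max}$ integral really is of the same order as the normaliser across all groups. It is precisely the bandwidth assumption $\varepsilon_{n}=o(a_{\tau n}^{-1/2})$ together with the rate $\alpha_{n}/a_{\tau n}^{2/5}\rightarrow\infty$ built into Condition \ref{sum_approx} that guarantees the polynomial-in-$w$ growth coming from $r_{max}$ is dominated by $\alpha_{n}^{-1}$ in every block, so that the discrepancy collapses to $o_{p}(1)$ after normalisation.
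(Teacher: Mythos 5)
Your proposal is correct and takes essentially the same route as the paper, whose proof of this lemma simply defers to \citet[Lemma 5 in supplement]{li2018convergence}: the identical decomposition into a prior-truncation term (controlled by Condition \ref{sum_conv}(ii) together with the tail Conditions \ref{kernel_prop}(iv) and \ref{sum_approx_tail}) and a density-replacement term (controlled by Condition \ref{sum_approx}), each normalised against a matching $\Theta_{p}$ lower bound on $\pi_{\mathcal{P}}(1)$. The block-wise bookkeeping for $a_{n,diag}$ that you correctly identify as the main obstacle is precisely what the paper delegates to Lemmas \ref{lem:ABC_probs_tails} and \ref{lem:normal_expect_bounded}.
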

\medskip{}

\begin{lem}
\label{lem:ABC_CDF_tails}Assume Conditions \ref{cond:prior}, \ref{sum_conv}--\ref{kernel_prop}.
If $\varepsilon_{n}=o(a_{\tau n}^{-1/2})$, then\begin{enumerate}[(i)]

\item for any $\delta<\delta_{0}$, $\Pi_{\varepsilon}(\theta\in B_{\delta}^{c}\mid s_{{\rm obs}})$ and $\tPi_{\varepsilon}(\theta\in B_{\delta}^{c}\mid s_{{\rm obs}})$ are $o_{p}(1)$;

\item there exists a $\delta<\delta_0$ such that $\sup_{A\in\mathscr{B}^{p}}\left|\Pi_{\varepsilon}(\theta\in A\cap B_{\delta}\mid s_{{\rm obs}})-\tPi_{\varepsilon}(\theta\in A\cap B_{\delta}\mid s_{{\rm obs}})\right|=o_{p}(1)$.

\end{enumerate}
\end{lem}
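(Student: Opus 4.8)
The plan is to follow the two-part structure of the corresponding result in \citet{li2018convergence}, adapted to heterogeneous rates. Both $\Pi_{\varepsilon}$ and $\tPi_{\varepsilon}$ are ratios of an integral over the region of interest to a common normalising integral ($\pi_{\mathcal{P}}(1)$, resp.\ $\tpi_{\mathcal{P}}(1)$), so the strategy throughout is to bound the relevant numerator from above and the normalising denominator from below. I would first prove (i), which then licenses restricting attention to $B_{\delta}$, and afterwards prove (ii) by a local density comparison on $B_{\delta}$.

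For part (i) the key input is the identifiability condition \ref{sum_conv}(ii): for any $\delta$ there is a $\delta'>0$ with $\|s(\theta)-s(\theta_{0})\|>\delta'$ whenever $\|\theta-\theta_{0}\|>\delta$. For $\theta$ in the annulus $\mathcal{P}_{0}\setminus B_{\delta}$, a summary $s$ drawn from $f_{n}(\cdot\mid\theta)$ concentrates near $s(\theta)$, so the decomposition $s-s_{{\rm obs}}=\{s-s(\theta)\}+\{s(\theta)-s(\theta_{0})\}+\{s(\theta_{0})-s_{{\rm obs}}\}$ forces $\|s-s_{{\rm obs}}\|>\delta'/2$ with probability tending to one once $s_{{\rm obs}}\to s(\theta_{0})$. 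The kernel tail bound \ref{kernel_prop}(iv) then makes $K\{\varepsilon_{n}^{-1}(s-s_{{\rm obs}})\}$ of order $\exp\{-c(\delta'/\varepsilon_{n})^{\alpha_{1}}\}$, which is super-polynomially small as $\varepsilon_{n}\to0$; on the exterior region $\mathcal{P}_{0}^{c}$ I would instead invoke the density tail bound \ref{sum_approx_tail} to control $f_{n}(s\mid\theta)$ directly. A local expansion near $\theta_{0}$ (in the spirit of Lemma \ref{lem:ABC_dens_expans}) shows the denominator is $\Theta_{p}$ of a strictly positive quantity on the natural scale, so the ratio is $o_{p}(1)$. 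The identical argument applies to $\tPi_{\varepsilon}$ since the normal density $\ftil_{n}(\cdot\mid\theta)$ is centred at $s(\theta)$ with Gaussian tails.

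For part (ii) I would compare $\pi_{A\cap B_{\delta}}(1)$ with $\tpi_{A\cap B_{\delta}}(1)$ via condition \ref{sum_approx}. After the standardising change of variables $w=a_{n,diag}A_{\theta_{0},-1/2}\{s-s(\theta)\}$ underlying that condition, the two integrands differ only through $f_{W_{n}}(w\mid\theta)-\ftil_{W_{n}}(w\mid\theta)$, which \ref{sum_approx} bounds uniformly by $c_{3}\alpha_{n}^{-1}r_{max}(w)$. Because $r_{max}$ obeys the integrability and tail requirements of \ref{kernel_prop}, integrating this bound against $K(v)$ yields a numerator discrepancy of relative order $\alpha_{n}^{-1}$, uniformly in $A$. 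The same estimate controls $|\pi_{B_{\delta}}(1)-\tpi_{B_{\delta}}(1)|$, and since both denominators are $\Theta_{p}$ of a common positive quantity by the local expansion, a standard perturbation-of-a-ratio estimate gives $\sup_{A}|\Pi_{\varepsilon}(\theta\in A\cap B_{\delta}\mid s_{{\rm obs}})-\tPi_{\varepsilon}(\theta\in A\cap B_{\delta}\mid s_{{\rm obs}})|=O_{p}(\alpha_{n}^{-1})=o_{p}(1)$, using $\alpha_{n}\to\infty$.

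The main obstacle I anticipate is maintaining uniform control across the heterogeneous rates. The standardisation carries the anisotropic matrix $a_{n,diag}=diag(a_{1n}I_{d_{s,1}},\dots,a_{\tau n}I_{d_{s,\tau}})$, so the effective bandwidth along the $S_{k}$ block is $a_{kn}\varepsilon_{n}$, ranging from $\omega(1)$ for the fastest coordinates to $o(1)$ for the slowest. The delicate point is verifying that the error density $r_{max}(w)$ integrates against the kernel uniformly over these disparate scalings and that the denominator lower bound survives the slowest direction. This is precisely where the hypotheses $\varepsilon_{n}=o(a_{\tau n}^{-1/2})$ and $\alpha_{n}/a_{\tau n}^{2/5}\to\infty$ are needed, ensuring that neither the Gaussian approximation error nor the kernel truncation overwhelms the slowest-converging block.
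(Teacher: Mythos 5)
Your proposal is correct and takes essentially the same route as the paper: the paper's proof simply invokes Lemma \ref{lem:ABC_probs_tails}, whose content is exactly what you reconstruct — exponentially small numerator mass on $B_{\delta}^{c}$ via the identifiability condition \ref{sum_conv}(ii) together with the kernel and density tail bounds, a $\Theta_{p}$ lower bound on the normalising integral from the local Gaussian expansion, and the uniform $O_{p}(\alpha_{n}^{-1})$ comparison of $\pi_{A}(1)$ and $\tpi_{A}(1)$ on $B_{\delta}$ from Condition \ref{sum_approx}. The heterogeneous-rate subtlety you flag is likewise exactly what the paper's auxiliary Lemmas \ref{lem:normal_expect_bounded} and \ref{lem:normal_expect_expans} are designed to handle.
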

\medskip{}

\begin{lem}
\label{lem:ABC_probs_tails}Assume Conditions \ref{cond:prior}, \ref{sum_conv}--\ref{kernel_prop}.
Then as $n\rightarrow\infty$,

\begin{enumerate}[(i)]

\item if Condition \ref{sum_approx_tail} also holds then, for any $\delta<\delta_{0}$, $\pi_{B_{\delta}^{c}}(1)$ and $\tpi_{B_{\delta}^{c}}(1)$ are $o_{p}(1)$, and $O_{p}(e^{-a_{n,\varepsilon}^{\alpha_{\delta}}c_{\delta}})$ for some positive constants $c_{\delta}$ and $\alpha_{\delta}$ depending on $\delta$;

\item $\pi_{B_{\delta}}(1)=\tpi_{B_{\delta}}(1)\{1+O_{p}(\alpha_{n}^{-1})\}$ and $\mbox{{\rm sup}}_{A\subset B_{\delta}}\left|\pi_{A}(1)-\tpi_{A}(1)\right|/\tpi_{B_{\delta}}(1)=O_{p}(\alpha_{n}^{-1})$;

\item if $\varepsilon_{n}=o(a_{\tau n}^{-1/2})$, $\tpi_{B_{\delta}}(1)$ and $\pi_{B_{\delta}}(1)$ are $\Theta_{p}(a_{n,\varepsilon}^{d-p})$, and thus $\tpi_{\mathcal{P}}(1)$ and $\pi_{\mathcal{P}}(1)$ are $\Theta_{p}(a_{n,\varepsilon}^{d-p})$

\item if $\varepsilon_{n}=o(a_{\tau n}^{-1/2})$ and Condition \ref{sum_approx_tail} holds, $\theta_{\varepsilon}=\ttheta_{\varepsilon}+O_{p}(\alpha_{n}^{-1}a_{n,\varepsilon}^{-1})$. 

\end{enumerate}
\end{lem}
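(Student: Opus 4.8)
The plan is to establish the four parts in turn, each as the heterogeneous-rate analogue of the corresponding step in the proof of \citet[Proposition 1]{li2018convergence}, relying on the density expansion of Lemma \ref{lem:ABC_dens_expans} and the normal moment bounds of Lemma \ref{lem:normal_expect_expans} to supply all of the three-group bookkeeping. Throughout I rescale with $t=a_{n,\varepsilon}(\theta-\theta_{0})$ and $v=\varepsilon_{n}^{-1}(s-s_{{\rm obs}})$, so that the Jacobian factors $\varepsilon_{n}^{\pm d}$ cancel and $\tpi_{B_{\delta}}(1)=a_{n,\varepsilon}^{-p}\int_{t(B_{\delta})}\int_{\mathbb{R}^{d}}\tpi_{\varepsilon,tv}(t,v)\,dv\,dt$, with the analogous expression for $\pi_{B_{\delta}}(1)$. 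For part (i) I use \ref{sum_conv}(ii): for $\theta\in B_{\delta}^{c}$ the limit mean $s(\theta)$ stays at distance at least some $\delta'>0$ from $s(\theta_{0})$, while $s_{{\rm obs}}\to s(\theta_{0})$. Hence on $B_{\delta}^{c}$ either $\|s-s_{{\rm obs}}\|\gtrsim\delta'$ and the kernel factor is $O(e^{-c(\varepsilon_{n}^{-1}\delta')^{\alpha_{1}}})$ by \ref{kernel_prop}(iv), or $s$ is far from $s(\theta)$ and the density $f_{n}(s\mid\theta)$ (resp.\ its normal counterpart $\ftil_{n}$) decays by \ref{sum_approx_tail} (resp.\ by the Gaussian tail). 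Since $\varepsilon_{n}^{-1}\gtrsim a_{n,\varepsilon}$, integrating the product produces the stated bound $O_{p}(e^{-a_{n,\varepsilon}^{\alpha_{\delta}}c_{\delta}})$ for both $\pi_{B_{\delta}^{c}}(1)$ and $\tpi_{B_{\delta}^{c}}(1)$, so both are $o_{p}(1)$.

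For part (ii), on $B_{\delta}$ the truncated prior coincides with $\pi$, so for $A\subset B_{\delta}$ only the likelihood difference survives: $\pi_{A}(1)-\tpi_{A}(1)=\int_{A}\int[f_{n}(s\mid\theta)-\ftil_{n}(s\mid\theta)]\pi(\theta)K\{\varepsilon_{n}^{-1}(s-s_{{\rm obs}})\}\varepsilon_{n}^{-d}\,ds\,d\theta$. Transforming to $w=a_{n,diag}A_{\theta_{0},-1/2}\{s-s(\theta)\}$ and invoking \ref{sum_approx}, the integrand is dominated uniformly in $\theta$ by $c_{3}\alpha_{n}^{-1}r_{max}(w)$ times the kernel; the resulting integral is of the same order as $\tpi_{B_{\delta}}(1)$, so $\sup_{A\subset B_{\delta}}|\pi_{A}(1)-\tpi_{A}(1)|=O_{p}(\alpha_{n}^{-1})\tpi_{B_{\delta}}(1)$, which gives both statements in (ii). For part (iii), Lemma \ref{lem:ABC_dens_expans} gives $\int_{t(B_{\delta})}\int\tpi_{\varepsilon,tv}(t,v)\,dv\,dt=\pi(\theta_{0})\int\int g_{n}(t,v)\,dt\,dv\,\{1+o_{p}(1)\}=\Theta_{p}(c_{n,\varepsilon}^{-1})$, and since $c_{n,\varepsilon}=\Theta(a_{n,\varepsilon}^{-d})$ by Lemma \ref{lem:normal_expect_expans}, $\tpi_{B_{\delta}}(1)=\Theta_{p}(a_{n,\varepsilon}^{d-p})$; part (ii) transfers the order to $\pi_{B_{\delta}}(1)$ and part (i) upgrades $B_{\delta}$ to $\mathcal{P}$. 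The genuinely heterogeneous content here, namely that $\int\int g_{n}$ stays bounded away from $0$ and $\infty$ despite $a_{n,diag}^{-1}a_{n,\varepsilon}\to diag(\infty I_{d^{(1)}},I_{d^{(2)}},0I_{d^{(3)}})$, is exactly what the three-group decomposition and the full-rank condition \ref{sum_conv}(iii) secure inside those two lemmas.

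For part (iv), write $\theta_{\varepsilon}-\theta_{0}=\pi_{\mathcal{P}}(\theta-\theta_{0})/\pi_{\mathcal{P}}(1)$ and $\ttheta_{\varepsilon}-\theta_{0}=\tpi_{\mathcal{P}}(\theta-\theta_{0})/\tpi_{\mathcal{P}}(1)$. The $B_{\delta}^{c}$ contributions to the numerators are controlled by the exponentially small masses of part (i), hence negligible against $a_{n,\varepsilon}^{-1}\alpha_{n}^{-1}$. On $B_{\delta}$, rescaling gives $\pi_{B_{\delta}}(\theta-\theta_{0})=a_{n,\varepsilon}^{-1}\pi_{B_{\delta}}(t)$ and likewise for $\tpi$, so the difference reduces to $a_{n,\varepsilon}^{-1}$ times the discrepancy between the two normalised means of $t$; applying the $O_{p}(\alpha_{n}^{-1})$ relative bound of part (ii) to both the first-moment integral (weight $t$) and the normaliser of part (iii) yields $\theta_{\varepsilon}-\ttheta_{\varepsilon}=a_{n,\varepsilon}^{-1}O_{p}(\alpha_{n}^{-1})$.

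The hard part will be part (iv): I must verify that the \emph{relative} $O_{p}(\alpha_{n}^{-1})$ error, which by part (ii) controls probability masses, does not inflate when the integrand is weighted by $t$ and passed to the first moment. This requires the integrability of $\|t\|\,r_{max}(w)$ against the kernel, which is subtle under heterogeneous rates because the first moment of $t$ under $g_{n}$ is governed by the fastest-rate block while the $\alpha_{n}^{-1}$ error enters through all blocks; controlling this uniformly is where the kernel moment bounds \ref{kernel_prop}(ii) (finite up to order $p+6$) and the Gaussian structure of $g_{n}$ across the three rate groups must be combined.
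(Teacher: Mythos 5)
Your proposal is correct and follows essentially the same route as the paper: the paper's own proof is a one-line deferral ("extends Lemma 2 in the supplement of Li and Fearnhead (2018); the lines of the proof therein can be followed exactly"), and your four-part skeleton — identifiability plus kernel/density tail decay for (i), the uniform $O_p(\alpha_n^{-1})$ density approximation of Condition (C3*) dominated by $r_{max}$ for (ii), the $g_n$ normalisation via Lemmas \ref{lem:ABC_dens_expans} and \ref{lem:normal_expect_expans} for (iii), and the combination of these with the $t$-weighted moment bounds for (iv) — is precisely that argument adapted to heterogeneous rates. The only wrinkle is a bookkeeping slip in (iii) (your identification $c_{n,\varepsilon}=\Theta(a_{n,\varepsilon}^{-d})$ does not match the paper's expression for $c_{n,\varepsilon}$, though the paper's own Lemmas \ref{lem:ABC_dens_expans} and \ref{lem:normal_expect_expans}(ii) are mutually inconsistent on this normalisation and your final order $\Theta_p(a_{n,\varepsilon}^{d-p})$ agrees with the stated lemma).
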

\medskip{}

\begin{lem}
\label{lem:ABC_expect_tails}Assume Conditions \ref{sum_conv}--\ref{sum_approx_tail}.
Then for any $\delta<\delta_0,$ $\pi_{B_{\delta}^{c}}(h)=O_{p}(e^{-a_{n,\varepsilon}^{\alpha_{\delta}}c_{\delta}})$ for some positive constants $c_{\delta}$ and $\alpha_{\delta}$ depending on $\delta$.
\end{lem}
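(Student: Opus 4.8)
The plan is to treat this as the expectation analogue of Lemma \ref{lem:ABC_probs_tails}(i): bound $|\pi_{B_\delta^c}(h)|$ by replacing $h$ with its polynomial envelope and then show that the resulting integral inherits the same exponential tail decay. Since the functions $h$ we apply the lemma to grow only polynomially, I would write $|h(\theta,s)|\le P_l(\theta-\theta_0,\,s-s_{{\rm obs}})$ for some degree $l$, so that
\[
|\pi_{B_\delta^c}(h)|\le\int_{B_\delta^c}\int_{\mathbb{R}^d}P_l(\theta-\theta_0,s-s_{{\rm obs}})\,\pi(\theta)f_{n}(s\mid\theta)K\{\varepsilon_{n}^{-1}(s-s_{{\rm obs}})\}\varepsilon_{n}^{-d}\,ds\,d\theta.
\]
Everything then reduces to showing that the inner $s$-integral is uniformly exponentially small for $\theta\in B_\delta^c$ and that the outer polynomial weight is harmless. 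To get the $O_p$ qualifier I would first restrict to the high-probability event $\{\|s_{{\rm obs}}-s(\theta_0)\|<\delta'/2\}$, which has probability tending to one because $s_{{\rm obs}}\to s(\theta_0)$ in probability by Condition \ref{sum_conv}; off this event a trivial bound contributes only a vanishing term.

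The core is a two-region decomposition of the $s$-integral. By the identifiability part of Condition \ref{sum_conv}, for $\theta\in B_\delta^c$ we have $\|s(\theta)-s(\theta_0)\|>\delta'$, hence $\|s(\theta)-s_{{\rm obs}}\|>\delta'/2$ on the above event, so every $s$ satisfies at least one of $\|s-s_{{\rm obs}}\|>\delta'/4$ or $\|s-s(\theta)\|>\delta'/4$. On the first region the kernel is controlled by Condition \ref{kernel_prop}(iv), giving $K\{\varepsilon_{n}^{-1}(s-s_{{\rm obs}})\}=O(e^{-c_1(\varepsilon_n^{-1}\delta'/4)^{\alpha_1}})$, and since $a_{n,\varepsilon}\varepsilon_{n}\to c\in[0,\infty)$ we have $\varepsilon_{n}^{-1}\gtrsim a_{n,\varepsilon}$, so this is exponentially small in $a_{n,\varepsilon}$. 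On the second region I would write $f_{n}(s\mid\theta)=|\det(a_{n,diag}A(\theta)^{-1/2})|\,f_{W_{n}}(w_\theta\mid\theta)$ with $w_\theta=a_{n,diag}A(\theta)^{-1/2}(s-s(\theta))$, note $\|w_\theta\|\ge c\,a_{\tau n}\|s-s(\theta)\|>c\,a_{\tau n}\delta'/4$, and invoke the summary-density tail $\sup_{\theta\in\mathcal{P}_0^c}f_{W_n}(w\mid\theta)=O(e^{-c_2\|w\|^{\alpha_2}})$ from Condition \ref{sum_approx_tail}.

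Finally I would absorb the polynomial and normalising factors. The Jacobian $|\det(a_{n,diag}A(\theta)^{-1/2})|=\Theta(\prod_{k}a_{kn}^{d_{s,k}})$ and $P_l$ grow only polynomially in the rates and in $s$, whereas the region bounds decay like a stretched exponential in $a_{n,\varepsilon}$; because $a_{n,\varepsilon}\to\infty$ the exponential dominates, so the $s$-integral is bounded uniformly over $\theta\in B_\delta^c$ by $Ce^{-c_\delta a_{n,\varepsilon}^{\alpha_\delta}}$, in fact with faster decay for $\theta$ far from $\theta_0$ since the tail bound sharpens with $\|s(\theta)-s(\theta_0)\|$. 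Integrating against $P_l(\theta-\theta_0)\pi(\theta)$, which is integrable under the standing prior-moment assumption, and relabelling constants, yields $|\pi_{B_\delta^c}(h)|=O_p(e^{-a_{n,\varepsilon}^{\alpha_\delta}c_\delta})$; the statement for $\tpi_{B_\delta^c}$ follows identically with $f_n$ replaced by its Gaussian counterpart, whose tail is of the same form.

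The main obstacle is the bookkeeping of the heterogeneous rates in the density-tail region: the standardisation $a_{n,diag}A(\theta)^{-1/2}$ stretches different coordinate blocks by different rates, so the naive worst-case decay there is governed by the slowest rate $a_{\tau n}$, and the delicate point is to verify that, in combination with the kernel-region decay at rate $\varepsilon_{n}^{-1}\gtrsim a_{n,\varepsilon}$ and uniformly over $\theta\in B_\delta^c$, the overall exponent is indeed controlled at the claimed rate $a_{n,\varepsilon}$. A secondary technical point, handled exactly as in the proof of Lemma \ref{lem:ABC_probs_tails}, is the passage from the deterministic bounds valid on $\{\|s_{{\rm obs}}-s(\theta_0)\|<\delta'/2\}$ to the final $O_p$ statement.
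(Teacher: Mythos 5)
Your proposal follows essentially the same route as the paper: the paper's proof of this lemma is a one-line deferral to \citet[Lemma 3]{li2018asymptotic} ``with $a_{n}$ therein replaced by $a_{\tau n}$'', and that cited argument is exactly your construction --- polynomial envelope for $h$, restriction to the high-probability event $\|s_{\rm obs}-s(\theta_{0})\|<\delta'/2$, and the two-region decomposition using the kernel tail of \ref{kernel_prop}(iv) on $\{\|s-s_{\rm obs}\|>\delta'/4\}$ and the density tail of \ref{sum_approx_tail} on $\{\|s-s(\theta)\|>\delta'/4\}$. The obstacle you flag at the end is real but is precisely what the paper's ``replace $a_{n}$ by $a_{\tau n}$'' instruction resolves: since the standardisation $a_{n,diag}A(\theta)^{-1/2}$ only guarantees $\|w\|\gtrsim a_{\tau n}\delta'$ in the worst-case (slowest) coordinate block, the argument genuinely delivers a stretched exponential in $\min(\varepsilon_{n}^{-1},a_{\tau n})$ rather than in $a_{n,\varepsilon}$ as the statement is literally written (these can differ, e.g.\ when $\varepsilon_{n}=O(a_{1n}^{-1})$ so that $a_{n,\varepsilon}=a_{1n}\gg a_{\tau n}$). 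This weaker exponent is harmless for every downstream use, because the lemma is only invoked to show the $B_{\delta}^{c}$ contribution is negligible against quantities of polynomial order such as $\tpi_{B_{\delta}}(1)=\Theta_{p}(a_{n,\varepsilon}^{d-p})$, and any stretched-exponential decay in a rate tending to infinity suffices for that.
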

\medskip{}

\begin{lem}
\label{lem:ABC_expect_normal}Assume conditions of Lemma \ref{lem:normal_expect_expans}
and Conditions \ref{sum_approx} and \ref{kernel_prop}. Then if $\varepsilon_{n}=o(a_{\tau n}^{-1/2})$,
there exists a $\delta<\delta_{0}$ such that \begin{equation} \frac{\pi_{B_{\delta}}(h)}{\pi_{B_{\delta}}(1)}=h(\theta_{0})+a_{n,\varepsilon}^{-1}Dh(\theta_{0})^{T}\left\{\frac{\tpi_{B_{\delta}}(t)}{\tpi_{B_{\delta}}(1)}+O_{p}(\alpha_{n}^{-1})\right\}+ \frac{1}{2}a_{n,\varepsilon}^{-2}\left[\frac{\tpi_{B_{\delta}}\{t^{T}Hh(\theta_{t})t\}}{\tpi_{B_{\delta}}(1)}+O_{p}(\alpha_{n}^{-1})\right]. \end{equation}
\end{lem}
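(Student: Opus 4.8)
The plan is to Taylor-expand $h$ to second order about $\theta_0$ and then replace the true ABC functionals $\pi_{B_\delta}(\cdot)$ by their normal counterparts $\tpi_{B_\delta}(\cdot)$ using Condition \ref{sum_approx}. Choose $\delta<\delta_0$ small enough that $\overline{B_\delta}\subset\mathcal{P}_0$, so that $h\in C^2(\overline{B_\delta})$ and $\sup_{\theta\in B_\delta}\|Hh(\theta)\|<\infty$. For $\theta\in B_\delta$ the mean-value form of Taylor's theorem gives
\begin{equation*}
h(\theta)=h(\theta_0)+Dh(\theta_0)^T(\theta-\theta_0)+\tfrac12(\theta-\theta_0)^THh(\theta_t)(\theta-\theta_0),
\end{equation*}
where $\theta_t$ lies on the segment joining $\theta$ and $\theta_0$, hence $\theta_t\in B_\delta$. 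Substituting $\theta-\theta_0=a_{n,\varepsilon}^{-1}t$, integrating against the unnormalised ABC measure over $B_\delta$ and dividing by $\pi_{B_\delta}(1)$ yields the exact identity
\begin{equation*}
\frac{\pi_{B_\delta}(h)}{\pi_{B_\delta}(1)}=h(\theta_0)+a_{n,\varepsilon}^{-1}Dh(\theta_0)^T\frac{\pi_{B_\delta}(t)}{\pi_{B_\delta}(1)}+\tfrac12 a_{n,\varepsilon}^{-2}\frac{\pi_{B_\delta}\{t^THh(\theta_t)t\}}{\pi_{B_\delta}(1)}.
\end{equation*}
It then remains to replace each of the two ratios on the right by its tilde-counterpart at the cost of an additive $O_p(\alpha_n^{-1})$ error.

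For the replacement, write $g(\theta,s)$ for a generic component of $t$ or for $t^THh(\theta_t)t$. Standardising the summary to $W_n$ and invoking $\sup_{\theta\in\mathcal{P}_0}\alpha_n|f_{W_n}(w\mid\theta)-\ftil_{W_n}(w\mid\theta)|\le c_3 r_{max}(w)$ from Condition \ref{sum_approx} gives $|\pi_{B_\delta}(g)-\tpi_{B_\delta}(g)|\le\alpha_n^{-1}c_3\,\rho_{B_\delta}(|g|)$, where $\rho_{B_\delta}$ denotes the functional defined exactly as $\tpi_{B_\delta}$ but with the normal density $\ftil_n(s\mid\theta)$ replaced by the dominating density $r_{max}$; here the truncated prior $\pi_\delta$ coincides with $\pi$ on $B_\delta$. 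Since $g$ is polynomial in $t$ of degree at most two, with $|t^THh(\theta_t)t|\le\sup_{B_\delta}\|Hh\|\,\|t\|^2$, and $r_{max}$ has finite moments up to order $p+6$ by \ref{kernel_prop}(ii)--(iv), the quantity $\rho_{B_\delta}(|g|)/\tpi_{B_\delta}(1)$ is $O_p(1)$, exactly as $\tpi_{B_\delta}(|g|)/\tpi_{B_\delta}(1)$. Combining $|\pi_{B_\delta}(g)-\tpi_{B_\delta}(g)|/\tpi_{B_\delta}(1)=O_p(\alpha_n^{-1})$ with Lemma \ref{lem:ABC_probs_tails}(ii), namely $\pi_{B_\delta}(1)=\tpi_{B_\delta}(1)\{1+O_p(\alpha_n^{-1})\}$, and with the moment expansions of Lemma \ref{lem:normal_expect_expans} giving $\tpi_{B_\delta}(g)/\tpi_{B_\delta}(1)=O_p(1)$, a short manipulation of the ratio produces $\pi_{B_\delta}(g)/\pi_{B_\delta}(1)=\tpi_{B_\delta}(g)/\tpi_{B_\delta}(1)+O_p(\alpha_n^{-1})$ for $g=t$ and for $g=t^THh(\theta_t)t$. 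Substituting these into the identity above yields the claimed expansion.

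The main obstacle I expect is the quadratic remainder $\pi_{B_\delta}\{t^THh(\theta_t)t\}/\pi_{B_\delta}(1)$, for two reasons. First, the integrand grows like $\|t\|^2$, so one must guarantee both that the integral is finite and that the $O_p(\alpha_n^{-1})$ density-replacement error survives multiplication by this growth; this is where the finite moments of $r_{max}$ and the exponential kernel tails in \ref{kernel_prop}(iv) are essential. Second, the Hessian is evaluated at the moving point $\theta_t$ rather than at $\theta_0$, so it cannot be pulled outside the integral and must instead be controlled uniformly over $B_\delta$ through $h\in C^2$ and compactness of $\overline{B_\delta}$. Finiteness and the correct order of $\tpi_{B_\delta}(\|t\|^2)$ rest on the concentration afforded by $\varepsilon_n=o(a_{\tau n}^{-1/2})$, which is precisely what makes Lemma \ref{lem:normal_expect_expans} applicable and keeps $\tpi_{B_\delta}(1)=\Theta_p(a_{n,\varepsilon}^{d-p})$ non-degenerate by Lemma \ref{lem:ABC_probs_tails}(iii); the restriction to $B_\delta$ is legitimate because Lemma \ref{lem:ABC_probs_tails}(i) shows the remaining mass is asymptotically negligible.
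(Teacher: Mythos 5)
Your proof is correct and follows essentially the same route as the paper, whose own proof simply defers to Lemma 6 of the cited Li--Fearnhead reference: a second-order Taylor expansion of $h$ about $\theta_{0}$ yielding an exact identity in the $\pi_{B_{\delta}}$ functionals, followed by replacing $\pi_{B_{\delta}}$ with $\tpi_{B_{\delta}}$ at cost $O_{p}(\alpha_{n}^{-1})$ via the uniform density bound in Condition \ref{sum_approx} combined with Lemma \ref{lem:ABC_probs_tails}(ii) and the moment bounds of Lemma \ref{lem:normal_expect_expans}. The two difficulties you single out (the quadratic growth of the remainder, controlled through the finite moments of $r_{max}$, and the moving Hessian at $\theta_{t}$, controlled uniformly over $\overline{B_{\delta}}$) are exactly the points that make the extension go through.
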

\medskip{}

\begin{lem}
\label{lem:normal_expect_bounded}Assume Condition \ref{kernel_prop}. For $t\in\mathbb{R}^{p}$ and $v\in\mathbb{R}^{d}$, let $\{A_{n}(t)\}$ be a series of $d\times p$ matrix functions, $\{C_{n}(t)\}$ be a series of $d\times d$ matrix functions, $Q$ be a positive definite matrix and $g_{1}(v)$ and $g_{2}(v)$ be probability densities in $\mathbb{R}^{d}$.Let
$c_{n}$ be a random vector bounded in probability. Let $d$, $\{d^{(i)}\}_{i=1}^{3}$,
$\{d_{j}^{(i)}\}_{j=1}^{\tau_{i}}$ be non-negative integers satisfying
$d=d^{(1)}+d^{(2)}+d^{(3)}$ and $d^{(i)}=\sum_{j=1}^{\tau_{i}}d_{j}^{(i)}$.
For $j=1,\dots,\tau_{1}$, let $\{k_{nj}^{(1)}\}_{n=1,\dots,\infty}$
be a series going to $\infty$ for each $j$, and these series satisfy
that $k_{nj}^{(1)}\rightarrow\infty$ and $k_{nj}^{(1)}/k_{n,j+1}^{(1)}\rightarrow\infty$,
i.e.~the rate of convergence decreases with $j$. For each $i=2,3$
and $j=1,\dots,\tau^{(i)}$, let $\{k_{nj}^{(i)}\}_{n=1,\dots,\infty}$
be a series converging to a finite positive constant. Let $K_{n}^{(i)}=diag(k_{n1}^{(i)}I_{d_{1}^{(i)}},\dots,k_{n\tau_{i}}^{(i)}I_{d_{\tau_{i}}^{(i)}})$
for $i=1,2,3$ and $K_{n}=diag(K_{n}^{(1)},K_{n}^{(2)},K_{n}^{(3)})$.
Let $b_{n}$ be a non-negative series satisfying $k_{nj}^{(1)}b_{n} \rightarrow\infty$ for any $j=1,\dots,\tau_{1}$, and $\lim_{n\rightarrow\infty}b_{n}\in[0,\infty)$.
Let $R$ be a random matrix and $\{B_{n}\}_{n=1,\dots,\infty}$ be
a series of random matrices satisfying one of the following cases:

\begin{table}[H]
\begin{tabular}{|c|c|c|c|c|c|c|}
\hline 
 & $d^{(1)}$ & $d^{(2)}$ & $d^{(3)}$ & $\lim b_{n}$ & $R\in\mathbb{R}^{d^{(3)}\times d^{(2)}}$ & $\text{lim}B_{n}\in\mathbb{R}^{d^{(3)}\times d^{(3)}}$\tabularnewline
\hline 
\hline 
$\textcircled{\small1}$ & $0$ & $>0$ & $>0$ & $[0,\infty)$ & UL bounded or $0$ & UL bounded or $0$\tabularnewline
\hline 
$\textcircled{\small2}$ & $>0$ & $>0$ & $\geq0$ & $[0,\infty)$ & UL bounded & $0$\tabularnewline
\hline 
$\textcircled{\small3}$ & $>0$ & $=0$ & $=0$ & $(0,\infty)$ & N/A & N/A\tabularnewline
\hline 
\end{tabular}
\end{table}
\noindent where UL bounded stands for upper bounded and lower bounded away from
zeo. 

\noindent Let $Q_{n}=diag\left(I_{d^{(1)}},\left(\begin{array}{cc}
I_{d^{(2)}} & 0\\
R & B_{n}
\end{array}\right)\right)$. According to the partitions $d=d^{(1)}+d^{(2)}+d^{(3)}$ and $d^{(1)}=\sum_{j=1}^{\tau_{i}}d_{j}^{(1)}$,
denote rows of $A_{n}(t)$ by $A_{n}(t)=(A_{n}^{(1)}(t)^{T},A_{n}^{(2)}(t)^{T},A_{n}^{(3)}(t)^{T})^{T}$
and $A_{n}^{(1)}(t)=(A_{n}^{(11)}(t)^{T},\dots,A_{n}^{(1\tau_{1})}(t)^{T})^{T}$,
where $A_{n}^{(1j)}(t)\in\mathbb{R}^{d_{j}^{(1)}\times p}$, and denote
dimension of $v$ and $c_{n}$ similarly, respectively. If 

\begin{enumerate}[(i)]

\item $g_{1}(v)$ and $g_{2}(v)$ are bounded in $\mathbb{R}^{d}$;

\item $g_{1}(v)$ and $g_{2}(v)$ depend on $v$ only through $\|v\|$ and are decreasing functions of $\|v\|$;

\item there exists an integer $l$ such that $\int\prod_{k=1}^{l+p}v_{i_{k}}g_{j}(v)\,dv<\infty$, $j=1,2$, for any coordinates $(v_{i_{1}},\dots,v_{i_{l}})$ of $v$;

\item there exists a positive constant $m$ such that for any $t\in\mathbb{R}^{p}$ and $n$, $\lambda_{{\rm min}}\{A_{n}(t)\}$ and $\lambda_{{\rm min}}\{C_{n}(t)\}$ are greater than $m$;

\item the column space of $A_n^{(1i)}$ is $\mathbb{R}^p$, 

\end{enumerate}\noindent then for any $P_{l}(t,v)\in\mathbb{P}_{l,(t,v)}$, 
\begin{align}
 & \int_{\mathbb{R}^{p}}\int_{\mathbb{R}^{d}}P_{l}(t,v)g_{1}[C_{n}(t)K_{n}\{A_{n}(t)t-b_{n}Q_{n}v-K_{n}^{-1}c_{n}\}]g_{2}(Qv)\,dvdt=O_{p}(\varphi_{n}),\label{eq:normal_expect1}\\
 & \int_{\mathbb{R}^{p}}\int_{\mathbb{R}^{d}}g_{1}[C_{n}(t)K_{n}\{A_{n}(t)t-b_{n}Q_{n}v-K_{n}^{-1}c_{n}\}]g_{2}(Qv)\,dvdt=\Theta_{p}(\varphi_{n}).\label{eq:normal_expect2}
\end{align}
where $\varphi_{n}=\mathbbm{1}_{\{\text{case }\textcircled{\small1}\}}+b_{n}^{p-d^{(1)}}|K_{n}^{(1)}|^{-1}\mathbbm{1}_{\{\text{case }\textcircled{\small2}\}}+|K_{n}^{(1)}|^{-1}\mathbbm{1}_{\{\text{case }\textcircled{\small3}\}}$.
\end{lem}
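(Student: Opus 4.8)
The plan is to follow the structure of the corresponding homogeneous-rate lemma in \citet{li2018convergence}, the new ingredient being the block-diagonal scaling $K_n=\mathrm{diag}(K_n^{(1)},K_n^{(2)},K_n^{(3)})$ in which only the first block diverges. First I would reduce the two displayed claims to a single core computation. For the $\Theta_p$ statement \eqref{eq:normal_expect2} only the case $P_l\equiv 1$ is required, so it suffices to establish matching upper and lower bounds of order $\varphi_n$ for the kernel integral. For the $O_p$ statement \eqref{eq:normal_expect1} the extra factor $P_l(t,v)$ does not change the order: on the effective support of the integrand both $t$ and $v$ are of order $O(1)$ or smaller (the factor $g_2(Qv)$ confines $\|Qv\|$, which in turn confines $t$ through the change of variables below), and away from this region $P_l$ is integrable against $g_1$ and $g_2$ by the moment condition (iii) together with the boundedness and radial monotonicity of $g_1,g_2$ from conditions (i)--(ii); hence $P_l$ only alters the limiting constant.

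The core step is a staged change of variables that separates the diverging block from the $\Theta(1)$ blocks, rather than a single substitution, because in case $\textcircled{\small2}$ the matrix $B_n\to 0$ makes $Q_n$ singular in the limit. I would integrate out the fast directions first. Writing the argument of $g_1$ in blocks, its first block is $K_n^{(1)}\{A_n^{(1)}(t)t-b_nv^{(1)}-(K_n^{(1)})^{-1}c_n^{(1)}\}$; since $K_n^{(1)}\to\infty$ and, by condition (v), each sub-block $A_n^{(1i)}(t)$ has full column rank $p$ (so that $d^{(1)}\ge p$), this block pins $t$ to an $O(|K_n^{(1)}|^{-1})$ neighbourhood of the solution of $A_n^{(1)}(t)t=b_nv^{(1)}$ while simultaneously restricting $v^{(1)}$ to a neighbourhood of $\mathrm{col}(A_n^{(1)})/b_n$ of codimension $d^{(1)}-p$. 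Substituting for $t$ through $A_n^{(1)}(t)t$ and for the constrained directions of $v^{(1)}$ produces the Jacobian factor $|K_n^{(1)}|^{-1}$ and, from the $b_n$-scaling of the codimension-$(d^{(1)}-p)$ constraint, the factor $b_n^{p-d^{(1)}}$. The remaining integral over the new variable $u$, over $v^{(2)},v^{(3)}$, and over the free directions then involves only the $\Theta(1)$ blocks $K_n^{(2)},K_n^{(3)}$, the factor $g_2(Qv)$, and the random shift $c_n=O_p(1)$; by dominated convergence, using the uniform lower bounds on $\lambda_{\min}\{A_n(t)\}$ and $\lambda_{\min}\{C_n(t)\}$ from condition (iv) and the integrable radial envelopes from conditions (i)--(iii), it converges to a strictly positive finite constant, giving both the $O_p(\varphi_n)$ upper bound and the matching $\Theta_p(\varphi_n)$ lower bound. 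Cases $\textcircled{\small1}$ and $\textcircled{\small3}$ are degenerate specializations: in $\textcircled{\small1}$ there is no diverging block, every scale matrix is $\Theta(1)$, and the integral is directly $\Theta_p(1)=\varphi_n$; in $\textcircled{\small3}$ the blocks $2,3$ are absent and $b_n=\Theta(1)$, so the general formula collapses to $|K_n^{(1)}|^{-1}$.

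The hard part will be twofold. First, because $A_n(t)$ and $C_n(t)$ depend on $t$, the change of variables and all envelope bounds must be controlled \emph{uniformly} in $t$; I would use the uniform spectral bounds (iv) and the monotone radial form (ii) to dominate the integrand by a fixed integrable function, so the limiting constant is insensitive to the $t$-dependence. Second, extracting the exact power $b_n^{p-d^{(1)}}$ and, crucially, a matching \emph{lower} bound rather than merely an upper bound requires a careful orthogonal decomposition of the shift $b_n^{-1}QK_n^{-1}(c_n+C_n^{-1}u)$ relative to $\mathrm{col}(QA_n)$: the component inside the column space is absorbed by translation in the free variables, while the vanishing block-$1$ entries of $K_n^{-1}$ together with the possibly diverging $b_n^{-1}$ conspire to produce precisely the claimed power. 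Keeping this decomposition valid simultaneously through the degenerate limit $B_n\to 0$ and the possibly vanishing $b_n$ in case $\textcircled{\small2}$ is the most delicate bookkeeping, and is exactly why the proof treats the three cases separately while sharing the same change-of-variables skeleton.
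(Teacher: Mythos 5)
Your proposal matches the paper's proof in all essentials: the same case-by-case treatment, the same orthogonal decomposition of $\|A_n^{(1i)}(t)t-b_nv^{(1i)}\|^2$ into the $P_{A_n^{(1i)}}$ and $Q_{A_n^{(1i)}}^T$ components, the same block change of variables whose Jacobian $b_n^{d^{(1)}-p}|K_n^{(1)}|c_W$ produces $\varphi_n$, and the same device of dropping the degenerate ($B_n\rightarrow0$) block from $g_1$ via radial monotonicity and integrating it out against $g_2$. The only cosmetic difference is that the paper finishes each case by invoking the previously established homogeneous-rate result of \citet[Lemma 7]{li2018asymptotic} (its cases A and B) rather than arguing dominated convergence to a positive constant directly, which changes nothing of substance.
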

\medskip{}

\begin{lem}
\label{lem:normal_expect_expans}Assume Conditions \ref{cond:prior},
\ref{sum_conv} and \ref{kernel_prop} hold. If $\varepsilon_{n}=o(a_{\tau n}^{-1/2})$,
there exists a $\delta<\delta_{0}$ such that 

\begin{enumerate}[(i)]

\item $\widetilde{\pi}_{B_{\delta}}(1)=a_{n,\varepsilon}^{d-p}\pi(\theta_{0})\int_{t(B_{\delta})}\int_{\mathbb{R}^{d}}g_{n}(t,v)\,dvdt\Big\{1+O_{p}(a_{n,\varepsilon}^{-1})\Big\}$;

\item  $\int_{t(B_{\delta})}\int_{\mathbb{R}^{d}}g_{n}(t,v)\,dtdv=\Theta_{p}(c_{n,\varepsilon}),$
where $c_{n,\varepsilon}=a_{n,\varepsilon}^{-d^{(3)}}|a_{n,diag}^{(3)}|\left[(a_{n,\varepsilon}\varepsilon_{n})^{p-d^{(1)}}\mathbbm{1}_{\{1<k_{\varepsilon}\leq\tau\}}+\mathbbm{1}_{\{k_{\varepsilon}=1\text{ or }k_{\varepsilon}>\tau\}}\right]$;

\item  $\lim_{n\rightarrow\infty}\int_{\|t\|>M}\int_{\mathbb{R}^{d}}g_{n}(t,v)\,dtdv/\int_{\mathbb{R}^{p}}\int_{\mathbb{R}^{d}}g_{n}(t,v)\,dtdv=o_{p}(1)$,
as $M\rightarrow\infty$;

\item  $\widetilde{\pi}_{B_{\delta}}(t)/\widetilde{\pi}_{B_{\delta}}(1)=\int_{t(B_{\delta})}\int_{\mathbb{R}^{d}}tg_{n}(t,v)\,dtdv/\int_{t(B_{\delta})}\int_{\mathbb{R}^{d}}g_{n}(t,v)\,dtdv+O_{p}(a_{n,\varepsilon}^{-1});$

\item  $\ensuremath{\widetilde{\pi}_{B_{\delta}}\{P_{2}(t)\}/\widetilde{\pi}_{B_{\delta}}(1)=O_{p}(1)}\text{ for any }P_{2}(t)\in\mathbb{P}_{2,t}.$

\end{enumerate}
\end{lem}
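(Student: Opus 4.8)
The plan is to reduce all five statements to moment integrals of the Gaussian–kernel product $g_{n}(t,v)$ and then read off their orders from the master bound in Lemma \ref{lem:normal_expect_bounded}. First I would fix a small $\delta<\delta_{0}$ with $B_{\delta}\subset\mathcal{P}_{0}$ and perform the change of variables $\theta=\theta_{0}+a_{n,\varepsilon}^{-1}t$, $s=s_{{\rm obs}}+\varepsilon_{n}v$, under which $\widetilde{\pi}_{B_{\delta}}(h)$ becomes $\int_{t(B_{\delta})}\int_{\mathbb{R}^{d}}h\,\widetilde{\pi}_{\varepsilon,tv}(t,v)\,dv\,dt$ up to the Jacobian $a_{n,\varepsilon}^{-p}$, with $\widetilde{\pi}_{\varepsilon,tv}$ as in the set-up. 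This turns the truncated expectations $\widetilde{\pi}_{B_{\delta}}(1)$, $\widetilde{\pi}_{B_{\delta}}(t)$ and $\widetilde{\pi}_{B_{\delta}}\{P_{2}(t)\}$ into integrals of $1$, $t$ and a degree-two polynomial against $\widetilde{\pi}_{\varepsilon,tv}$.

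The heart of (i) is the density approximation $\widetilde{\pi}_{\varepsilon,tv}(t,v)=\pi(\theta_{0})g_{n}(t,v)\{1+O_{p}(a_{n,\varepsilon}^{-1})\}$ on $t(B_{\delta})$. I would obtain this by Taylor-expanding the three ingredients of $\widetilde{f}_{n}(s_{{\rm obs}}+\varepsilon_{n}v\mid\theta_{0}+a_{n,\varepsilon}^{-1}t)$ around $\theta_{0}$: the mean $s(\theta)=s(\theta_{0})+a_{n,\varepsilon}^{-1}Ds(\theta_{0})t+O(a_{n,\varepsilon}^{-2}\|t\|^{2})$ using the $C^{1}$ smoothness in \ref{sum_conv}(i), the covariance $A(\theta)=A(\theta_{0})+O(a_{n,\varepsilon}^{-1})$, and the truncated prior $\pi_{\delta}(\theta)=\pi(\theta_{0})+O(a_{n,\varepsilon}^{-1})$ via \ref{cond:prior}. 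Substituting $W_{{\rm obs}}=a_{n,diag}A_{\theta_{0},-1/2}\{s_{{\rm obs}}-s(\theta_{0})\}$ then recognises the leading normal factor as exactly $g_{n}$ — precisely the statement already isolated in Lemma \ref{lem:ABC_dens_expans}. Integrating this identity over $t(B_{\delta})\times\mathbb{R}^{d}$ and collecting the Jacobian yields the factor $a_{n,\varepsilon}^{d-p}$ of (i) once the relative remainder is shown to contribute at order $a_{n,\varepsilon}^{-1}$.

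For (ii), (iii) and (v) I would match $g_{n}$ to the canonical integrand $g_{1}[C_{n}(t)K_{n}\{A_{n}(t)t-b_{n}Q_{n}v-K_{n}^{-1}c_{n}\}]g_{2}(Qv)$ of Lemma \ref{lem:normal_expect_bounded}, taking $g_{1}$ the standard normal density, $g_{2}=K$ (whose boundedness, radial monotonicity and moments come from \ref{kernel_prop}(ii)--(iv)), $A_{n}(t)\propto Ds(\theta_{0})$, $b_{n}=a_{n,\varepsilon}\varepsilon_{n}$ (so $\lim b_{n}\in[0,\infty)$), $c_{n}$ built from $W_{{\rm obs}}$, and $K_{n}=diag(K_{n}^{(1)},K_{n}^{(2)},K_{n}^{(3)})$ the block rescaling obtained from $a_{n,\varepsilon}^{-1}a_{n,diag}$ after normalising the slow-rate directions, whose blocks diverge, stabilise or vanish according to whether the corresponding summary rate is faster than, equal to, or slower than $a_{n,\varepsilon}$. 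The case split $\textcircled{\small1}/\textcircled{\small2}/\textcircled{\small3}$ of that lemma is then dictated by the resulting pattern of $d^{(1)},d^{(2)},d^{(3)}$ and by $\lim b_{n}$, exactly mirroring whether $\varepsilon_{n}^{-1}$ sits below, among, or above the summary rates; reading the exponent $\varphi_{n}$ in each regime reproduces $c_{n,\varepsilon}$, which gives (ii). Applying the same bound with $P_{l}(t,v)=\mathbbm{1}_{\{\|t\|>M\}}$-type weights and normalising gives the tightness (iii), while an arbitrary $P_{2}(t)\in\mathbb{P}_{2,t}$ gives an $O_{p}(\varphi_{n})=O_{p}(c_{n,\varepsilon})$ numerator and hence the $O_{p}(1)$ ratio in (v). Statement (iv) then follows by writing both $\widetilde{\pi}_{B_{\delta}}(t)$ and $\widetilde{\pi}_{B_{\delta}}(1)$ through (i), so their ratio equals the $g_{n}$-ratio up to a relative $O_{p}(a_{n,\varepsilon}^{-1})$ error, with $\int tg_{n}$ controlled at the right order by the same lemma.

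The main obstacle I anticipate is the bookkeeping in matching $g_{n}$ to Lemma \ref{lem:normal_expect_bounded}: one must renormalise the slow-rate directions so their contribution is $O(1)$, and then verify the lemma's hypotheses — the eigenvalue lower bounds on $A_{n}(t)$ and $C_{n}(t)$, the full column-rank of the blocks $A_{n}^{(1i)}$, and the divergence/stabilisation pattern of the diagonal blocks of $K_{n}$ — in each of the three rate regimes. Confirming that the rank condition \ref{sum_conv}(iii) survives the rate-dependent rescaling, so that the leading covariance in $g_{n}$ remains non-degenerate on the relevant projected subspace, is the delicate point, since it is exactly this non-degeneracy that pins down the order $c_{n,\varepsilon}$ and makes the normalised density tight; one must also check that the $O(a_{n,\varepsilon}^{-1})$ relative remainder from (i), once multiplied by polynomial weights in $t$, still integrates to a genuinely lower-order contribution rather than being amplified by the polynomial growth.
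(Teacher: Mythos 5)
Your overall architecture matches the paper's: the same change of variables, the same Taylor expansion isolating $\pi(\theta_{0})g_{n}(t,v)$ as the leading term, and the same reduction of (ii), (iii) and (v) to the three cases of Lemma \ref{lem:normal_expect_bounded} according to whether $\varepsilon_{n}^{-1}$ lies below, among, or above the summary rates. You also correctly flag the matching of $g_{n}$ to that lemma's canonical integrand as the delicate point. The gap is that this matching is precisely where the work lies, and your sketch does not supply it. A diagonal rescaling ``normalising the slow-rate directions'' is not sufficient: the covariance $a_{n,diag}^{-1}a_{n,\varepsilon}^{2}A(\theta_{0})a_{n,diag}^{-1}$ of the normal factor in $g_{n}$ couples the rate groups through the off-diagonal blocks of $A(\theta_{0})$, so the slower-than-$a_{n,\varepsilon}$ components cannot be renormalised independently of the equal-rate ones. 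The paper resolves this, for $\varepsilon_{n}=O(a_{\tau n}^{-1})$, by a QR decomposition of $(A_{\theta_{0},-1/2}^{(3)},A_{\theta_{0},-1/2}^{(2)})$ and the induced \emph{non-diagonal} change of variables $v^{(3)'}=a_{n,diag}^{(3)}a_{k_{\varepsilon}n}^{-1}v^{(3)}+R_{33}^{-1}R_{32}v^{(2)}$; it is exactly the image of this transformation that produces the structure $Q_{n}=diag\bigl(I_{d^{(1)}},\bigl(\begin{smallmatrix}I_{d^{(2)}}&0\\ R&B_{n}\end{smallmatrix}\bigr)\bigr)$ that Lemma \ref{lem:normal_expect_bounded} is built to handle. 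For $\varepsilon_{n}=\omega(a_{\tau n}^{-1})$ a different, centering transformation $v^{*}$ is required, and the paper's choice of centering at $Ds(\theta_{0}+\epsilon_{1}(t))t$ rather than $Ds(\theta_{0})t$ is what keeps the relative remainder at $O_{p}(\varepsilon_{n})+O_{p}(\varepsilon_{n}^{2})$ and hence delivers the $O_{p}(a_{n,\varepsilon}^{-1})$ error in (i); without it the remainder is of the slower order $O_{p}(a_{n}\varepsilon_{n}^{2})$, which the hypothesis $\varepsilon_{n}=o(a_{\tau n}^{-1/2})$ alone does not make negligible at the claimed rate.

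Two smaller points. Appealing to Lemma \ref{lem:ABC_dens_expans} for the density approximation in (i) is circular: in the paper that lemma is itself an immediate consequence of the present one, so you must derive the expansion directly (as you also describe). And for (iii) the paper does not insert an indicator weight into Lemma \ref{lem:normal_expect_bounded}, whose conclusion is stated for polynomial weights only; it instead re-runs the upper bound from that lemma's proof on the restricted domain $\{\|t\|>M\}$ after the change of variables, so the tail statement needs that extra step rather than a direct citation.
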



%


\section{Proofs of lemmas in Supplement \ref{append:proof_sec4}}

Transformations below will be used. In the case of $\varepsilon_{n}=O(a_{\tau n}^{-1})$,
there are situations where some terms involving $s^{(3)}$ are dominated
by those involving $s^{(2)}$, and it is difficult to consider their
joint densities. We will consider the following transformation of
$s$ which linearly combines $s^{(2)}$ and $s^{(3)}$. Since $d^{(2)}+d^{(3)}>0$
in this case, by Condition \ref{sum_conv} and QR decomposition \citep{golub2013matrix},
we can decompose $A_{\theta_{0},-1/2}^{(3:2)}=(A_{\theta_{0},-1/2}^{(3)},A_{\theta_{0},-1/2}^{(2)})$
into $Q_{3:2}\left(\begin{array}{cc}
R_{33} & R_{32}\\
0 & R_{22}
\end{array}\right)$ where $Q_{3:2}$ is a $d\times(d^{(2)}+d^{(3)})$ matrix with orthogonal
columns and $R_{22}$ and $R_{33}$ are upper triangular matrices
with non-zero diagonal elements. Then let $s^{(3)'}=s^{(3)'}(s^{(2)},s^{(3)}):=a_{n,diag}^{(3)}a_{k_{\varepsilon}n}^{-1}s^{(3)}+R_{33}^{-1}R_{32}s^{(2)}$
which implies the transformation $v^{(3)'}=v^{(3)'}(v^{(2)},v^{(3)}):=a_{n,diag}^{(3)}a_{k_{\varepsilon}n}^{-1}v^{(3)}+R_{33}^{-1}R_{32}v^{(2)}$,
and $s'=s'(s):=(s^{(1:2)^{T}},s^{(3)'^{T}})^{T}$ which implies the
transformation $v'=v'(v):=(v^{(1:2)^{T}},v^{(3)'^{T}})^{T}$. 

Regarding the transformations $s'$ and $v'$, we have the following
facts. Under $s'$, the convergence rate of $s$ is changed from $a_{n,diag}$
to $a'_{n,diag}:=diag(a_{n,diag}^{(1)},a_{k_{\varepsilon}n}I_{d^{(2)}+d^{(3)}})$
which is the convergence rate of $s'$. The relationships between
the pair $s$ and $s'$ and the pair $v$ and $v'$ can be written
as 
\begin{align*}
 & v'=Q_{v1}v\text{ and }s'=Q_{v1}s,\\
\text{where } & Q_{v1}:=\left(\begin{array}{ccc}
I_{d^{(1)}} & 0 & 0\\
0 & I_{d^{(2)}} & 0\\
0 & R_{33}^{-1}R_{32} & a_{n,diag}^{(3)}a_{k_{\varepsilon}n}^{-1}
\end{array}\right)\mathbbm\{1\}_{\{d^{(3)}>0\}}+I_{d^{(1)}+d^{(2)}}\mathbbm\{1\}_{\{d^{(3)}=0\}},\\
 & a_{n,diag}v=Q_{v2}^{-1}a'_{n,diag}v'\text{ and }a_{n,diag}s=Q_{v2}^{-1}a'_{n,diag}s',\\
\text{where } & Q_{v2}:=\left(\begin{array}{ccc}
I_{d^{(1)}} & 0 & 0\\
0 & I_{d^{(2)}} & 0\\
0 & R_{33}^{-1}R_{32} & I_{d^{(3)}}
\end{array}\right),
\end{align*}
using the facts that $Q_{v1}=Q_{v2}diag(I_{d^{(1)}+d^{(2)}},a_{n,diag}^{(3)}a_{k_{\varepsilon}n}^{-1})$
and $Q_{v1}a_{n,diag}^{-1}=a{}_{n,diag}^{'-1}Q_{v2}$. 

\begin{proof}[Proof of Lemma \ref{lem:missp_CDF_expans}]

This extends \citet[Lemma A1]{li2018convergence}. Since $\varepsilon_{n}=O(a_{\tau n}^{-1})$,
$a_{n,\varepsilon}=a_{k_{\varepsilon}n}$. Recall that $a_{1n}$ is
the fastest convergence rate among all the rates of the summary statistics.
Let $v''=v''(v)=((a_{n,diag}^{(1)}a_{kn}^{-1}v^{(1)})^{T},v^{(2)^{T}},v^{(3)^{T}})^{T}$,
then $\varepsilon_{n}v=diag(a_{kn}\varepsilon_{n}a_{n,diag}^{(1)^{-1}},\varepsilon_{n}I_{d^{(2)}+d^{(3)}})v''=a_{n,diag}'a_{k_{\varepsilon}n}\varepsilon_{n}v''$
and 
\begin{equation}
\tPi\{a_{n,\varepsilon}(\theta-\theta_{0})\in A\mid s_{{\rm obs}}+\varepsilon_{n}v\}=\tPi\{a_{1n}(\theta-\theta_{0})\in a_{1n}a_{k_{\varepsilon}n}^{-1}A\mid s_{{\rm obs}}+a_{n,diag}'^{-1}a_{k_{\varepsilon}n}\varepsilon_{n}v''\}.\label{eq:Lem1_term1}
\end{equation}
So first we study $\tPi\{a_{1n}(\theta-\theta_{0})\in A\mid s_{{\rm obs}}+B_{n}v\}$
for a fixed $v$, where $B_{n}$ is a diagonal matrix and the elements
of $a_{n,diag}B_{n}$ converge to $0$ or $1$. Following the technique
in the proof of \citet[Lemma A1]{li2018convergence}, we verify the two conditions
of \citet{kleijn2012bernstein} that, if there exist $\Delta_{n,\theta_{0}}$
and a positive definite matrix $V_{\theta_{0}}$ such that, 

\begin{description} 
\item [{(KV1)}] for any compact set $K\subset t(B_{\delta})$,  
\begin{align*} 
\sup_{t\in K}\left|\log\frac{\widetilde{f}_{n}(s_{{\rm obs}}+B_{n}v\mid\theta_{0}+a_{1n}^{-1}t)}{\widetilde{f}_{n}(s_{{\rm obs}}+B_nv\mid\theta_{0})}-t^{T}V_{\theta_{0}}\Delta_{n,\theta_{0}}+\frac{1}{2}t^{T}V_{\theta_{0}}t\right| & \rightarrow0, 
\end{align*}  
in probability as $n\rightarrow\infty$, and 

\item [{(KV2)}] $E\{\tPi(a_{1n}\|\theta-\theta_{0}\|>M_{n}\mid s_{{\rm obs}}+\varepsilon_{n}v)\}\rightarrow0$  as $n\rightarrow\infty$ for any sequence of constants $M_{n}\rightarrow\infty$, 
\end{description} then 
\begin{equation}
\sup_{A\in\mathscr{B}^{p}}\left|\tPi\{a_{1n}(\theta-\theta_{0})\in A\mid s_{{\rm obs}}+B_{n}v\}-\int_{A}N(t;\Delta_{n,\theta_{0}},V_{\theta_{0}}^{-1})\,dt\right|\rightarrow0,\label{eq:Lem1_term2}
\end{equation}
 in probability as $n\rightarrow\infty$.

For (KV1), by the definition of $\widetilde{f}_{n}(s|\theta)$,
\[
\log\frac{\widetilde{f}_{n}(s_{\text{obs}}+B_{n}v|\theta_{0}+a_{1n}^{-1}t)}{\widetilde{f}_{n}(s_{\text{obs}}+B_{n}v|\theta_{0})}=\log\frac{N\{s_{\text{obs}}+B_{n}v;s(\theta_{0}+a_{1n}^{-1}t),a_{n,diag}^{-2}A(\theta_{0}+a_{1n}^{-1}t)\}}{N\{s_{\text{obs}}+B_{n}v;s(\theta_{0}),a_{n,diag}^{-2}A(\theta_{0})\}}.
\]

Applying Taylor expansion on $s(\theta_{0}+xt)$ and $A(\theta_{0}+xt)$
around $x=0$ by a trivial extension of the corresponding part in the
proof of \citet[Lemma A1]{li2018convergence}, we have the following expansion,
\begin{align}
 & \log\frac{N\{s_{\text{obs}}+B_{n}v;s(\theta_{0}+a_{1n}^{-1}t),a_{n,diag}^{-2}A(\theta_{0}+a_{1n}^{-1}t)\}}{N\{s_{\text{obs}}+B_{n}v;s(\theta_{0}),a_{n,diag}^{-2}A(\theta_{0})\}}\nonumber \\
= & t^{T}Ds(\theta_{0})^{T}a_{1n}^{-1}a_{n,diag}A(\theta_{0})^{-1}\{A(\theta_{0})^{1/2}W_{\text{obs}}+a_{n,diag}B_{n}v\}\nonumber \\
 & -\frac{1}{2}t^{T}Ds(\theta_{0})^{T}a_{1n}^{-1}a_{n,diag}A(\theta_{0})^{-1}a_{1n}^{-1}a_{n,diag}Ds(\theta_{0})t\nonumber \\
 & -\frac{a_{1n}^{-1}}{2}\|\{\sum_{i=1}^{p}D_{\theta_{i}}A^{-1}(\theta_{0}+e_{n}^{(2)}t)t_{i}\}^{1/2}[A(\theta_{0})^{1/2}W_{\text{obs}}+a_{n,diag}\{B_{n}v-a_{1n}^{-1}Ds(\theta_{0}+e_{n}^{(1)}t)t\}]\|^{2}\nonumber \\
 & +\frac{a_{1n}^{-1}}{2}\{D\log|A(\theta_{0}+e_{n}^{(3)}t)|\}^{T}t,\label{eq:Lem1_eqn1}
\end{align}
where for $j=1,2,3$, $e_{n}^{(j)}$ is a function of $t$ satisfying
$|e_{n}^{(j)}|\le a_{n}^{-1}$ which is from the remainder of the
Taylor expansions. However the expansion (\ref{eq:Lem1_eqn1}) can
not be used to verify (KV1) because as $n\rightarrow\infty$, part
of the diagonal elements of $a_{1n}^{-1}a_{n,diag}$ goes to zero,
causing the would-be $V_{\theta_{0}}$ not be positive definite. 

Instead of using the expansion (\ref{eq:Lem1_eqn1}), we seperate
those involving the part of $s_{\text{obs}}$ with convergence rate
$a_{1n}$ using conditional densities. First let $B_{n}=diag(B_{n,1},B_{n,2:\tau})$,
$a_{n,diag}=diag(a_{1n}I_{d_{1}},a_{2:\tau,n,diag})$, and for any
related vector, use subscripts $1$ and $2:\tau$ to denote the division
of it into the two dimensions $d_{s,1}$ and $d-d_{s,1}$, e.g.~$v=(v_{1}^{T},v_{2:\tau}^{T})^{T}$
with $dim(v_{1})=d_{s,1}$. Then decompose the normal densities in
both the numerator and denominator above by conditioning on the first
$d_{s,1}$ dimensions and we have 
\begin{align*}
 & N\{s_{\text{obs}}+B_{n}v;s(\theta),a_{n,diag}^{-2}A(\theta)\}\\
= & N\left\{ \left(\begin{array}{c}
s_{\text{obs},1}+B_{n,1}v_{1}\\
s_{\text{obs},2:\tau}+B_{n,2:\tau}v_{2:\tau}
\end{array}\right);\left(\begin{array}{c}
s_{1}(\theta)\\
s_{2:\tau}(\theta)
\end{array}\right)\right.\\
 & \left.,\left(\begin{array}{cc}
a_{1n}^{-1}I_{d_{1}} & 0\\
0 & a_{2:\tau,n,diag}^{-1}
\end{array}\right)\left(\begin{array}{cc}
A_{11} & A_{1,2:\tau}\\
A_{2:\tau,1} & A_{2:\tau,2:\tau}
\end{array}\right)(\theta_{0}+a_{1n}^{-1}t)\left(\begin{array}{cc}
a_{1n}^{-1}I_{d_{1}} & 0\\
0 & a_{2:\tau,n,diag}^{-1}
\end{array}\right)\right\} \\
= & \widetilde{f}_{n,2:\tau}(s_{\text{obs},2:\tau}+B_{n,2:\tau}v_{2:\tau}\mid s_{\text{obs},1}+B_{n,1}v_{1},\theta)\widetilde{f}_{n,1}(s_{\text{obs},1}+B_{n,1}v_{1}\mid\theta),
\end{align*}
where 
\begin{align*}
 & \widetilde{f}_{n,2:\tau}(s_{2:\tau}\mid s_{1},\theta)=N\left\{ s_{2:\tau};\mu_{2:\tau|1}(\theta),a_{2:\tau,n,diag}^{-1}A_{2:\tau|1}(\theta)a_{2:\tau,n,diag}^{-1}\right\} \\
\text{and } & \widetilde{f}_{n,1}(s_{1}\mid\theta)=N\left\{ s_{1};s_{1}(\theta),a_{1n}^{-2}A_{11}(\theta)\right\} ,
\end{align*}
where $\mu_{2:\tau|1}(\theta,s_{1})=s_{2:\tau}(\theta)+a_{2:\tau,n,diag}^{-1}A_{2:\tau,1}(\theta)A_{11}^{-1}(\theta)a_{1n}\{s_{1}-s_{1}(\theta)\}$. 

The expansion of the log-ratio between the two conditional probabilities
can be obtained by following the expansion (\ref{eq:Lem1_eqn1}),
and we have 
\begin{align*}
 & \log\frac{\widetilde{f}_{n,2:\tau}(s_{\text{obs},2:\tau}+B_{n,2:\tau}v_{2:\tau}\mid s_{\text{obs},1}+B_{n,1}v_{1},\theta_{0}+a_{1n}^{-1}t)}{\widetilde{f}_{n,2:\tau}(s_{\text{obs},2:\tau}+B_{n,2:\tau}v_{2:\tau}\mid s_{\text{obs},1}+B_{n,1}v_{1},\theta_{0})}\\
= & t^{T}\{A_{2:\tau,1}(\theta_{0})A_{11}^{-1}(\theta_{0})Ds_{1}(\theta_{0})\}^{T}A_{2:\tau|1}(\theta_{0})^{-1}\{A_{2:\tau|1}(\theta_{0})^{1/2}W_{\text{obs},2:\tau}+a_{2:\tau,n,diag}B_{n,2:\tau}v_{2:\tau}\}\\
 & -\frac{1}{2}t^{T}\{A_{2:\tau,1}(\theta_{0})A_{11}^{-1}(\theta_{0})Ds_{1}(\theta_{0})\}^{T}A_{2:\tau|1}(\theta_{0})^{-1}\{A_{2:\tau,1}(\theta_{0})A_{11}^{-1}(\theta_{0})Ds_{1}(\theta_{0})\}t\\
 & -\frac{a_{1n}^{-1}}{2}\left\Vert \left\{ \sum_{i=1}^{p}D_{\theta_{i}}A_{2:\tau|1}^{-1}(\theta_{0}+e_{n}^{(2)}t)t_{i}\right\} ^{1/2}\right.\\
 & \left.\cdot\left.\left[A_{2:\tau|1}(\theta)^{1/2}W_{\text{obs},2:\tau}+a_{2:\tau,n,diag}B_{n,2:\tau}v_{2:\tau}-\left\{ A_{2:\tau,1}(\theta)A_{11}^{-1}(\theta)Ds_{1}(\theta)\right\} t\right]\right|_{\theta=\theta_{0}+e_{n}^{(1)}t}\right\Vert ^{2}\\
 & +\frac{a_{1n}^{-1}}{2}\left\{ D\log\left|A_{2:\tau|1}(\theta_{0}+e_{n}^{(3)}t)\right|\right\} ^{T}t.
\end{align*}
Specifically, the expansion above is otained by that in the right
hand side of (\ref{eq:Lem1_eqn1}), replacing $s_{\text{obs}}+B_{n}v$
with $s_{\text{obs},2:\tau}+B_{n,2:\tau}v_{2:\tau}$, $a_{n,diag}^{-1}$
with $a_{2:\tau,n,diag}^{-1}$, $A(\theta)$ with $A_{2:\tau|1}(\theta)$
and $s(\theta)$ with $\mu_{2:\tau|1}(\theta,s_{\text{obs},1}+B_{n,1}v_{1})$.
Furthermore, replace $A(\theta_{0})^{1/2}W_{\text{obs}}$ with $A_{2:\tau|1}(\theta)^{1/2}W_{\text{obs},2:\tau}$,
and $t^{T}Ds(\theta_{0})^{T}a_{1n}^{-1}a_{n,diag}$ with $t^{T}D_{\theta}\mu_{2:\tau|1}(\theta_{0},s_{\text{obs},1}+B_{n,1}v_{1})a_{1n}^{-1}a_{2:\tau,n,diag}$
which is dominated by

\noindent$t^{T}\{A_{2:\tau,1}(\theta)A_{11}^{-1}(\theta)Ds_{1}(\theta_{0})\}^{T}$
since $s_{\text{obs},1}+B_{n,1}v_{1}-s_{1}(\theta_{0})\rightarrow0$
in probability and $a_{1n}^{-1}a_{2:\tau,n,diag}\rightarrow\infty$.
Similarly, we have 
\begin{align*}
 & \log\frac{\widetilde{f}_{n,1}(s_{\text{obs},1}+B_{n,1}v_{1}\mid\theta_{0}+a_{1n}^{-1}t)}{\widetilde{f}_{n,1}(s_{\text{obs},1}+B_{n,1}v_{1}\mid\theta_{0})}\\
= & t^{T}Ds_{1}(\theta_{0})^{T}A_{11}(\theta_{0})^{-1}\{A_{11}(\theta_{0})^{1/2}W_{\text{obs}}+a_{1n}B_{n,1}v_{1}\}-\frac{1}{2}t^{T}Ds_{1}(\theta_{0})^{T}A_{11}(\theta_{0})^{-1}Ds_{1}(\theta_{0})t\\
 & -\frac{a_{1n}^{-1}}{2}\|\{\sum_{i=1}^{p}D_{\theta_{i}}A_{11}^{-1}(\theta_{0}+e_{n}^{(2)}t)t_{i}\}^{1/2}[A_{11}(\theta_{0})^{1/2}W_{\text{obs},1}+a_{1n}B_{n,1}v_{1}-Ds_{1}(\theta_{0}+e_{n}^{(1)}t)t\}]\|^{2}\\
 & +\frac{a_{1n}^{-1}}{2}\{D\log|A_{11}(\theta_{0}+e_{n}^{(3)}t)|\}^{T}t,
\end{align*}
by replacing $t^{T}Ds(\theta_{0})^{T}a_{1n}^{-1}a_{n,diag}$ in the
right hand side of (\ref{eq:Lem1_eqn1}) with $t^{T}Ds(\theta_{0})^{T}$. 

Then combing the two expansions above, for a fixed $v$ and a compact
set $K$, we have 
\begin{align*}
\sup_{t\in K}\left|\log\frac{\widetilde{f}_{n}(s_{\text{obs}}+B_{n}v|\theta_{0}+a_{1n}^{-1}t)}{\widetilde{f}_{n}(s_{\text{obs}}+B_{n}v|\theta_{0})}-t^{T}V_{\theta_{0}}\Delta_{n,\theta_{0}}+\frac{1}{2}t^{T}V_{\theta_{0}}t\right| & \rightarrow0,
\end{align*}
in probability as $n\rightarrow\infty$, where 
\begin{align*}
 & V_{\theta_{0}}:=\{A_{2:\tau,1}(\theta_{0})A_{11}^{-1}(\theta_{0})Ds_{1}(\theta_{0})\}^{T}A_{2:\tau|1}(\theta_{0})^{-1}\{A_{2:\tau,1}(\theta_{0})A_{11}^{-1}(\theta_{0})Ds_{1}(\theta_{0})\}+Ds_{1}(\theta_{0})^{T}A_{11}(\theta_{0})^{-1}Ds_{1}(\theta_{0}),\\
\text{and } & \Delta_{n,\theta_{0}}:=V_{\theta_{0}}^{-1}\left(Ds_{1}(\theta_{0})^{T}A_{11}(\theta_{0})^{-1},\{A_{2:\tau,1}(\theta_{0})A_{11}^{-1}(\theta_{0})Ds_{1}(\theta_{0})\}^{T}A_{2:\tau|1}(\theta_{0})^{-1}\right)(W_{\text{obs}}''+a_{n,diag}B_{n}v).
\end{align*}
Therefore (KV1) holds. 

For (KV2), lines in the proof of \citet[Lemma A1]{li2018convergence} can be followed,
and the only part that needs extension is to show that for any constant
$M>0$ and $\|\bar{v}\|\le M$, 
\begin{equation}
\int_{\|t\|>M_{n}}\tilde{\pi}(t,\bar{v}\mid v)\:dt=o(1)\text{ whe }M_{n}\rightarrow\infty,\label{eq:Lem1_eqn2}
\end{equation}
where $\bar{v}=a_{n,diag}\{s-s(\theta_{0})\}$ and $\tilde{\pi}(t,\bar{v}\mid v)=\pi_{\delta}(\theta_{0}+a_{1n}^{-1}t)\|a_{n,diag}\|^{-1}\tilde{f}_{n}\{s(\theta_{0})+a_{n,diag}^{-1}\bar{v}+B_{n}v\mid\theta_{0}+a_{1n}^{-1}t\}$.
First when $\|\bar{v}\|<M$ and $M_{n}$ is large enough, since the
elements of $a_{n,diag}B_{n}$ converge to $0$ or $1$ and $v$ is
fixed, 
\[
\{t:\|t\|>M_{n}\}\subset\left\{ t:\inf_{\theta\in B_{\delta}}\|a_{1n}^{-1}a_{n,diag}Ds(\theta)t\|/2\geq\|\bar{v}+a_{n,diag}B_{n}v\|\right\} .
\]
Choose $\delta$ small enough such that $Ds(\theta)$ and $A(\theta)^{1/2}$
are bounded and away from $0$ which is possible by Condition \ref{sum_conv}(iii).
Then when $\|t\|\geq M_{n}$, by Taylor expansion, 
\begin{align*}
 & \|a_{n,diag}\|^{-1}\tilde{f}_{n}\{s(\theta_{0})+a_{n,diag}^{-1}\bar{v}+B_{n}v\mid\theta_{0}+a_{1n}^{-1}t\}\\
= & \|a_{n,diag}\|^{-1}N\{s(\theta_{0})+a_{n,diag}^{-1}\bar{v}+B_{n}v;s(\theta_{0}+a_{1n}^{-1}t),a_{n,diag}^{-1}A(\theta_{0}+a_{1n}^{-1}t)a_{n,diag}^{-1}\}\\
= & |A(\theta_{0}+a_{1n}^{-1}t)|^{1/2}\exp[-\|A(\theta_{0}+a_{1n}^{-1}t)^{-1/2}\{a_{1n}^{-1}a_{n,diag}Ds(\theta_{0}+e_{n}^{(1)}t)t-\bar{v}-a_{n,diag}B_{n}v\}\|^{2}/2]\\
\leq & N\{\|a_{1n}^{-1}a_{n,diag}Ds(\theta_{0}+e_{n}^{(1)}t)t\|^{2}/2;0,A(\theta_{0}+a_{1n}^{-1}t)\}\\
\leq & cN(\lambda_{\text{max}}^{-1}\lambda_{\text{min}}\|t\|/2;0,1),
\end{align*}
where $\lambda_{\text{min}}$ and $\lambda_{\text{max}}$ are the
bounds of $Ds(\theta)$ and $A(\theta)^{1/2}$ for $\theta\in B_{\delta}$
and $c$ is some constant. Then by following the lines in the proof
of \citet[Lemma A1]{li2018convergence}, (\ref{eq:Lem1_eqn2}) holds and we have (KV2)
holds. Therefore (\ref{eq:Lem1_term2}) holds. 

Finally the lemma holds by applying (\ref{eq:Lem1_term2}) on (\ref{eq:Lem1_term1})
and noting that in $\Delta_{n,\theta_{0}}$, $a_{n,diag}a_{n,diag}'^{-1}a_{k_{\varepsilon}n}\varepsilon_{n}v=\varepsilon_{n}a_{n,diag}v$.

\end{proof}

\begin{proof}[Proof of Lemma \ref{lem:ABC_dens_expans}]

This extends \citet[Lemma A2]{li2018convergence}. By the definition of $\tpi_{\varepsilon,tv}(t,v)$,
we have

\noindent $\tpi_{B_{\delta}}(1)=a_{n,\varepsilon}^{d-p}\int_{\mathbb{R}^{d}}\int_{t(B_{\delta})}\tpi_{\varepsilon,tv}(t,v)\,dtdv$.
Then by Lemma \ref{lem:normal_expect_expans}, the lemma holds. 

\end{proof}

\begin{proof}[Proof of Lemma \ref{lem:ABC_CDF_normal}]

This extends \citet[Lemma 5 in supplement]{li2018convergence}. The lines of the proof therein
can be followed exactly. 

\end{proof}

\begin{proof}[Proof of Lemma \ref{lem:ABC_CDF_tails}]

The lemma extends \citet[Lemma A3]{li2018convergence}. It holds immediately by Lemma
\ref{lem:ABC_probs_tails}.

\end{proof}

\begin{proof}[Proof of Lemma \ref{lem:ABC_probs_tails}]

The lemma extends \citet[Lemma 2 in supplement]{li2018convergence}. The lines of the proof therein
can be followed exactly. 

\end{proof}

\begin{proof}[Proof of Lemma \ref{lem:ABC_expect_tails}]

The lemma extends \citet[Lemma 3]{li2018asymptotic}. The lines of the proof therein
can be followed exactly except replacing $a_{n}$ therein by $a_{\tau n}$. 

\end{proof}

\begin{proof}[Proof of Lemma \ref{lem:ABC_expect_normal}]

The lemma extends \citet[Lemma 6]{li2018asymptotic}. The lines of the proof therein
can be followed exactly with some trivial extensions. 

\end{proof}

\begin{proof}[Proof of Lemma \ref{lem:normal_expect_bounded}]

This lemma extends \citet[Lemma 7]{li2018asymptotic}. For simplicity, here $\int$
denotes the integration over the whole Euclidean space. According
to (ii), $g_{1}(v)$ can be written as $\bar{g}_{1}(\|v\|)$. Since
we just want to obtain orders of the integrals, without loss of generality,
for all $i$, assume $k_{ni}^{(2)}=k_{ni}^{(3)}=1$. For completeness,
with quantities defined in the lemma, we state the results of \citet[Lemma 7]{li2018asymptotic} in the following, 
\begin{align}
 & \int_{\mathbb{R}^{p}}\int_{\mathbb{R}^{d}}P_{l}(t,v)k_{n}^{d}g_{1}[k_{n}C_{n}(t)\{A_{n}(t)t-b_{n}v-k_{n}^{-1}c_{n}\}]g_{2}(Qv)\,dvdt=O_{p}(1),\label{eq:old_normal_expect1}\\
 & \int_{\mathbb{R}^{p}}\int_{\mathbb{R}^{d}}k_{n}^{d}g_{1}[k_{n}C_{n}(t)\{A_{n}(t)t-b_{n}v-k_{n}^{-1}c_{n}\}]g_{2}(Qv)\,dvdt=\Theta_{p}(1),\label{eq:old_normal_expect2}
\end{align}
where $(k_{\infty},b_{\infty}):=\lim_{n\rightarrow\infty}(k_{n},b_{n})$
satisfies one of the two cases: A. $(k_{\infty},b_{\infty})=(\infty,1);$
B. $(k_{\infty},b_{\infty})=(c_{k},c_{b})$ where $c_{k}\in(0,\infty)$
and $c_{b}\geq0$. The extensions to (\ref{eq:normal_expect1}) and
(\ref{eq:normal_expect2}) are that the scalar $k_{n}$ is replaced
by the diagonal matrix $K_{n}$ and the variable $b_{n}v$ is replaced
by the transformation $b_{n}Q_{n}v$. Now we consider each of the
three cases $\textcircled{\small1}$, $\textcircled{\small2}$ and
$\textcircled{\small3}$.

For case $\textcircled{\small3}$, we have the diagonal matrix $K_{n}=K_{n}^{(1)}$
the diagonal elements of which all go to $\infty$, $Q_{n}=I_{d^{(1)}}$
and $\varphi_{n}=b_{n}^{d-p}\|K_{n}^{(1)}\|$. Then (\ref{eq:normal_expect1})
and (\ref{eq:normal_expect2}) hold by a trivial extension of the
proof of (\ref{eq:old_normal_expect1}) and (\ref{eq:old_normal_expect2})
for its case A. More specifically, let $v^{*}=K_{n}^{(1)}\{A_{n}(t)t-b_{n}v\}-c_{n}$,
(\ref{eq:normal_expect1}) can be rewritten as 
\begin{align*}
 & \int_{\mathbb{R}^{p}}\int_{\mathbb{R}^{d}}P_{l}(\|t\|,\|v\|)g_{1}[C_{n}(t)K_{n}^{(1)}\{A_{n}(t)t-b_{n}v-K_{n}^{-1}c_{n}\}]g_{2}(Qv)\,dvdt\\
= & \|K_{n}^{(1)}\|^{-1}\int_{\mathbb{R}^{p}}\int_{\mathbb{R}^{d}}P_{l}(\|t\|,\|v(v^{*})\|)g_{1}\{C_{n}(t)v^{*}\}g_{2}[Q\{A_{n}(t)t-K_{n}^{(1)^{-1}}(v^{*}+c_{n})\}\}]\,dv^{*}dt,
\end{align*}
then the arguments in the proof of (\ref{eq:old_normal_expect1})
for its case A in \citet[Lemma 7]{li2018asymptotic} can be applied. 

For case $\textcircled{\small1}$, we have $K_{n}=diag(K_{n}^{(2)},K_{n}^{(3)})=I_{d^{(2)}+d^{(3)}}.$
First consider $\lim B_{n}$ is UL bounded. Then $Q_{n}v$ is a bi-jection
to $v$, and (\ref{eq:normal_expect1}) and (\ref{eq:normal_expect2})
hold by a trivial extension of the proof of (\ref{eq:old_normal_expect1})
and (\ref{eq:old_normal_expect2}) for its case B. Next consider $\lim B_{n}=0$.
To simplify notations, assume $C_{n}(t)=I_{d}$. By the definition
of $Q_{n}$, rewrite the arguments of $g_{1}(\cdot)$ in the integral
as follows,
\begin{align*}
 & \|A_{n}(t)t-b_{n}Q_{n}v-c_{n}\|^{2}\\
= & \sum_{i=1}^{2}\|A_{n}^{(i)}(t)t-b_{n}v^{(i)}+c_{n}^{(i)}\|^{2}+\|A_{n}^{(3)}(t)t-b_{n}(Rv^{(2)}+B_{n}v^{(3)})+c_{n}^{(3)}\|^{2}\\
\geq & \|A_{n}^{(1:2)}(t)t-b_{n}v^{(1:2)}+c_{n}^{(1:2)}\|^{2}.
\end{align*}
Then the integral in (\ref{eq:normal_expect1}) is bounded as the
following, 
\begin{align*}
 & \int_{\mathbb{R}^{p}}\int_{\mathbb{R}^{d}}P_{l}(\|t\|,\|v\|)g_{1}\{A_{n}(t)t-b_{n}Q_{n}v-c_{n}\}g_{2}(Qv)\,dvdt\\
\leq & \int_{\mathbb{R}^{p}}\int_{\mathbb{R}^{d}}P_{l}(\|t\|,\|v\|)\bar{g}_{1}\{\|A_{n}^{(1:2)}(t)t-b_{n}v^{(1:2)}+c_{n}^{(1:2)}\|\}g_{2}(Qv)\,dvdt.
\end{align*}
Therefore (\ref{eq:normal_expect1}) and (\ref{eq:normal_expect2})
hold by integrating out $v^{(3)}$ first and then applying case B
of (\ref{eq:old_normal_expect1}) and (\ref{eq:old_normal_expect2}).

For case $\textcircled{\small2}$, we have $d^{(1)}>0$ and $d^{(2)}+d^{(3)}>0$.
For this case, the extension is non-trivial, because part of $K_{n}$
go to $\infty$ and the other part go to $1$, and when $d^{(3)}>0$,
$Q_{n}v$ reduces $v$ to its sub-dimensions $v^{(1:2)}$. To deal
with the heteogeneity, below we decompose the norm in $g_{1}(\cdot)$
to seperate the dominating dimensions of $t$ and $v$ from the dominated
ones, then the target integral can be bounded in the way that $g_{1}(\cdot)$
only contains the dominating dimensions of $t$ and $v$, and $g_{2}(\cdot)$
contains the rest of the dimensions. 

Similar as case $\textcircled{\small1}$, assume $C_{n}(t)=I_{d}$.
First rewrite the arguments of $g_{1}(\cdot)$ in the integral as
follows, 
\begin{align}
 & \|\left(\begin{array}{ccc}
K_{n}^{(1)} & 0 & 0\\
0 & I_{d^{(2)}} & 0\\
0 & 0 & I_{d^{(3)}}
\end{array}\right)\left\{ A_{n}(t)t-b_{n}\left(\begin{array}{ccc}
I_{d^{(1)}} & 0 & 0\\
0 & I_{d^{(2)}} & 0\\
0 & R & B_{n}
\end{array}\right)v\right\} +c_{n}\|^{2}\nonumber \\
= & \|A_{n}^{(11)}(t)t-b_{n}v^{(11)}+k_{n1}^{(1)^{-1}}c_{n}^{(11)}\|^{2}k_{n1}^{(1)^{2}}+\sum_{i=2}^{\tau_{1}}\|A_{n}^{(1i)}(t)t-b_{n}v^{(1i)}+k_{ni}^{(1)^{-1}}c_{n}^{(1i)}\|^{2}k_{ni}^{(1)^{2}}\nonumber \\
 & +\|A_{n}^{(2)}(t)t-b_{n}v^{(2)}+c_{n}^{(2)}\|^{2}+\|A_{n}^{(3)}(t)t-b_{n}Rv^{(2)}-b_{n}B_{n}v_{n}^{(3)}+c_{n}^{(3)}\|^{2}.\label{eq:g1_arguments1}
\end{align}
Let $P_{A}:=(A^{T}A)^{-1}A^{T}$ and $H_{A}:=A(A^{T}A)^{-1}A^{T}$
for a matrix $A$. Using condition (v) and the following orthogonality
decomposition, 
\[
\|At-v\|^{2}=(t-P_{A}v)^{T}A^{T}A(t-P_{A}v)+v^{T}(I-H_{A})v,
\]
the right hand side of (\ref{eq:g1_arguments1}) can be decomposed
as follows,
\begin{align}
 & k_{n1}^{(1)^{2}}\|\{A_{n}^{(11)}(t)^{T}A_{n}^{(11)}(t)\}^{1/2}\{t-P_{A_{n}^{(11)}}(b_{n}v^{(11)}+k_{n1}^{(1)^{-1}}c_{n}^{(11)})\}\|^{2}\nonumber \\
 & +\sum_{i=2}^{\tau_{1}}k_{ni}^{(1)^{2}}\|\{A_{n}^{(1i)}(t)^{T}A_{n}^{(1i)}(t)\}^{1/2}\{t-P_{A_{n}^{(1i)}}(b_{n}v^{(1i)}+k_{ni}^{(1)^{-1}}c_{n}^{(1i)})\}\|^{2}\nonumber \\
 & +\sum_{i=1}^{\tau_{1}}k_{ni}^{(1)^{2}}\|(I_{d_{i}^{(1)}}-H_{A_{n}^{(1i)}})(b_{n}v^{(1i)}+k_{ni}^{(1)^{-1}}c_{n}^{(1i)})\|^{2}\nonumber \\
 & +\|A_{n}^{(2)}(t)t-b_{n}v^{(2)}+c_{n}^{(2)}\|^{2}+\|A_{n}^{(3)}(t)t-b_{n}Rv^{(2)}-b_{n}B_{n}v_{n}^{(3)}+c_{n}^{(3)}\|^{2}\nonumber \\
= & \|\{A_{n}^{(11)}(t)^{T}A_{n}^{(11)}(t)\}^{1/2}(t^{*}-P_{A_{n}^{(11)}}c_{n}^{(11)})\|^{2}+\sum_{i=2}^{\tau_{1}}\|\{A_{n}^{(1i)}(t)^{T}A_{n}^{(1i)}(t)\}^{1/2}(v_{1:p}^{(1i)^{*}}-P_{A_{n}^{(1i)}}c_{n}^{(1i)})\|^{2}\nonumber \\
 & +\sum_{i=1}^{\tau_{1}}\|v_{(p+1):d_{1}^{(i)}}^{(1i)^{*}}+Q_{A_{n}^{(1i)}}^{T}c_{n}^{(1i)}\|^{2}\nonumber \\
 & +\|A_{n}^{(2)}(t)(k_{n1}^{(1)^{-1}}t^{*}-b_{n}v_{1:p}^{(11)})-b_{n}v^{(2)}+c_{n}^{(2)}\|^{2}\nonumber \\
 & +\|A_{n}^{(3)}(t)(k_{n1}^{(1)^{-1}}t^{*}-b_{n}v_{1:p}^{(11)})-b_{n}Rv^{(2)}-b_{n}B_{n}v_{n}^{(3)}+c_{n}^{(3)}\|^{2},\label{eq:g1_arguments2}
\end{align}
where we use the transformations $t^{*}=k_{n1}^{(1)}(t-b_{n}P_{A_{n}^{(11)}}v^{(11)})$,
$v_{1:p}^{(1i)^{*}}=k_{ni}^{(1)}(t-b_{n}P_{A_{n}^{(1i)}}v^{(1i)})$
and $v_{(p+1):d_{1}^{(i)}}^{(1i)^{*}}=k_{ni}^{(1)}b_{n}Q_{A_{n}^{(1i)}}^{T}v^{(1i)}$,
and $Q_{A_{n}^{(1i)}}\in\mathbb{R}^{d_{i}^{(1)}\times(d_{i}^{(1)}-p)}$
is defined by the QR decomposition that $I_{d_{i}^{(1)}}-H_{A_{n}^{(1i)}}=Q_{A_{n}^{(1i)}}Q_{A_{n}^{(1i)}}^{T}$,
i.e.~the columns of $Q_{A_{n}^{(1i)}}$ are orthogonal eigenvectors
and $Q_{A_{n}^{(1i)}}^{T}Q_{A_{n}^{(1i)}}=I_{d_{i}^{(1)}-p}$. For
the transformations above, to obtain its Jacobian, rewrite it as follows,
\begin{align*}
 & \left(\begin{array}{c}
v_{1:p}^{(11)^{*}}\\
v_{(p+1):d_{1}^{(1)}}^{(11)^{*}}\\
t^{*}\\
v_{1:p}^{(12)^{*}}\\
v_{(p+1):d_{2}^{(1)}}^{(12)^{*}}\\
\vdots\\
v_{1:p}^{(1\tau_{1})^{*}}\\
v_{(p+1):d_{\tau_{1}}^{(1)}}^{(1\tau_{1})^{*}}
\end{array}\right)=\left(\begin{array}{c}
-P_{A_{n}^{(11)}}v^{(11)}\\
k_{n1}^{(1)}b_{n}Q_{A_{n}^{(11)}}^{T}v^{(11)}\\
k_{n1}^{(1)}(t-b_{n}P_{A_{n}^{(11)}}v^{(11)})\\
k_{n2}^{(1)}(t-b_{n}P_{A_{n}^{(12)}}v^{(12)})\\
k_{n2}^{(1)}b_{n}Q_{A_{n}^{(12)}}^{T}v^{(12)}\\
\vdots\\
k_{n\tau_{1}}^{(1)}(t-b_{n}P_{A_{n}^{(1\tau_{1})}}v^{(1\tau_{1})})\\
k_{n\tau_{1}}^{(1)}b_{n}Q_{A_{n}^{(1\tau_{1})}}^{T}v^{(1\tau_{1})}
\end{array}\right)\\
 & =\left(\begin{array}{ccccc}
-P_{A_{n}^{(11)}} & 0 & 0 & \dots & 0\\
k_{n1}^{(1)}b_{n}Q_{A_{n}^{(11)}}^{T} & 0 & 0 & \dots & 0\\
-k_{n1}^{(1)}b_{n}P_{A_{n}^{(11)}} & k_{n1}^{(1)}I_{p} & 0 & \dots & 0\\
0 & k_{n2}^{(1)}I_{p} & -k_{n2}^{(1)}b_{n}P_{A_{n}^{(12)}} & \dots & 0\\
0 & 0 & k_{n2}^{(1)}b_{n}Q_{A_{n}^{(12)}}^{T} & \dots & 0\\
\vdots & \vdots & \vdots & \ddots & \vdots\\
0 & k_{n\tau_{1}}^{(1)}I_{p} & 0 & \dots & -k_{n\tau_{1}}^{(1)}b_{n}P_{A_{n}^{(1\tau_{1})}}\\
0 & 0 & 0 & \dots & k_{n\tau_{1}}^{(1)}b_{n}Q_{A_{n}^{(1\tau_{1})}}^{T}
\end{array}\right)\left(\begin{array}{c}
v^{(11)}\\
t\\
v^{(12)}\\
\vdots\\
v^{(1\tau_{1})}
\end{array}\right)\\
\Rightarrow & \left(\begin{array}{c}
v^{(11)^{*}}\\
t^{*}\\
v^{(12)^{*}}\\
\vdots\\
v^{(1\tau_{1})^{*}}
\end{array}\right)=\left(\begin{array}{ccccc}
\left(\begin{array}{cc}
I_{p} & 0\\
0 & k_{n1}^{(1)}b_{n}
\end{array}\right)W_{A_{n}^{(11)}} & 0 & 0 & \dots & 0\\
-k_{n1}^{(1)}b_{n}P_{A_{n}^{(11)}} & k_{n1}^{(1)}I_{p} & 0 & \dots & 0\\
0 & \left(\begin{array}{c}
k_{n2}^{(1)}I_{p}\\
0
\end{array}\right) & k_{n2}^{(1)}b_{n}W_{A_{n}^{(12)}} & \dots & 0\\
\vdots & \vdots & \vdots & \ddots & \vdots\\
0 & \left(\begin{array}{c}
k_{n\tau_{1}}^{(1)}I_{p}\\
0
\end{array}\right) & 0 & \dots & k_{n\tau_{1}}^{(1)}b_{n}W_{A_{n}^{(1\tau_{1})}}
\end{array}\right)\left(\begin{array}{c}
v^{(11)}\\
t\\
v^{(12)}\\
\vdots\\
v^{(1\tau_{1})}
\end{array}\right),
\end{align*}
 where $W_{A_{n}^{(1i)}}=\left(\begin{array}{c}
-P_{A_{n}^{(1i)}}\\
Q_{A_{n}^{(1i)}}^{T}
\end{array}\right)$. Then the Jacobian is given below,
\begin{align*}
D\left(\begin{array}{c}
v^{(11)^{*}}\\
t^{*}\\
v^{(12)^{*}}\\
\vdots\\
v^{(1\tau_{1})^{*}}
\end{array}\right) & /D\left(\begin{array}{c}
v^{(11)}\\
t\\
v^{(12)}\\
\vdots\\
v^{(1\tau_{1})}
\end{array}\right)=\left|\begin{array}{cc}
\left(\begin{array}{cc}
I_{p} & 0\\
0 & k_{n1}^{(1)}b_{n}
\end{array}\right)W_{A_{n}^{(11)}} & 0\\
-k_{n1}^{(1)}b_{n}P_{A_{n}^{(11)}} & k_{n1}^{(1)}I_{p}
\end{array}\right|\left|diag(k_{n2}^{(1)}b_{n}W_{A_{n}^{(12)}},\dots,k_{n\tau_{1}}^{(1)}b_{n}W_{A_{n}^{(1\tau_{1})}})\right|\\
 & =\left|\left(\begin{array}{cc}
I_{p} & 0\\
0 & k_{n1}^{(1)}b_{n}
\end{array}\right)W_{A_{n}^{(11)}}\right|k_{n1}^{(1)^{p}}\prod_{i=2}^{\tau_{1}}(k_{ni}^{(1)}b_{n})^{d_{i}^{(1)}}|W_{A_{n}^{(i\tau_{1})}}|\\
 & =b_{n}^{d^{(1)}-p}|K_{n}^{(1)}|c_{W},\text{ where the scalar }c_{W}=\prod_{i=1}^{\tau_{1}}|W_{A_{n}^{(i\tau_{1})}}|\in(0,\infty).
\end{align*}
Since $k_{n1}^{(1)}\rightarrow\infty$, in (\ref{eq:g1_arguments2})
among the terms containing $t^{*}$, the first term dominates the
others. Then we ignore the dominated terms by using the bound below,
{\small
\begin{align}
\eqref{eq:g1_arguments2} & \geq\left\Vert \left(\begin{array}{ccccc}
\{A_{n}^{(11)}(t)^{T}A_{n}^{(11)}(t)\}^{1/2}\\
 & \{A_{n}^{(12)}(t)^{T}A_{n}^{(12)}(t)\}^{1/2}\\
 &  & \ddots\\
 &  &  & \{A_{n}^{(1\tau_{1})}(t)^{T}A_{n}^{(1\tau_{1})}(t)\}^{1/2}\\
 &  &  &  & I_{d^{(1)}-\tau_{1}p}
\end{array}\right)\right.\nonumber \\
 & \left.\left\{ \left(\begin{array}{c}
t^{*}\\
v_{1:p}^{(12)^{*}}\\
\vdots\\
v_{1:p}^{(1\tau_{1})^{*}}\\
v_{(p+1):d_{1}^{(1)}}^{(11)^{*}}\\
\vdots\\
v_{(p+1):d_{\tau_{1}}^{(1)}}^{(1\tau_{1})^{*}}
\end{array}\right)+\left(\begin{array}{c}
P_{A_{n}^{(11)}}c_{n}^{(11)}\\
\vdots\\
P_{A_{n}^{(1\tau_{1})}}c_{n}^{(1\tau_{1})}\\
Q_{A_{n}^{(11)}}^{T}c_{n}^{(11)}\\
\vdots\\
Q_{A_{n}^{(1\tau_{1})}}^{T}c_{n}^{(1\tau_{1})}
\end{array}\right)\right\} \right\Vert ^{2}\label{eq:g_arguments_bound1}
\end{align}
}
In other words, we only keep $t^{*}$ and $v^{(1)^{*}}\backslash v_{1:p}^{(11)^{*}}$
in $g_{1}(\cdot)$ and leave the rest to $g_{2}(\cdot)$. So for $g_{2}(Qv)$,
we use the bound below, 
\begin{align}
 & \|Qv\|^{2}=\sum_{i=1}^{\tau_{1}}\|Q^{(1i)}v^{(1i)}\|^{2}+\|Q^{(2:3)}v^{(2:3)}\|^{2}\geq\|Q^{(11)}W_{A_{n}^{(11)},1:p}^{-1}v_{1:p}^{(11)^{*}}\|^{2}+\|Q^{(2:3)}v^{(2:3)}\|^{2}.\label{eq:g_arguments_bound2}
\end{align}
Then the integral in (\ref{eq:normal_expect1}) is bounded as the
following,
\begin{align}
 & \int_{\mathbb{R}^{p}}\int_{\mathbb{R}^{d}}P_{l}(\|t\|,\|v\|)b_{n}^{d^{(1)}-p}|K_{n}^{(1)}|g_{1}[K_{n}\{A_{n}(t)t-b_{n}Q_{n}v\}-c_{n}\}]g_{2}(Qv)\,dvdt\nonumber \\
\leq & \int_{\mathbb{R}^{p}}\int_{\mathbb{R}^{d}}P_{l}^{*}(\|t\|,\|v^{(1)^{*}}\|,\|v^{(2)}\|)g_{1}^{*}(t^{*},v_{(p+1):d_{1}^{(1)}}^{(11)^{*}},v^{(12)^{*}},\dots,v^{(1\tau)^{*}})g_{2}^{*}(v_{1:p}^{(11)},v^{(2)},v^{(3)})\,dv^{(1)^{*}}dt^{*}dv^{(2)}dv^{(3)},\label{eq:normal_expect_bound}
\end{align}
where $P_{l}^{*}(\cdot)$, $g_{1}^{*}(\cdot)$ and $g_{2}^{*}(\cdot)$
are obtained by the transformations of variables and the bounds in
(\ref{eq:g_arguments_bound1}) and (\ref{eq:g_arguments_bound2}).
It can be seen that the right hand side of (\ref{eq:normal_expect_bound})
is a trivial extension of the integral in (\ref{eq:old_normal_expect1})
where $k_{n}=1$, $C_{n}(t)=I_{d}$ and $b_{n}=0$. Note that the
above derivations are not affected by letting either $d^{(3)}>0$
or $d^{(3)}=0$. Therefore by \citet[Lemma 7]{li2018asymptotic}, for case $\textcircled{\small2}$
(\ref{eq:normal_expect1}) holds and (\ref{eq:normal_expect2}) holds
similarly. 

\end{proof}

\begin{proof}[Proof of Lemma \ref{lem:normal_expect_expans}]

This lemma extends \citet[Lemma 5]{li2018asymptotic}. First consider $\tpi_{B_{\delta}}(1)$.
With the transformation $t=t(\theta)$, 
\begin{align}
\tpi_{B_{\delta}}(1) & =a_{n,\varepsilon}^{-p}\int_{t(B_{\delta})}\int_{\mathbb{R}^{d}}\pi(\theta_{0}+a_{n,\varepsilon}^{-1}t)\ftil_{n}(s_{{\rm obs}}+\varepsilon_{n}v\mid\theta_{0}+a_{n,\varepsilon}^{-1}t)K(v)\,dvdt.\label{eq:norm_expand1}
\end{align}
We can obtain an expansion of $\tpi_{B_{\delta}}(1)$ by expanding
$\ftil_{n}(s_{{\rm obs}}+\varepsilon_{n}v\mid\theta_{0}+a_{n,\varepsilon}^{-1}t)K(v)$
as follows. Applying a Taylor expansion to $s(\theta_{0}+a_{n,\varepsilon}^{-1}t)$
and $A(\theta_{0}a_{n,\varepsilon}^{-1}t)^{-1/2}$ gives 
\begin{align}
 & \ftil_{n}(s_{{\rm obs}}+\varepsilon_{n}v\mid\theta_{0}+a_{n,\varepsilon}^{-1}t)K(v)\nonumber \\
 & =|a_{n,diag}|N[a_{n,diag}\{s_{\text{obs}}+\varepsilon_{n}v-s(\theta_{0}+a_{n,\varepsilon}^{-1}t)\};0,A(\theta_{0}+a_{n,\varepsilon}^{-1}t)]K(v)\nonumber \\
 & =\frac{|a_{n,diag}|}{|A(\theta_{0}+a_{n,\varepsilon}^{-1}t)|^{1/2}}N\left(\Big\{ A(\theta_{0})^{-1/2}+a_{n,\varepsilon}^{-1}r_{A}(t,\epsilon_{2})\Big\}\right.\nonumber \\
 & \cdot\Big[\left.A(\theta_{0})^{1/2}W_{{\rm obs}}+a_{n,diag}\varepsilon_{n}v-a_{n,diag}a_{n,\varepsilon}^{-1}\{Ds(\theta_{0})+a_{n,\varepsilon}^{-1}r_{s}(t,\epsilon_{1})\}t\Big];0,I_{d}\right)K(v),\label{eq:norm_expand2}
\end{align}

\noindent where $r_{s}(t,\epsilon_{1})$ is the $d\times p$ matrix whose $i$th
row is $t^{T}Hs_{i}\{\theta_{0}+\epsilon_{1}(t)\}$, $r_{A}(t,\epsilon_{2})$
is the $d\times d$ matrix $\sum_{k=1}^{p}\frac{d}{d\theta_{k}}A\{\theta_{0}+\epsilon_{2}(t)\}^{-1/2}t_{k}$,
and $\epsilon_{1}(t)$ and $\epsilon_{2}(t)$ are from the remainder
terms of the Taylor expansions and satisfy $\|\epsilon_{1}(t)\|\leq\delta$
and $\|\epsilon_{2}(t)\|\leq\delta$. If the further expansion in
the proof of \citet[Lemma 5]{li2018asymptotic} was conducted, it is would give a
formal leading term proportional to $g_{n}(t,v)$. However, it is
not straightforward to integrate $g_{n}(t,v)$ because some variance
terms of the normal density in $g_{n}(t,v)$ diverge to $\infty$
and some converge to $0$. The trick is to discuss the further expansion
for several cases, depending on the rate of $\varepsilon_{n}$. For
different rate of $\varepsilon_{n}$, $g_{n}(t,v)$ is subjected to
different variable transformation, then the target integral can be
written into the form of integrals in Lemma \ref{lem:normal_expect_bounded}
of its different cases, and the results of orders therein can be applied. 

Firstly, consider $\varepsilon_{n}=O(a_{\tau n}^{-1})$, then $a_{n,\varepsilon}=a_{k_{\varepsilon}n}$
with $1\leq k_{\varepsilon}\leq\tau$, $d^{(2)}+d^{(3)}>0$ and $a_{k_{\varepsilon}n}\varepsilon_{n}\rightarrow0$
or $1$. Recall the transformation $v'$ defined in the 'Notations'
subsection and the following relationships between $v$ and $v'$
and between $s$ and $s'$, $a_{n,diag}v=Q_{v2}^{-1}a'_{n,diag}v'$
and $a_{n,diag}s=Q_{v2}^{-1}a'_{n,diag}s'$. Let 
\[
g_{n}'(t,v'):=N\{Ds'(\theta_{0})t;a_{n,\varepsilon}\varepsilon_{n}v'+A_{\theta_{0},1/2}^{'}W_{\text{obs}},A_{\theta_{0},1/2}^{'}A_{\theta_{0},1/2}^{'T}\}K\{v(v')\},
\]
\noindent where $A_{\theta_{0},1/2}^{'}:=a_{k_{\varepsilon}n}a_{n,diag}^{'-1}Q_{v2}A(\theta_{0})^{1/2}$,
and $g_{n}'(t,v';C_{1},C_{2})$ be $g_{n}'(t,v')$ with $A(\theta_{0})^{-1/2}$
in the normal variance and $Ds'(\theta_{0})$ replaced by $A(\theta_{0})^{-1/2}+C_{2}$
and $Ds'(\theta_{0})+C_{1}$ respectively. It is easy to show that
$g(t,v)dv=g_{n}'(t,v')dv'$. The benefit of using the transformation
$v'$ is that, in the variance matrix of the normal density in $g_{n}'(t,v')$,
$a_{k_{\varepsilon}n}a_{n,diag}^{'-1}=diag(a_{n,diag}^{(1)^{-1}}a_{k_{\varepsilon}n},I_{d^{(2)}+d^{(3)}})$
the diagonal terms of which converge to either $0$ or $1$. Under
the transformation, (\ref{eq:normal_expect2}) is rewritten as 
\begin{align*}
 & \ftil_{n}(s_{{\rm obs}}+\varepsilon_{n}v\mid\theta_{0}+a_{n,\varepsilon}^{-1}t)K(v)\\
 & =\frac{|a_{n,diag}|}{|A(\theta_{0}+a_{n,\varepsilon}^{-1}t)|^{1/2}}N\left(\Big\{ A(\theta_{0})^{-1/2}+a_{n,\varepsilon}^{-1}r_{A}(t,\epsilon_{2})\Big\}\right.\\
 & \left.\cdot\Big[A(\theta_{0})^{1/2}W_{{\rm obs}}+a_{n,\varepsilon}^{-1}Q_{v2}^{-1}a'_{n,diag}\{a_{n,\varepsilon}\varepsilon_{n}v'-(Ds'(\theta_{0})+a_{n,\varepsilon}^{-1}r_{s'}(t,\epsilon_{1}))t\}\Big];0,I_{d}\right)K(v).
\end{align*}
By another Taylor expansion, the above can be further expanded into
the following 
\begin{align}
 & \ftil_{n}(s_{{\rm obs}}+\varepsilon_{n}v\mid\theta_{0}+a_{n,\varepsilon}^{-1}t)K(v)\,dv\nonumber \\
= & a_{n,\varepsilon}^{d}\frac{|A(\theta_{0})|^{1/2}}{|A(\theta_{0}+a_{n,\varepsilon}^{-1}t)|^{1/2}}[g_{n}'(t,v')+a_{n,\varepsilon}^{-1}P_{3}(t,v')g_{n}'\{t,v';e_{n1}r_{s}(t,\epsilon_{1}),e_{n1}r_{A}(t,\epsilon_{1})\}]\,dv',\label{eq:norm_expand3}
\end{align}
where $P_{3}(t,v')$ is the function 
\begin{align*}
 & \frac{1}{2|A(\theta_{0})^{-1/2}+e_{n1}r_{A}(t,\epsilon_{2})|}\\
 & \times\left.\frac{d}{dx}\left\Vert \left\{ A(\theta_{0})^{-1/2}+xr_{A}(t,\epsilon_{2})\right\} \left[A(\theta_{0})^{1/2}W_{{\rm \text{obs}}}+a_{n,\varepsilon}^{-1}Q_{v2}^{-1}a'_{n,diag}\{a_{n,\varepsilon}\varepsilon_{n}v'-(Ds'(\theta_{0})+xr_{s}(t,\epsilon_{1}))t\}\right]\right\Vert ^{2}\right|_{x=e_{n1}},
\end{align*}
and $e_{n1}$ is from the remainder term of Taylor expansion and satisfies $|e_{n1}|\leq a_{n}^{-1}$ . Since $\|e_{n1}t\|\leq\delta$ and $r_{s}(t,\epsilon_{1})$ and $r_{A}(t,\epsilon_{2})$ belong $\mathbb{P}_{1,t}$, this $P_{3}(t,v)$ belongs to $\mathbb{P}_{3,(t,v)}$. Furthermore, since $r_{s}(t,\epsilon_{1})$ and $r_{A}(t,\epsilon_{2})$ have no constant term, for any small $\sigma$, $e_{n1}r_{s}(t,\epsilon_{1})$ and $e_{n1}r_{A}(t,\epsilon_{2})$ can be bounded by $\sigma I_{d}$ and $\sigma I_{p}$ uniformly in $n$ and $t$, if $\delta$ is small enough. 

Secondly, consider $\varepsilon_{n}=\omega(a_{\tau n}^{-1})$, then
$a_{n,\varepsilon}=\varepsilon_{n}^{-1}$ and $a_{n,diag}\varepsilon_{n}\rightarrow\infty$.
Let $v^{*}=v^{*}(v):=A(\theta_{0})^{1/2}W_{\text{obs}}+a_{n,diag}\varepsilon_{n}v-a_{n,diag}\varepsilon_{n}Ds(\theta_{0}+\epsilon_{1}(t))t$,
where $\epsilon_{1}(t)$ is the same as that defined in the case of
$\varepsilon_{n}=O(a_{\tau n}^{-1})$. Under the transformation, (\ref{eq:normal_expect2})
is rewritten as
\begin{align*}
 & \ftil_{n}(s_{{\rm obs}}+\varepsilon_{n}v\mid\theta_{0}+a_{n,\varepsilon}^{-1}t)K(v)\\
 & =\frac{|a_{n,diag}|}{|A(\theta_{0}+\varepsilon_{n}t)|^{1/2}}N\left[\Big\{ A(\theta_{0})^{-1/2}+\varepsilon_{n}r_{A}(t,\epsilon_{2})\Big\}\left\{ v^{*}-a_{n,diag}\varepsilon_{n}^{2}r_{s}(t,\epsilon_{1})t\right\} ;0,I_{d}\right]K(v).
\end{align*}
Let $g_{n}^{*}(t,v^{*};\tau_{1})$ be the function 
\[
g_{n}^{*}(t,v^{*};\tau_{1})=N[v^{*};0,\{A(\theta_{0})^{-1/2}+\tau_{1}\}^{-2}]K\{Ds(\theta_{0}+\epsilon_{1}(t))t+a_{n,diag}^{-1}\varepsilon_{n}^{-1}v^{*}-a_{n,diag}^{-1}\varepsilon_{n}^{-1}A(\theta_{0})^{-1/2}W_{\text{obs}}\},
\]
and $g_{n}^{*}(t,v^{*})=g_{n}^{*}(t,v^{*};0)$. It is easy to show
that $g_{n}(t,v)dv=g_{n}^{*}(t,v^{*})dv^{*}$. 
Denote a $k_{1}\times k_{2}$ matrix with element being $P_{l}(t)$
by $P_{l}^{(k_{1}\times k_{2})}(t)$. Then by applying a Taylor expansion
to the normal density in the expansion above, 
\begin{align}
 & \ftil_{n}(s_{{\rm obs}}+\varepsilon_{n}v\mid\theta_{0}+\varepsilon_{n}t)K(v)dv\nonumber \\
= & \frac{\varepsilon_{n}^{-d}|A(\theta_{0})|^{1/2}}{|A(\theta_{0}+\varepsilon_{n}t)|^{1/2}}\Big[g_{n}^{*}(t,v^{*})+\varepsilon_{n}v^{*T}P_{1}^{(d\times d)}(t)v^{*}g_{n}^{*}(t,v^{*})+\varepsilon_{n}^{2}P_{4}(t,v^{*})g_{n}^{*}\{t,v^{*};e_{n2}r_{A}(t,\epsilon_{4})\}\Big]dv^{*},\label{eq:norm_expand4}
\end{align}
where $P_{1}^{(d\times d)}(t)$ is the function $-A(\theta_{0})^{-1/2}r_{A}(t,\epsilon_{4})$,
$e_{n2}$ is from the remainder term of the Taylor expansion and satisfies
$|e_{n2}|\leq\varepsilon_{n}$, and $P_{4}(t,v^{*})$ is a linear
combination of $\{d\rho(w)/dw\}^{2}$ and $d^{2}\rho(w)/dw^{2}$ at
$w=e_{n2}$ with $\rho(w)$ being the function 
\[
\Big\|\Big\{ A(\theta_{0})^{-1/2}+wr_{A}(t,\epsilon_{4})\Big\} v^{*}\Big\|^{2}.
\]
Obviously elements of $P_{2}^{(d\times1)}(t)$ and $P_{1}^{(d\times d)}(t)$
belong to $\mathbb{P}_{2,t}$ and $\mathbb{P}_{1,t}$ respectively.
Since $\|e_{n2}t\|\leq\delta$, the function $P_{4}(t,v^{*})$ belongs
to $\mathbb{P}_{4,(t,v^{*})}$ and, similar to before, $e_{n2}r_{A}(t,\epsilon_{4})$
can be bounded by $\sigma I_{d}$ and $\sigma I_{p}$ uniformly in
$n$ and $t$ for any small $\sigma$, if $\delta$ is small enough.
Besides being an extension, the expansion (\ref{eq:norm_expand4})
above is different from \citet[equation 8]{li2018asymptotic} in that it gives sharper
remainder terms. Specifically, in $v^{*}$ above, $Ds(\theta_{0}+\epsilon_{1}(t))$
is used so orders of the remainder terms above are $O_{n}(\varepsilon_{n})+O_{n}(\varepsilon_{n}^{2})$,
which in contrast, $v^{*}$ in \citet[equation 8]{li2018asymptotic} uses $Ds(\theta_{0})$,
which leads to the slower remainder terms $O_{n}(a_{n}\varepsilon_{n}^{2})+O_{n}(a_{n}^{2}\varepsilon_{n}^{4})$. 

For $\pi(\theta_{0}+a_{n,\varepsilon}^{-1}\tau)$ in the integral
of $\widetilde{\pi}_{B_{\delta}}(1)$, its expansion is the same as
that in the proof of \citet[Lemma 5]{li2018asymptotic}. Combining (\ref{eq:norm_expand3}),
(\ref{eq:norm_expand4}) and the expansion of $\pi(\theta_{0}+a_{n,\varepsilon}^{-1}\tau)$,
it can be seen that the leading term of $\widetilde{\pi}_{B_{\delta}}(1)$
is $a_{n,\varepsilon}^{d-p}\pi(\theta_{0})\int_{t(B_{\delta})\times\mathbb{R}^{d}}g_{n}(t,v)\,dtdv$.
The remainder terms are given in the following,
\begin{align}
 & a_{n,\varepsilon}^{p-d}\tpi_{B_{\delta}}(1)-\pi(\theta_{0})\int_{t(B_{\delta})\times\mathbb{R}^{d}}g_{n}(t,v)\,dtdv\nonumber \\
= & a_{n,\varepsilon}^{-1}\int_{t(B_{\delta})\times\mathbb{R}^{d}}|A(\theta_{0})|^{1/2}D\frac{\pi(\theta_{0}+\epsilon_{5})}{|A(\theta_{0}+\epsilon_{5})|^{1/2}}tg_{n}(t,v)\,dtdv\nonumber \\
 & +a_{n,\varepsilon}^{-1}\int_{t(B_{\delta})\times\mathbb{R}^{d}}P_{3}(t,v')g_{n}'\{t,v';e_{n1}r_{s}(t,\epsilon_{1}),e_{n1}r_{A}(t,\epsilon_{2})\}\,dv'dt\mathbbm{1}_{\{\varepsilon_{n}=O(a_{\tau n}^{-1})\}}\nonumber \\
 & +\varepsilon_{n}\int_{t(B_{\delta})\times\mathbb{R}^{d}}v^{*T}P_{1}^{(d\times d)}(t)v^{*}g_{n}^{*}(t,v^{*})\,dv^{*}dt\mathbbm{1}_{\{\varepsilon_{n}=\omega(a_{\tau n}^{-1})\}}\nonumber \\
 & +\varepsilon_{n}^{2}\int_{t(B_{\delta})\times\mathbb{R}^{d}}P_{4}(t,v^{*})g_{n}^{*}\{t,v^{*};e_{n2}r_{A}(t,\epsilon_{4})\}\,dv^{*}dt\mathbbm{1}_{\{\varepsilon_{n}=\omega(a_{\tau n}^{-1})\}},\label{eq:norm_expand5}
\end{align}
where $P_{3}(t,v')$, $P_{1}^{(d\times d)}(t)$ and $P_{4}(t,v^{*})$
are products of $\pi(\theta_{0}+a_{n,\varepsilon}^{-1}t)/|A(\theta_{0}+a_{n,\varepsilon}^{-1}t)|^{1/2}$
and corresponding terms in expansions (\ref{eq:norm_expand3}), (\ref{eq:norm_expand4}).
The four remainder terms above are extensions of the remainder terms
in \citet[equation 10]{li2018asymptotic}, and similar to the arguments therein, by
applying Lemma \ref{lem:normal_expect_bounded} orders of the integrals
in the remainder terms above can be obtained, which is done below. 

In (\ref{eq:norm_expand5}), recall that $g_{n}'(t,v')dv'=g_{n}^{*}(t,v^{*})dv^{*}=g_{n}(t,v)dv$.
Both $g_{n}'(t,v')$ and $g_{n}^{*}(t,v^{*})$ can be written in the
form of the integrands in Lemma \ref{lem:normal_expect_bounded} as
following: When $\varepsilon_{n}=O(a_{\tau n}^{-1}),$
\begin{align}
 & |A(\theta_{0})|^{-1/2}\frac{a_{n,\varepsilon}^{d}}{|a_{n,diag}|}g_{n}'(t,v')dv'\nonumber \\
 & =|A(\theta_{0})|^{-1/2}\frac{a_{n,\varepsilon}^{d}}{|a_{n,diag}|}g_{n}(t,v)dv\nonumber \\
 & =N\left[\underbrace{A(\theta_{0})^{-1/2}Q_{v2}^{-1}}_{C_{n}(t)}\underbrace{a_{n,\varepsilon}^{-1}a_{n,diag}'}_{K_{n}}\left\{ \underbrace{-Ds'(\theta_{0})}_{A_{n}(t)}t+\underbrace{a_{n,\varepsilon}\varepsilon_{n}}_{b_{n}}\underbrace{Q_{v1}}_{Q_{n}}v\right\} +\underbrace{W_{\text{obs}}}_{c_{n}};0,I_{d}\right]K(v)dv,\label{eq:g_prime_integrand}
\end{align}
where for Lemma \ref{lem:normal_expect_bounded}, notations in underbraces,
$R=R_{33}^{-1}R_{32}$ and $B_{n}=a_{n,diag}^{(3)}a_{k_{\varepsilon},n}^{-1}$,
are used, and similar for $g_{n}'\{t,v';e_{n1}r_{s}(t,\epsilon_{1}),e_{n1}r_{A}(t,\epsilon_{2})\}$;
when $\varepsilon_{n}=\omega(a_{\tau n}^{-1})$,
\begin{align}
g_{n}^{*}(t,v^{*})dv^{*} & =g_{n}(t,v)dv\nonumber \\
 & =K\left\{ \underbrace{Ds(\theta_{0}+\epsilon_{1}(t)t)}_{A_{n}(t)}t+\underbrace{(a_{\tau n}\varepsilon_{n})^{-1}}_{b_{n}}\underbrace{a_{n,diag}^{-1}a_{\tau n}}_{Q_{n}}v^{*}-\underbrace{a_{n,diag}^{-1}\varepsilon_{n}^{-1}A(\theta_{0})^{-1/2}W_{\text{obs}}}_{c_{n}}\right\} N\{v^{*};0,A(\theta_{0})\}dv^{*}.\label{eq:g_star_integrand}
\end{align}
and similar for $g_{n}^{*}\{t,v^{*};e_{n2}r_{A}(t,\epsilon_{4})\}$.
Recall that $\|\epsilon_{1}(t)\|\leq\delta$, and $e_{n1}r_{s}(t,\epsilon_{1})$
and $e_{n1}r_{A}(t,\epsilon_{2})$ can be bounded by $\sigma I_{d}$
and $\sigma I_{p}$ uniformly in $n$ and $t$ for any $\sigma$ if
$\delta$ is small enough. It is easy to verify that condition (i)--(v)
of Lemma \ref{lem:normal_expect_bounded} are satisfied, where $g_{1}(\cdot)$
and $g_{2}(\cdot)$ are the standard normal density and $K(v)$ rescaled
to have identity covariance. Below the three cases of Lemma \ref{lem:normal_expect_bounded}
are applied to obtain the orders of the remainder terms for various
values of $\varepsilon_{n}$.

When $\varepsilon_{n}=o(a_{1n}^{-1})$, $a_{n,\varepsilon}=a_{1n}$,
$d^{(1)}=0$, $d^{(2)}=d_{s,1}$ and $k_{\varepsilon}=1$. For notations
of Lemma \ref{lem:normal_expect_bounded}, we have $K_{n}=I_{d}$,
$b_{n}=a_{1n}\varepsilon_{n}\rightarrow0$ and $B_{n}=a_{n,diag}^{(3)}a_{1n}^{-1}\rightarrow0$,
so it's case $\textcircled{\small1}$ and $\varphi_{n}=1$. 

When $\varepsilon_{n}=\Omega(a_{1n}^{-1})$ and $\varepsilon_{n}=O(a_{\tau n}^{-1})$,
$a_{n,\varepsilon}=a_{k_{\varepsilon}n}$ for $k_{\varepsilon}\in\{1,2,\dots,\tau\}$,
$\lim_{n}a_{n,\varepsilon}\varepsilon_{n}$ is either $0$ or a positie
constant. For notations of Lemma \ref{lem:normal_expect_bounded},
we have 
\begin{align*}
 & K_{n}=a_{n,\varepsilon}^{-1}a_{n,diag}'=\underbrace{I_{d^{(2)}+d^{(3)}}}_{d^{(1)}=0}\mathbbm{1}_{\{k_{\varepsilon}=1\}}+\underbrace{\left(\begin{array}{cc}
a_{n,diag}^{(1)}a_{k_{\varepsilon}n}^{-1} & 0\\
0 & I_{d^{(2)}+d^{(3)}}
\end{array}\right)}_{d^{(1)},d^{(2)},d^{(3)}>0}\mathbbm{1}_{\{1<k_{\varepsilon}<\tau\}}+\underbrace{\left(\begin{array}{cc}
a_{n,diag}^{(1)}a_{\tau n}^{-1} & 0\\
0 & I_{d^{(2)}}
\end{array}\right)}_{d^{(1)},d^{(2)}>0,d^{(3)}=0}\mathbbm{1}_{\{k_{\varepsilon}=\tau\}},\\
 & b_{n}=a_{k_{\varepsilon}n}\varepsilon_{n}\rightarrow0\mathbbm{1}_{\{\varepsilon_{n}=o(a_{k_{\varepsilon}n})\}}+c\mathbbm{1}_{\{\varepsilon_{n}=\Theta(a_{k_{\varepsilon}n})\}}\text{ where }c\geq0,\\
 & B_{n}=a_{n,diag}^{(3)}a_{k_{\varepsilon}n}^{-1}\rightarrow0\text{ when }k_{\varepsilon}<\tau,\text{ and }Q_{n}=I_{d}\text{ when }k_{\varepsilon}=\tau.
\end{align*}

When $k_{\varepsilon}=1$, $d^{(1)}=0$, so it is case $\textcircled{\small1}$
and $\varphi_{n}=1$. 

When $1<k_{\varepsilon}\leq\tau$, it is case $\textcircled{\small2}$.
By the expression of $K_{n}$, $\varphi_{n}=b_{n}^{p-d^{(1)}}|K_{n}^{(1)}|^{-1}=(a_{k_{\varepsilon}n}\varepsilon_{n})^{p}/(|a_{n,diag}^{(1)}|\varepsilon_{n}^{d^{(1)}})$.

When $\varepsilon_{n}=\omega(a_{\tau n}^{-1})$, $a_{n,\varepsilon}=\varepsilon_{n}^{-1}$.
From (\ref{eq:g_star_integrand}), we can obtain values of notations
of Lemma \ref{lem:normal_expect_bounded} as follows: $b_{n}=(a_{\tau n}\varepsilon_{n})^{-1}\rightarrow0$;
$K_{n}=I_{d}$ from which $d^{(1)}=0$, $d^{(2)}>0$ and $d^{(3)}>0$;
by a change of variable indices, $Q_{n}=a_{n,diag}^{-1}a_{\tau n}$
can have the form in Lemma \ref{lem:normal_expect_bounded} with $B_{n}\rightarrow0$
and $R=0$. So it is case $\textcircled{\small1}$ and $\varphi_{n}=1$. 

Combining results above, 
\begin{align*}
\int_{\mathbb{R}^{p}\times\mathbb{R}^{d}}g_{n}(t,v)\,dtdv= & \begin{cases}
|A(\theta_{0})|^{1/2}|a_{n,diag}|a_{n,\varepsilon}^{-d}\Theta_{p}(1)=\Theta_{p}(c_{n,\varepsilon}), & \text{when }\varepsilon_{n}=O(a_{1n}^{-1}),\\
|A(\theta_{0})|^{-1/2}\frac{a_{n,\varepsilon}^{d}}{|a_{n,diag}|}\Theta_{p}\left\{ \frac{(a_{k_{\varepsilon}n}\varepsilon_{n})^{p}}{|a_{n,diag}^{(1)}|\varepsilon_{n}^{d^{(1)}}}\right\} =\Theta_{p}(c_{n,\varepsilon}), & \text{when }\varepsilon_{n}=\omega(a_{1n}^{-1})\text{ and }O(a_{\tau n}^{-1}).\\
\Theta_{p}(1), & \text{when }\varepsilon_{n}=\omega(a_{\tau n}^{-1}),
\end{cases}
\end{align*}
Therefore (ii) in the lemma holds. For any constant $M>0$, by upper-bounding
$c_{n,\varepsilon}^{-1}\int_{\{\|t\|>M\}\times\mathbb{R}^{d}}g_{n}(t,v)\,dtdv$
in the same way as (\ref{eq:normal_expect_bound}) and decompose the
transformed integration area in the right hand side into suitable
set, it is straightforward to see that the right hand side is $o_{p}(1)$
as $M\rightarrow\infty$. Therefore (iii) in the lemma hold. Since
in (\ref{eq:norm_expand5}), the first two remainders are $O_{p}(a_{n,\varepsilon}c_{n,\varepsilon})$
and the last two reminders are $O_{p}(\varepsilon_{n})$ and $O_{p}(\varepsilon_{n}^{2})$,
(i) in the lemma holds. The other equations hold by arguments similar
to the corresponding part in the proof of \citet[Lemma 5]{li2018asymptotic}. 

\end{proof}

\printbibliography[heading=subbibintoc]

\end{document}